\newtheorem{theorem}{Theorem}[section]
\newtheorem{proposition}[theorem]{Proposition}
\newtheorem{definition}[theorem]{Definition}
\theoremstyle{remark}
\newtheorem{remark}[theorem]{Remark}
\def\R{{\mathbb R}}
\def\M{{\mathbb M}}
\def\cof{\mathop{\textnormal{cof}}}
\def\tr{\mathop{\textnormal{tr}}}
\def\div{\mathop{\textnormal{div}}}
\def\Rho{\mathrm{P}}
\def\SO{\mathrm{SO}}
\def\SL{\mathrm{SL}}
\def\Sym{\mathrm{Sym}}
\def\Skew{\mathrm{Skew}}
\def\fluide{\mathrm{fl}}
\title{A few remarks on thermomechanics\footnote{2020 MSC numbers: Primary: 80A17; Secondary: 74A15, 74F05, 76A05, 76A10.\\
\indent Keywords: Thermomechanics; second principle of thermodynamics; thermo-visco-elastic materials with internal variables; dissipation potentials; frame-indifference; material symmetries; nonlinear Maxwell and Kelvin-Voigt models; Oldroyd B fluids.}}
\author{Herv\'e Le Dret$^1$ and Annie Raoult$^2$}{}
\begin{document}
\maketitle


\medskip

{\footnotesize
 \centerline{$^1$Sorbonne Universit\'e, Universit\'e Paris Cit\'e, CNRS, Laboratoire Jacques-Louis Lions, F-75005 Paris, France}
} 

\medskip

{\footnotesize
 \centerline{$^2$Universit\'e Paris Cit\'e, CNRS, MAP5, F-75006 Paris, France}
}

\bigskip


\begin{abstract}
We develop a global setting for modeling thermo-visco-elastic materials that satisfy the principles of thermodynamics and are properly invariant. This setting encompasses many known solid and fluid models, as well as new models with internal variables that generalize the Maxwell rheological model. Complex fluid models such as the Oldroyd B model are shown to belong to the above general family of models. The specific Oldroyd B model is however found to be seriously lacking in terms of satisfying the second principle of thermodynamics. On the contrary, a second  complex fluid model based on the Zaremba-Jaumann derivative is shown to satisfy the second principle of thermodynamics.
\end{abstract}


\section{Introduction}
Following a long tradition that started with \cite{Truesdell-Noll} and \cite{Coleman-Mizel} and that is still active to this day, we present in this article a unifying and comprehensive framework for thermo-visco-elastic material models with internal variables that first and foremost obey the principles of thermodynamics. We put special emphasis on obtaining three-dimensional, nonlinear and properly invariant models that cover a wide spectrum of material behavior.

After a brief review of notation, basic thermodynamical concepts and proper invariance principles, we introduce in Section \ref{le truc principal} an extended set of thermodynamic variables that includes the velocity gradient and two kinds of internal variables. The internal variables denoted by $\Xi$ are governed by an ordinary differential equation and play a crucial role in questions related to the second principle and dissipation, whereas the internal variables denoted by $\Pi$ are featured for purposes of generality, even though their role is less crucial in terms of thermodynamics. This extended set of thermodynamic variables does not seem to appear all at once in the literature, as far as we are aware.

We then perform the classical Coleman-Noll procedure in complete detail, based on assumed constitutive laws in the above thermodynamic variables for the first Piol\`a-Kirchhoff stress, the heat flux, the entropy and the Helmholtz free energy, complemented by a flow rule for the internal variables $\Xi$. We thus obtain a set of conditions on these constitutive laws that are  necessary and sufficient for the second principle of thermodynamics to be satisfied by this class of materials.  These conditions extend well known conditions such as the relationship between the entropy and the free energy, or the decomposition of the stress into a thermoelastic part and a dissipative part, see Proposition \ref{restrictions thermo avec variables internes}. We show in particular that the free energy does not depend on the velocity and temperature gradients, nor on the internal variables $\Pi$. We then discuss conditions under which the Clausius-Duhem inequality is equivalent to the Clausius-Planck inequalities for the materials under consideration.

We introduce next dissipation potentials in all thermodynamic variables that yield pairs of constitutive laws for the dissipative part of the stress and for the flow rule that satisfy the mechanical part of the Clausius-Planck inequalities by design. 

We then turn in Section \ref{sursection AIM visqueux} to frame-indifference and material symmetries issues, when there are no internal variables. We characterize all constitutive laws for the stress tensor expressed in terms of the deformation and velocity gradients (plus temperature) that are frame-indifferent. This characterization does not seem to be widely known, see Proposition \ref{AIM lagrangien visqueux}. We also focus on frame-indifference for dissipation potentials and its consequences. 

 Next, we analyze frame-indifference  for the heat flux and thermal symmetries. In particular, in Proposition \ref{chaleur isotrope}, we give a complete description of isotropic heat fluxes that is more comprehensive than the results found in the literature, which generally assume unnecessarily stringent invariance conditions. We also describe all fluid heat fluxes. 

In the remainder of the article, Section \ref{des exemples}, we consider several very different examples. These obviously include elastic and thermo-elastic solids and fluids as well as kinematically viscous solids and fluids (Reiner-Rivlin fluids, compressible Newtonian fluids). We also introduce in Section \ref{modeles maxwell 3d} a new family of visco-elastic materials with internal variables that are generalizations of the classical Maxwell rheological model, which we call nonlinear 3d Maxwell models. These models are based on a free energy of the form $\widehat A_m(F,F_i)= \widehat W(FF_i^{-1})$, where $F$ is the deformation gradient, $F_i$ is an internal variable acting as an internal viscous strain and $\widehat W$ is any frame-indifferent nonlinearly elastic stored energy function. We pair this free energy with appropriate frame-indifferent dissipation potentials to produce flow rules that are compatible with the mechanical part of the Clausius-Planck inequalities. The issue of frame-indifference is nonstandard here, because of the presence of the internal variable, and requires independent developments. The same goes for material symmetries. In particular, we give specific examples of isotropic solid nonlinear 3d Maxwell materials as well as fluid nonlinear 3d Maxwell materials, which are all frame-indifferent and satisfy the second principle of thermodynamics. We finally indicate that our nonlinear 3d Maxwell materials can exhibit stress relaxation.

In a similar spirit, we also introduce nonlinear 3d frame-indifferent generalizations of the Kelvin-Voigt and generalized Maxwell rheological models. These new models are capable of exhibiting creep and both stress relaxation and creep respectively. 

Finally, in Section \ref{fluides complexes}, we propose a critical look at Oldroyd B complex fluids in the light of thermodynamical requirements. Indeed, we provide numerical evidence that the naive formulation of internal dissipation, $\sigma:d$, can take strictly negative values, which thus violate the Clausius-Planck inequalities.  The same holds true for a similar complex fluid model obtained by replacing the Oldroyd B derivative by the Zaremba-Jaumann derivative. We then proceed to show that both Oldroyd B and Zaremba-Jaumann fluid models, and similar models obtained from more general objective derivatives, can be recast as viscous fluid models with an internal variable---the polymer part of the stress—and are thus actually part of our general scheme developed above. We show that, in the Oldroyd B case, it is impossible to find a free energy that makes the corresponding dissipation nonnegative, so that the Clausius-Planck inequalities are also always violated by an Oldroyd B fluid viewed as a viscous fluid with an internal variable. On the contrary, the Zaremba-Jaumann fluid admits a very simple free energy for which the mechanical part of Clausius Planck inequality is satisfied, thus bringing to light a striking thermodynamic difference between these two superficially similar complex fluid models and casting doubts on the thermodynamic viability of the Oldroyd B model.

\section{Thermomechanical background}This section is mainly to fix notation, everything being otherwise well known. A few  references for this material are \cite{Gurtinandco} and \cite{Truesdell-Noll}.
We use the convenient typographical convention of denoting any quantity pertaining to the Lagrangian description with an uppercase letter and the corresponding Eulerian quantity with the corresponding lowercase letter, inasmuch as possible. This rule however suffers a few exceptions.
\subsection{Kinematics}
The Lagrangian description of a material body consists in considering an arbitrary reference configuration $\Omega$, which is an open subset of $\R^3$, the points of which $X$ are used to label the  material particles composing the body. Their motion is described by a deformation mapping  $(X,t)\mapsto \phi(X,t)\in E$, where $E$ is the physical Euclidean three-dimensional space. The deformed configuration at time $t$ is $\phi(\Omega,t)$. We will assume throughout that $\phi$ is sufficiently regular, invertible and orientation preserving for fixed $t$. We denote by $V(X,t)=\frac{\partial\phi}{\partial t}(X,t)$ the velocity of the particles, and by  $\Gamma(X,t)=\frac{\partial^2\phi}{\partial t^2}(X,t)$ their acceleration. The deformation gradient is denoted $F(X,t)=\nabla_X\phi(X,t)$ with Jacobian $J(X,t)=\det F(X,t)$,  and the deformation rate is $H(X,t)=\nabla_XV(X,t)=\frac{\partial F}{\partial t}(X,t)$. 

In the Eulerian description, we are interested in what actually happens in physical space-time at given points $(x,t)\in E\times\R$. The connection with the Lagrangian description happens when $(x,t)=(\phi(X,t),t)$. In this case, the velocity of particules is $v(x,t)=V(\phi^{-1}(x,t),t)$ and their acceleration
$$\gamma(x,t)=\Gamma(\phi^{-1}(x,t),t)=\frac{\partial v}{\partial t}(x,t)+\bigl(\nabla_xv(x,t)\bigr)v(x,t) .$$
This is a particular case of the more general material derivative for any differentiable function defined on space-time with values in a normed vector space $\tau\colon E\times\R\to F$, which is simply 
$$\dot\tau(x,t)=d\tau(x,t)(v(x,t),1),$$
(assuming $v$ to be defined everywhere for simplicity) so that  
$$\dot\tau=\frac{\partial}{\partial t}(\tau\circ\Phi)\text{ with }\Phi(X,t)=(\phi(X,t),t).$$
Indeed, $\gamma=\dot v$.
We also use the notation $h(x,t)=\nabla_xv(x,t)=\nabla_XV(X,t)F^{-1}(X,t)$, with the understanding that $(x,t)=\Phi(X,t)$, for the Eulerian velocity gradient, whose symmetric part $d(x,t)=\frac12\bigl(h(x,t)^T+h(x,t)\bigr)$ is the stretching tensor and antisymmetric part $w(x,t)=\frac12\bigl(h(x,t)^T-h(x,t)\bigr)$ is the spin tensor.

\subsection{Dynamics}
Any subbody $A\subset \Omega$ contains a certain amount of mass $\mathcal{M}(A)=\int_A\Rho(X)\,dX$, where the letter $\Rho$ is a capital rho, denoting a given function that represents the mass density in the Lagrangian description. It is assumed that this function does not depend on $t$, so that there is no mass transfer during deformations and mass is conserved, $\mathcal{M}(A)=\int_{\phi(A,t)}\rho(x,t)\,dx$, where $\rho$ is the actual Eulerian mass density, $\rho(\phi(X,t),t)=\frac{\Rho(X)}{J(X,t)}$. This  is equivalent to the classical mass conservation law
$$
\dot\rho+\rho \,{\div\nolimits_x}v=0\text{ in }\phi(\Omega,t).
$$

The fundamental law of dynamics or conservation of momentum applied to any subbody and the Cauchy axiom imply the existence of the Cauchy stress tensor $\sigma$ in the Eulerian description, which satisfies the dynamics equation
\begin{align*}
&\rho\dot v-{\div\nolimits_x}\sigma=b,\\
&\sigma^T=\sigma,
\end{align*}
in $\phi(\Omega,t)$, where $b$ is the applied body force density. There may or may not be additional traction conditions on part of the boundary of $\phi(\Omega,t)$. The Eulerian formulation is somewhat unwieldy since $\phi(\Omega,t)$ is unknown in general (except say for fluids filling a container). 

In this respect things look more controllable in the Lagrangian description. First of all, there is no mass conservation law. The first Piol\`a-Kirchhoff stress tensor is classically introduced as
$$
T_{\mathrm{R}}(X,t)=\sigma(\phi(X,t),t)\cof F(X,t),
$$
where $\cof F$ denotes the cofactor matrix of $F$. After pulling back the applied body force density to the reference configuration by $B(X,t)=J(X,t)b(\phi(X,t),t)$, the dynamics equation assume the Lagrangian form
\begin{align}
&\Rho\Gamma-{\div\nolimits_X}T_{\mathrm{R}}=B,\label{cons mom lag}\\
&FT_{\mathrm{R}}^T=T_{\mathrm{R}}F^T,\label{sym PK}
\end{align}
in $\Omega$.

\subsection{The first principle}
From now on, heat must be taken into account. It is assumed that there is a supplied volumic thermal power density $r$ as well as thermal power flowing through surfaces via a heat flux vector $q$ (whose existence can be deduced from an adapted version of Cauchy's axiom)  that contribute to the thermal power affecting any subbody according to 
$$\mathcal{Q}_A(t)=\int_{\phi(A,t)}r(x,t)\,dx-\int_{\partial\phi(A,t)}q(x,t)\cdot n_{\phi(A,t)}\,d\sigma$$
where $d\sigma$  denotes the surfacic measure and $n_{\phi(A,t)}$ is the unit exterior normal vector to $\phi(A,t)$. One also has to consider the internal mechanical power
$$\mathcal{P}_{A,\mathrm{int}}(t)=\int_{\phi(A,t)} \sigma(x,t):\nabla_xv(x,t)\,dx=\int_{\phi(A,t)} \sigma(x,t):d(x,t)\,dx$$
where the colon denotes the usual Frobenius inner product of $3\times 3$ matrices ($\sigma$ is symmetric thus the spin tensor $w$ does not contribute to mechanical power).

The first principle can be  stated as the fact that the internal energy $\mathcal{E}_A(t)$ in any subbody $A$ varies in time according to the sum of these powers,
\begin{equation}\label{premier principe}
\mathcal{E}'_A(t)=\mathcal{P}_{A,\mathrm{int}}(t)+\mathcal{Q}_A(t).
\end{equation}

It is assumed that this energy functional has a specific density 
$$\mathcal{E}_A(t)=\int_{\phi(A,t)}\rho(x,t)e_m(x,t)\,dx,$$ 
then equation \eqref{premier principe} is equivalent to the so-called energy equation
$$
\rho\dot e_m=\sigma:d+r-{\div\nolimits_x}q.
$$
Also note that the energy is defined up to an additive constant, only energy differences are physically significant.

To express the same relations in the Lagrangian description, we first need to perform a Piolà transform on the heat flux
$$
Q(X,t)=(\cof F(X,t))^Tq(\phi(X,t),t)
$$
and introduce the reference specific energy density $E_m$, such that 
$$\mathcal{E}_A(t)=\int_A\Rho(X) E_m(X,t)\,dX.$$
The energy equation then becomes 
\begin{equation}\label{eq d'energie ref}
\Rho\frac{\partial E_m}{\partial t}=T_{\mathrm{R}}:\nabla_XV+R-{\div\nolimits_X}Q,
\end{equation}
where $R$ denotes the Lagrangian thermal power source.

\subsection{The second principle}
The dynamics and energy equations actually are equations that will eventually govern the evolution, although they are still incomplete at this stage. The second principle is of a different nature since it precludes some of these evolutions. 

In the Eulerian description, it is assumed that there exists an absolute temperature field $\theta>0$ and that each subbody contains a certain amount of a quantity called entropy 
$\mathcal{S}_A(t)$. The second principle stipulates that the following inequality must always hold
\begin{equation}\label{Clausius-Duhem}
\mathcal{S}'_A(t)-\int_{\phi(A,t)}\frac{r(x,t)}{\theta(x,t)}\,dx+\int_{\partial\phi(A,t)}\frac{q(x,t)\cdot n_{\phi(A,t)}}{\theta(x,t)}\,d\sigma\ge 0.
\end{equation}
This inequality is known as the Clausius-Duhem inequality. Note that if the inequality is strict for a certain evolution in time, then the corresponding time-reversed evolution with time-reversed entropy and opposite time-reversed heat sources and heat fluxes cannot satisfy it. 
Such an evolution is called  irreversible.

It is also assumed that the entropy  has a specific density 
$$\mathcal{S}_A(t)=\int_{\phi(A,t)}\rho(x,t)s_m(x,t)\,dx,$$ 
then  \eqref{Clausius-Duhem} is equivalent to the  differential inequality
$$
\rho\dot s_m-\frac r\theta+\frac{{\div\nolimits_x}q}{\theta}-\frac{q\cdot\nabla_x\theta}{\theta^2}\ge0,
$$
also referred to as the Clausius-Duhem inequality. 

In the Lagrangian description, we have a specific entropy density $S_m$ such that $\mathcal{S}_A(t)=\int_{A}\Rho(X)S_m(X,t)\,dX$
and a temperature field $\Theta(X,t)=\theta(\phi(X,t),t)$ and the Clausius-Duhem inequality reads
$$
\Rho\frac{\partial S_m}{\partial t}-\frac R\Theta+\frac{{\div\nolimits_X}Q}{\Theta}-\frac{Q\cdot\nabla_X\Theta}{\Theta^2}\ge0.
$$

The Clausius-Duhem inequality is in particular satisfied when the following inequalities, known as the Clausius-Planck inequalities are satisfied
\begin{align*}
&\rho\theta\dot s_m-r+{\div\nolimits_x}q\ge0\text{ and }q\cdot\nabla_x\theta\le0,\\
&\Rho\Theta\frac{\partial S_m}{\partial t}-R+{\div\nolimits_X}Q\ge0\text{ and }Q\cdot\nabla_X\Theta\le0 .
\end{align*}
We will see later on situations in which the Clausius-Planck inequalities actually follow from the Clausius-Duhem inequality and situations where they do not.

There are many quantities called free energies in the literature. The one free energy that is the most adapted to our purposes is the Helmholtz free energy, which is defined by 
$$
a_m=e_m-\theta s_m,\qquad A_m=E_m-\Theta S_m,
$$
in both descriptions. The main advantage of the Helmholtz free energy is that it can be used to rewrite the Clausius-Duhem inequality in a way that does not involve any heat source terms but only internal quantities, by also making use of the energy equation. Namely,
\begin{align}
&-\rho(\dot a_m+\dot\theta s_m)+\sigma:d-\frac{q\cdot\nabla_x\theta}{\theta}\ge0,\label{2nd principe eulerien}\\
& - \Rho\Bigl(\frac{\partial A_m}{\partial t}+S_m\frac{\partial \Theta}{\partial t}\Bigr)+T_{\mathrm{R}}:\nabla_X V-\frac{Q\cdot\nabla_X\Theta}{\Theta}\ge0.\label{2nd principe lagrangien}
  \end{align} 
  Indeed, the conjunction of the above inequality with the energy equation is equivalent to the Clausius-Duhem inequality also with the energy equation, and it turns out to be a convenient expression of the second principle for constitutive purposes. 

  The sum of the first two terms in \eqref{2nd principe eulerien} and \eqref{2nd principe lagrangien} is called the internal dissipation, 
\begin{equation}\label{def dint}
d_{\mathrm{int}}=-\rho(\dot a_m+\dot\theta s_m)+\sigma:d, \quad D_{\text{int}}= - \Rho\Bigl(\frac{\partial A_m}{\partial t}+S_m\frac{\partial \Theta}{\partial t}\Bigr)+T_{\mathrm{R}}:\nabla_X V,
\end{equation} 
in Eulerian and Lagrangian descriptions. 
 Note that $\frac1{\det F}D_{\text{int}}=d_{\text{int}}$ at corresponding points since dissipation is not given as a specific density. The Clausius-Duhem inequality now reads
 \begin{equation}\label{encore une CD}
d_{\mathrm{int}}-\frac{q\cdot\nabla_x\theta}{\theta}\ge0,\quad
D_{\text{int}}-\frac{Q\cdot\nabla_X\Theta}{\Theta}\ge0,
  \end{equation} 
and the Clausius-Planck inequalities now read 
\begin{align}
&d_{\mathrm{int}} \geq 0, \quad q\cdot\nabla_x\theta\le0, \label{Clausius-Planck-eulerienbis}\\
&D_{\text{int}}\geq 0, \quad  Q\cdot\nabla_X\Theta\le0 \label{Clausius-Planck-lagrangienbis}.
  \end{align}

  \subsection{The principle of frame-indifference}\label{principe AIM}
 The principle of frame-indifference is related to the isometries of $E$, see \cite{Truesdell-Noll}, \cite{Gurtin} for a more in-depth discussion. It can be formulated as follows. Given any time-dependent translations
$t\mapsto a(t)\in E$ and rotations $t\mapsto R(t)\in\SO(3)$, we consider two evolutions of the same body for which the correspondence $(x^*,t)=(a(t)+R(t)x, t)$ actually holds between material points. Then the corresponding stresses and heat fluxes should satisfy
$$
\sigma^{*}(x^*, t)=R(t)\sigma(x,t)R(t)^T
$$
and 
$$
q^{*}(x^*,t)=R(t)q(x,t).
$$
In addition, scalar fields must be invariant under the same circumstances. For instance, with similar notation,
$$
a_m^{*}(x^*,t)=a_m(x,t).
$$

In the Lagrangian description, the above situation corresponds to considering two deformations $(X,t)\mapsto \phi(X,t)$ and $(X,t)\mapsto a(t)+R(t)\phi(X,t)$, \emph{i.e.}, the second deformation is a rigid motion superimposed on the first deformation,  with 
\begin{align}
T_{\mathrm{R}}^{*}(X,t)&=R(t)T_{\mathrm{R}}(X,t),\label{AIMTR}\\
Q^{*}(X,t)&=Q(X,t),\label{AIM Q lag}\\
A_m^{*}(X,t)&=A_m(X,t).\label{AIM A lag}
\end{align}

The principle of frame-indifference will be used later on to impose restrictions on acceptable constitutive laws of various kinds.

\section{Thermo-visco-elastic materials with internal variables}\label{le truc principal}
As mentioned before, the equations and inequalities so far apply to all materials and thus cannot be complete. What is needed to describe specific materials are constitutive laws that express the quantities above as functions of thermodynamic variables, thus yielding evolution PDE problems that may have a chance to have solutions, usually under additional hypotheses, and that may be used for numerical simulation purposes. Both aspects are outside of the scope of this article. We will not be overly concerned with regularity issues and assume that all constitutive laws are smooth enough for all computations to be correct.
\subsection{Constitutive assumptions}
The following considerations are mostly in the Lagrangian description, but can also be developed in the Eulerian description.
We are interested in materials the constitutive laws of which depend on an extended set of thermodynamic variables, namely $F$ standing for $\nabla_X\phi(X,t)$, $H$ standing for $\nabla_XV(X,t)$, $\Theta$ standing for $\Theta(X,t)$, and $G$ standing for $\nabla_X\Theta(X,t)$, with in addition two kinds of internal variables, $\Xi\in \R^k$ that eventually intervene in the free energy and $\Pi\in \R^m$ that do not. This is quite a general framework. Considering $H$ as a thermodynamic variable does not seem to be very common, even though it was advocated in \cite{Coleman-Mizel}, see also \cite{Truesdell-Noll}, \cite{Silhavy}, all without internal variables. Such materials can be considered to fall into the category of simple materials defined in \cite{Truesdell-Noll}, albeit not in a thermodynamical context. We refer to \cite{Halphen}, \cite{Holzapfel} or \cite{Lemaitre} for models with internal variables.

We thus assume that we are given functions 
\begin{align*}
&\widehat T_{\mathrm{R}}\colon \Omega\times  \M_3^+\times \M_3 \times\R_+^*\times\R^3\times\R^m\times\R^k\to \M_3,\\
 &\widehat Q\colon \Omega\times \M_3^+\times \M_3\times\R_+^*\times\R^3\times\R^m\times\R^k\to\R^3,\\
 &\widehat S_m\colon \Omega\times  \M_3^+\times \M_3 \times\R_+^*\times\R^3\times\R^m\times\R^k\to \R,\\
&\widehat A_m\colon\Omega\times 
 \M_3^+\times \M_3\times\R_+^*\times\R^3\times\R^m\times\R^k\to\R, 
 \end{align*}
 which serve as constitutive laws for  the first Piol\`a-Kirchhoff stress, the heat flux, the entropy,   and  the free energy respectively, in the sense that
 $$T_{\mathrm{R}}(X,t)=\widehat T_{\mathrm{R}}(X,F(X,t),H(X,t),\Theta(X,t),G(X,t),\Pi(X,t),\Xi(X,t)),$$
 and so on, writing $G(X,t)=\nabla_X\Theta(X,t)$ for brevity. All thermodynamic and internal variables are assumed to enter all constitutive laws in accordance to Truesdell's equipresence principle, \cite{Truesdell-Noll}. To shorten an already cumbersome notation, we will from now on drop the dependence on $X$, which is there to account for possible inhomogeneity and does not play much of a role in the sequel. 
 
 Internal variables are assumed to represent other physical processes that might be present. They are not necessarily observable. They can be scalar-valued, vector-valued or tensor-valued, we regroup all these possibilities within a generic $\R^m$ or $\R^k$.
 The two kinds of internal variables we consider are distinguished according to whether it is possible to assign their value and that of their time derivative independently ($\Pi$) or not $(\Xi)$. In practice, this means that we assume an ordinary differential equation for $\Xi$ of the form 
\begin{equation}\label{l'edo}
\frac{\partial\Xi}{\partial t}(X,t)=\widehat K(F(X,t),H(X,t),\Theta(X,t),G(X,t),\Pi(X,t),\Xi(X,t))
\end{equation}
where $\widehat K\colon \M^+_3\times \M_3\times\R_+^*\times\R^3\times \R^m\times\R^k\to\R^k$ is another given function.

We can easily consider internal variables $\Xi$ taking their values in some proper subset $U$ of $\R^k$, in which case equation \eqref{l'edo} must be required to leave $U$ invariant. 
This equation can also be generalized to a differential inclusion. We leave aside for the moment the question of which initial condition should be imposed on $\Xi$ and assume as a rule the Cauchy problem to be well-posed for any given $F, H, \Theta,G$ and $\Pi$.

The other internal variables $\Pi$ typically will be solutions of evolution PDEs that depend on which phenomena they are supposed to model. The exact form of such PDEs is irrelevant in the ensuing analysis and will never be specified.

The Lagrangian description being arbitrary, it is worth noticing that if a material has the above constitutive laws with respect to one reference configuration, it has constitutive laws of the same form with respect to any other reference configuration (the variable $X$ needs to be retained for this though). 

\subsection{The Coleman-Noll procedure}
We will now perform the Coleman-Noll procedure, \cite{ColemanNoll}, to determine the restrictions that the second principle, in conjunction with the first principle and the dynamics equation, imposes on the above  constitutive laws. 
\begin{proposition}\label{restrictions thermo avec variables internes}
The principles of thermodynamics and the dynamics equation imply that
 
 i) the specific free energy density $\widehat A_m$ is only a function of $F$, $\Theta$ and $\Xi$,
 
   ii) the specific entropy density $\widehat S_m$ is only a function of $F$, $\Theta$ and $\Xi$ with
 $$\widehat S_m(F,\Theta,\Xi)=-\frac{\partial \widehat A_m}{\partial \Theta}(F,\Theta,\Xi),$$

 iii) there is a natural additive decomposition of stress into a thermoelastic part $T_{\mathrm{Re}}$ and a dissipative part $T_{\mathrm{Rd}}$, with constitutive laws
 $$\widehat T_{\mathrm{Re}}(F,\Theta,\Xi)=\Rho\frac{\partial\widehat A_m}{\partial F}(F,\Theta,\Xi)\text{ and }
 \widehat T_{\mathrm{Rd}}=\widehat T_{\mathrm{R}}-\widehat T_{\mathrm{Re}}$$
 yielding a constitutive law for the internal dissipation
  \begin{multline}\label{loi dissipation interne}
 \widehat D_{\mathrm{int}}(F,H,\Theta,G,\Pi,\Xi)=\widehat T_{\mathrm{Rd}}(F,H,\Theta,G,\Pi,\Xi):H\\-\Rho\frac{\partial\widehat A_m}{\partial \Xi}(F,\Theta,\Xi)\cdot \widehat K(F,H,\Theta,G,\Pi,\Xi),
 \end{multline}
satisfying the dissipation inequality
 \begin{equation}\label{dissipation ineq}
 \widehat D_{\mathrm{int}}(F,H,\Theta,G,\Pi,\Xi)-\frac{\widehat Q(F,H,\Theta,G,\Pi,\Xi)\cdot G}{\Theta}\ge 0.
 \end{equation}

 Conversely, if the constitutive laws satisfy i), ii) and iii), then the second principle is satisfied.
\end{proposition}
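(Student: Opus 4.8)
The plan is to carry out the Coleman-Noll procedure on the Lagrangian second-principle inequality \eqref{2nd principe lagrangien}, treating the body force $B$ and the heat source $R$ as quantities that can be freely adjusted so that the dynamics equation \eqref{cons mom lag} and the energy equation \eqref{eq d'energie ref} are automatically satisfied and hence impose no constraint on admissible processes. First I would substitute the constitutive laws into \eqref{2nd principe lagrangien} and expand $\partial A_m/\partial t$ by the chain rule, keeping in mind that two of the time derivatives are slaved to the state: $\dot F=H$ by definition, and $\dot\Xi=\widehat K(F,H,\Theta,G,\Pi,\Xi)$ by the flow rule \eqref{l'edo}. Collecting terms, the inequality becomes
\begin{multline*}
-\Rho\Bigl(\frac{\partial\widehat A_m}{\partial H}:\dot H + \bigl(\widehat S_m + \frac{\partial\widehat A_m}{\partial\Theta}\bigr)\dot\Theta + \frac{\partial\widehat A_m}{\partial G}\cdot\dot G + \frac{\partial\widehat A_m}{\partial\Pi}\cdot\dot\Pi\Bigr)\\
+\Bigl(\widehat T_{\mathrm R}-\Rho\frac{\partial\widehat A_m}{\partial F}\Bigr):H - \Rho\frac{\partial\widehat A_m}{\partial\Xi}\cdot\widehat K - \frac{\widehat Q\cdot G}{\Theta}\ge 0,
\end{multline*}
in which every coefficient is a function of the state $(F,H,\Theta,G,\Pi,\Xi)$ alone, so that the first line is affine in the free rates $\dot H,\dot\Theta,\dot G,\dot\Pi$.

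The key step is a realizability argument. At any fixed material point and instant I can prescribe $\phi$ and $\Theta$ as suitable polynomials in $(X,t)$ so as to attain any admissible state $(F,H,\Theta,G,\Pi,\Xi)$ together with entirely arbitrary rates $\dot H,\dot\Theta,\dot G,\dot\Pi$: the deformation furnishes $F$, $\dot F=H$ and $\ddot F=\dot H$ independently, the temperature furnishes $\Theta,\dot\Theta,\nabla_X\Theta=G$ and $\partial_t\nabla_X\Theta=\dot G$ independently, and $\dot\Pi$ is free by the very definition of the $\Pi$ variables, while $\Xi$ supplies its value freely as an initial condition with $\dot\Xi=\widehat K$ then slaved to the state. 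Since the displayed inequality is affine in $\dot H,\dot\Theta,\dot G,\dot\Pi$ with state-dependent coefficients, and these rates sweep out whole vector spaces, each such coefficient must vanish identically. This gives $\partial\widehat A_m/\partial H=0$, $\partial\widehat A_m/\partial G=0$ and $\partial\widehat A_m/\partial\Pi=0$, i.e. that $\widehat A_m$ depends only on $F,\Theta,\Xi$, which is (i), as well as $\widehat S_m=-\partial\widehat A_m/\partial\Theta$; the right-hand side now being a function of $F,\Theta,\Xi$ only, so is $\widehat S_m$, which is (ii).

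With the free-rate terms eliminated, what survives is precisely the dissipation inequality \eqref{dissipation ineq} with $\widehat D_{\mathrm{int}}$ as in \eqref{loi dissipation interne}, once one sets $\widehat T_{\mathrm{Re}}=\Rho\,\partial\widehat A_m/\partial F$ and $\widehat T_{\mathrm{Rd}}=\widehat T_{\mathrm R}-\widehat T_{\mathrm{Re}}$ so that the bracket $\bigl(\widehat T_{\mathrm R}-\Rho\,\partial\widehat A_m/\partial F\bigr):H$ becomes $\widehat T_{\mathrm{Rd}}:H$; this is (iii). The converse is then obtained by reading the computation backwards: assuming (i) and (ii), expanding $\partial A_m/\partial t+S_m\,\partial\Theta/\partial t$ with $\widehat A_m=\widehat A_m(F,\Theta,\Xi)$ and $\widehat S_m=-\partial\widehat A_m/\partial\Theta$ collapses the left-hand side of \eqref{2nd principe lagrangien} to $\widehat D_{\mathrm{int}}-\widehat Q\cdot G/\Theta$, so that the inequality \eqref{dissipation ineq} furnished by (iii) is exactly the second principle \eqref{2nd principe lagrangien}.

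I expect the main obstacle to be the realizability argument of the second paragraph, namely verifying that $\dot F=H$ and $\dot\Xi=\widehat K$ are the only constraints tying the state to its time derivatives, and in particular that $G=\nabla_X\Theta$ and $\dot G$ can genuinely be assigned independently of $\Theta$ and $\dot\Theta$. This is exactly where the dichotomy between the two families of internal variables is decisive: a free rate $\dot\Pi$ forces $\widehat A_m$ to be independent of $\Pi$, whereas the enslaved rate $\dot\Xi=\widehat K$ lets $\widehat A_m$ keep a genuine dependence on $\Xi$, which is what feeds the internal dissipation through the term $-\Rho(\partial\widehat A_m/\partial\Xi)\cdot\widehat K$.
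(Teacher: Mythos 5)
Your proposal is correct and follows essentially the same route as the paper: the Coleman--Noll procedure with source terms $B$ and $R$ adjusted to enforce the balance laws, polynomial test fields realizing arbitrary states and free rates, and the observation that affine dependence on $\dot H$, $\dot\Theta$, $\dot G$, $\dot\Pi$ forces the corresponding coefficients to vanish. The only cosmetic difference is that you treat all free rates simultaneously, whereas the paper splits the argument into two choices of test fields (first with $\partial\Theta/\partial t=0$ to kill the $H$, $G$, $\Pi$ dependence, then with $\partial\Theta/\partial t=\alpha$ arbitrary to obtain the entropy relation); both realizations are legitimate.
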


\begin{proof}
The Coleman-Noll procedure consists in testing the Clausius-Duhem inequality with carefully chosen fields $\phi(X,t)$ and $\Theta(X,t)$, plus chosen internal variable evolutions. For this to be acceptable, it must be checked beforehand that such arbitrary, smooth enough evolutions can be solutions of the dynamics equation \eqref{cons mom lag}  and of the energy equation \eqref{eq d'energie ref}, at least in principle. This is indeed obtained by adjusting the source terms, namely the applied body force density for the dynamics equation
$$
B(X,t)=\Rho\Gamma(X,t){-}{\div\nolimits_{\!X}}\widehat  T_{\mathrm{R}}(F(X,t),H(X,t),\Theta(X,t),G(X,t),\Pi(X,t),\Xi(X,t))
$$
 and the heat source
\begin{multline*}
R(X,t)=\Rho\frac{\partial}{\partial t}\bigl(\widehat E_m(F(X,t),H(X,t),\Theta(X,t),G(X,t),\Pi(X,t),\Xi(X,t))\bigr)\\-\widehat  T_{\mathrm{R}}\bigl(F(X,t),H(X,t),\Theta(X,t),G(X,t),\Pi(X,t),\Xi(X,t)\bigr):H(X,t)\\+{\div\nolimits_X}\bigl(\widehat Q(F(X,t),H(X,t),\Theta(X,t),G(X,t),\Pi(X,t),\Xi(X,t))\bigr).
\end{multline*}
for the energy equation, the constitutive law for the internal energy specific density being of course $\widehat E_m=\widehat A_m+\Theta\widehat S_m$.

We know take the Clausius-Duhem inequality in the form \eqref{2nd principe lagrangien}, substitute inside the constitutive laws and apply the chain rule. The resulting inequality would be excruciatingly long if written in full, so we agree that all hatted quantities that appear are taken at their lengthy list of arguments $F(X,t)$, $H(X,t)$, $\Theta(X,t)$, $G(X,t)$, $\Pi(X,t)$, and $\Xi(X,t)$. We thus obtain
 \begin{multline}\label{ineg fond internes}
- \Rho\Bigl(\frac{\partial \widehat A_m}{\partial \Theta}+\widehat S_m\Bigr)\frac{\partial \Theta}{\partial t}+\Bigl(\widehat T_{\mathrm{R}}-\Rho\frac{\partial\widehat A_m}{\partial F}\Bigr):H\\-\Rho\frac{\partial\widehat A_m}{\partial H}:\nabla_X\Gamma-\Rho\frac{\partial\widehat A_m}{\partial G}\cdot\frac{\partial G}{\partial t}\\
-\Rho\frac{\partial\widehat A_m}{\partial \Xi}\cdot\frac{\partial\Xi}{\partial t}-\Rho\frac{\partial\widehat A_m}{\partial \Pi}\cdot\frac{\partial\Pi}{\partial t}-\frac{\widehat Q\cdot G}{\Theta}\ge0.
 \end{multline}

Let us choose some point $X_0\in\Omega$. Let us be given any $F\in \M_3^+$,  $H,M\in \M_3$, $\Theta\in\R_+^*$, $G,L\in\R^3$, $\Pi,\Pi'\in \R^m$ and $\Xi\in\R^k$. We first choose
 $\phi(X,t)=FX+tHX+\frac{t^2}2MX$, which is orientation-preserving for $t$ small enough, $\Theta(X,t)=\Theta e^{\frac{(X-X_0)\cdot(G+tL)}{\Theta}}$ which is strictly positive. By assumption, there is an evolution $\Pi(X,t)$ such that $\Pi(X_0,0)=\Pi$ and $\frac{\partial\Pi}{\partial t}(X_0,0)=\Pi'$. There is also an evolution $\Xi(X,t)$ satisfying the Cauchy problem with initial datum $\Xi(X_0,0)=\Xi$, substituting all the other evolutions in the right-hand side of the ordinary differential equation \eqref{l'edo} for $\Xi$.
 
With these choices, we obtain $F(X_0,0)=F$, $H(X_0,0)=H$, $\nabla_X\Gamma(X_0,0)=M$, $\Theta(X_0,0)=\Theta$, $\frac{\partial\Theta}{\partial t}(X_0,0)=0$, $G(X_0,0)=G$, $\frac{\partial G}{\partial t}(X_0,0)=L$. Letting $\widehat T_{\mathrm{Rd}}=\widehat T_{\mathrm{R}}-\Rho\frac{\partial\widehat A_m}{\partial F}$ and $\widehat D_{\mathrm{int}}$ be defined as in \eqref{loi dissipation interne},
 inequality \eqref{ineg fond internes} at $(X_0,0)$ then reads
\begin{equation}\label{ineq intermediaire}
\widehat D_{\mathrm{int}}-\Rho\frac{\partial\widehat A_m}{\partial H}:M-\Rho\frac{\partial\widehat A_m}{\partial G}\cdot L-\Rho\frac{\partial\widehat A_m}{\partial \Pi}\cdot \Pi'
-\frac{\widehat Q\cdot G}{\Theta}\ge0,
\end{equation}
with a similar convention that the  hatted quantities take $(F,H,\Theta,G,\Pi,\Xi)$ as arguments.

Since $M\in \M_3$, $L\in\R^3$ and $\Pi'\in\R^m$ are arbitrary, it follows that $\frac{\partial\widehat A_m}{\partial H}=0$, $\frac{\partial\widehat A_m}{\partial G}=0$ and $\frac{\partial\widehat A_m}{\partial \Pi}=0$. Therefore, $\widehat A_m$ depends neither on $H$, nor on $G$, nor on $\Pi$, which is assertion i). Moreover, inequality \eqref{ineq intermediaire} now boils down to \eqref{dissipation ineq}, \emph{i.e.}, assertion iii).
 
We finally take $\phi(X,t)=FX+tHX$,  $\Theta(X,t)=\Theta+\alpha t+(X-X_0)\cdot G$, with $\Theta>0$, $\alpha\in \R$, $t$ small enough and $X$ in a neighborhood of $X_0$, and the internal variables solution of their respective equations with initial data $\Pi$, $\Pi'$ and $\Xi$. With this choice, we obtain $F(X_0,0)=F$, $H(X_0,0)=H$, $\Theta(X_0,0)=\Theta$, $\frac{\partial\Theta}{\partial t}(X_0,0)=\alpha$, $G(X_0,0)=G$. At $(X_0,0)$, inequality \eqref{ineg fond internes} becomes
$$
 -\alpha\Rho\Bigl(\frac{\partial \widehat A_m}{\partial \Theta}+\widehat S_m\Bigr)+\widehat D_{\mathrm{int}}-\frac{\widehat Q\cdot G}{\Theta}\ge0.
$$
 Since $\alpha$ is arbitrary, it follows that
 $$
\widehat S_m(F,H,\Theta,G,\Pi,\Xi)=-\frac{\partial \widehat A_m}{\partial \Theta}(F,\Theta,\Xi),
 $$
which depends neither on $H$, nor on $G$, nor on $\Pi$, that is to say, assertion ii). 

At this point, we note that the function $\widehat D_{\mathrm{int}}$ is actually the constitutive law for the internal dissipation $D_{\mathrm{int}}$ defined earlier in \eqref{def dint}.

Conversely, if all these constitutive assumptions hold, then inequality \eqref{2nd principe lagrangien} is always satisfied. 
\end{proof}

\begin{remark}One of the main outcomes of the Coleman-Noll procedure is that, starting from four assumed constitutive laws, the second principle implies that there are only three master constitutive laws that can be specified independently, namely $\widehat A_m$, $\widehat Q$ and $ \widehat T_{\mathrm{Rd}}$. The constitutive laws for entropy and thermoelastic stress are in a sense included in the free energy constitutive law. 

As far as internal variables are concerned, some of them ($\Xi$) require a fifth constitutive law $\widehat K$, which we will call a flow rule as it is called in certain contexts, as right-hand side of their evolution ordinary differential equation. The second principle makes this flow rule appear in the internal dissipation and consequently in the dissipation inequality \eqref{dissipation ineq}. The other internal variables ($\Pi$) are hardly constrained by the second principle at all, except inasmuch as they enter as arguments in the dissipation inequality.
\end{remark}

\begin{remark}Since  inequality \eqref{2nd principe lagrangien}  is equivalent to the second principle in the presence of the first principle, we thus have a set of necessary and sufficient constitutive assumptions for this family of thermo-visco-elastic materials to satisfy the second principle.

The dynamics equation now assumes the form
\begin{align*}
&\Rho\Gamma-{\div\nolimits_X}\widehat T_{\mathrm{R}}(F,H,\Theta,G,\Pi,\Xi)=B,\\
&F\widehat T_{\mathrm{R}}(F,H,\Theta,G,\Pi,\Xi)^T=\widehat T_{\mathrm{R}}(F,H,\Theta,G,\Pi,\Xi)F^T,
\end{align*}
in $\Omega\times I$, where $I$ is some time interval. The second relation, which reflects the symmetry of the Cauchy stress, is more of a constitutive assumption than an actual equation. The first equation has unknowns $\phi$, $\Theta$, $\Pi$ and $\Xi$. It is of second order in time with respect to $\phi$, thus should be complemented with initial conditions for $\phi$ and $V$. Additionally, boundary values and initial conditions should be provided for all unknowns. 

Let us remark that the second principle says nothing about the well-posedness of this system. Indeed, in the elastic case ($ \widehat T_{\mathrm{Rd}}=0$, $ \widehat A_m(F,\Theta)=\widehat W_m(F)+\widehat V_m(\Theta)$, no internal variables), the mechanical part of the Clausius-Planck inequality is automatically satisfied with zero internal dissipation, irrespective of whether or not the dynamics equations are hyperbolic. 

The energy equation can be rewritten as a heat equation
$$
-\Rho\Theta\frac{\partial^2 \widehat A_m}{\partial \Theta^2}\frac{\partial\Theta}{\partial t}+{\div\nolimits_X}\widehat Q=\widehat D_{\mathrm{int}}+\Rho\Theta\frac{\partial\widehat T_{\mathrm{Re}}}{\partial\Theta}:\nabla_XV+\Rho\Theta\frac{\partial^2\widehat A_m}{\partial\Theta\partial\Xi}\cdot\widehat K+R.
$$
Again, there is no provision in the second principle for this heat equation to be parabolic. It can be ill-posed like a backward heat equation, but still satisfy the second principle.
Note that the internal dissipation plays the role of a heat source, hence its name.

Finally, these equations are coupled to the evolution equations for both kinds of  internal variables $\Pi$ and $\Xi$.
\end{remark}
\begin{remark}\label{Fourier classique}
The classical Fourier law assumes the Lagrangian form 
\begin{equation}\label{classique Fourier loi}
\widehat Q(F,G)=-k(F^TF)^{-1}G\text{ with } k>0,
\end{equation} which corresponds to  $\widehat q(g)=-kg$ in the Eulerian description (so that ${\div\nolimits_x}\widehat q(\nabla_x\theta)=-k\Delta_x \theta$). 
Since $\widehat q(g)\cdot g=-k\|g\|^2$, or  $\widehat Q(F,G)\cdot G=-kG^T(F^TF)^{-T}G=-k\|F^{-T}G\|^2$, this law satisfies the thermal part of the Clausius-Planck inequalities~\eqref{Clausius-Planck-eulerienbis}-\eqref{Clausius-Planck-lagrangienbis}. 
\end{remark}

\begin{remark}\label{irreversibilite}It was claimed earlier that if the Clausius-Duhem inequality is strict, then the evolution is irreversible. Now the starting point of the Coleman-Noll procedure is to be able to consider arbitrary evolutions and in the end, the ensuing restrictions on the constitutive laws make them satisfy the second principle for all evolutions, including the time-reversed version of any given evolution.  However, the source terms in the Coleman-Noll procedure are not the same ones as in the previous context, but are drastically modified. In particular, the heat sources and fluxes are not the opposite time-reversed heat sources and fluxes of the given evolution.

This is easier seen on the Clausius-Planck version. Indeed, since $G$ is invariant under time reversal, if the constitutive law for the heat flux constitutive law is such that $\widehat Q\cdot G<0$ for $G\neq 0$, then it is not possible to reverse the direction of the heat flux. Hence, the heat source for the time-reversed evolution cannot be the opposite of the time-reversed heat source in order to occur as an entropy source or in the energy equation. Similarly, taking for instance $ \widehat T_{\mathrm{Rd}}(F,H,\Theta)=H$ (not a good choice as will be seen a little later on), then the internal dissipation is strictly positive for $H\neq 0$, but time reversal changes $H$ into $-H$ and thus $ \div_X\widehat T_{\mathrm{Rd}}$ into $-\div_X\widehat T_{\mathrm{Rd}}$, and the body forces must be modified accordingly in the dynamics equation, which is not just by time reversal. 
\end{remark}

We can see that for our thermo-visco-elastic materials, there is no reason in general for the Clausius-Planck inequalities to be equivalent to the Clausius-Duheim inequality. They are however partly or totally implied by the Clausius-Duhem inequality in certain cases.

\begin{proposition}\label{CP OK}Assume that $\widehat T_{\mathrm{Rd}}$ and $\widehat K$ do not depend on $G$. Then the Clausius-Duhem inequality \eqref{encore une CD} implies the mechanical part of the Clausius-Planck inequalities \eqref{Clausius-Planck-eulerienbis}-\eqref{Clausius-Planck-lagrangienbis}.   If furthermore $\widehat Q$ does not depend on $H$ nor on $\Xi$, then 
\begin{equation}\label{Clausius-Planck du pauvre}
\widehat Q(F,\Theta,G,\Pi)\cdot G\le \Theta\inf_{(H,\Xi)\in \M_3\times\R^k}\widehat D_{\mathrm{int}}(F,H,\Theta,\Pi,\Xi).
\end{equation}

If, in addition, we assume i) that for all $F\in\M_3^+$ and $\Theta\in\R_+^*$, there exists $\Xi\in\R^k$ such that $\frac{\partial\widehat A_m}{\partial\Xi}(F,\Theta,\Xi)=0$ or ii) that there is no internal variable $\Xi$ at all, then the thermal part of the Clausius-Planck inequalities is also implied by the Clausius-Duhem inequality.
\end{proposition}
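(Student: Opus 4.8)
The plan is to exploit the fact that, after Proposition \ref{restrictions thermo avec variables internes}, the internal dissipation \eqref{loi dissipation interne} reads $\widehat D_{\mathrm{int}}=\widehat T_{\mathrm{Rd}}:H-\Rho\frac{\partial\widehat A_m}{\partial\Xi}\cdot\widehat K$, in which $\widehat A_m$, and hence $\partial\widehat A_m/\partial\Xi$, is already known not to depend on $G$. Therefore, under the standing hypothesis that $\widehat T_{\mathrm{Rd}}$ and $\widehat K$ do not depend on $G$, the whole function $\widehat D_{\mathrm{int}}$ is independent of $G$. Throughout, I would treat $(F,H,\Theta,G,\Pi,\Xi)$ as free independent variables, which is legitimate because the Coleman-Noll construction used in Proposition \ref{restrictions thermo avec variables internes} shows that any such sextuple is realized at some $(X_0,0)$ by an admissible evolution, so the Clausius-Duhem inequality \eqref{encore une CD} holds pointwise for every choice of these variables.

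For the mechanical part, I would simply evaluate the Clausius-Duhem inequality \eqref{encore une CD}, namely $\widehat D_{\mathrm{int}}-\frac{\widehat Q\cdot G}{\Theta}\ge 0$, at $G=0$. The heat flux term then vanishes, and since $\widehat D_{\mathrm{int}}$ does not depend on $G$, its value at $G=0$ equals its value at any $G$; this yields $\widehat D_{\mathrm{int}}(F,H,\Theta,\Pi,\Xi)\ge 0$ for all arguments, that is, the mechanical part of \eqref{Clausius-Planck-eulerienbis}--\eqref{Clausius-Planck-lagrangienbis}.

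For inequality \eqref{Clausius-Planck du pauvre}, I would assume in addition that $\widehat Q$ depends neither on $H$ nor on $\Xi$ and rewrite \eqref{encore une CD} as $\widehat D_{\mathrm{int}}(F,H,\Theta,\Pi,\Xi)\ge\frac{\widehat Q(F,\Theta,G,\Pi)\cdot G}{\Theta}$. The right-hand side is now independent of $(H,\Xi)$, so I may take the infimum over $(H,\Xi)\in\M_3\times\R^k$ of the left-hand side while preserving the inequality; multiplying by $\Theta>0$ gives exactly \eqref{Clausius-Planck du pauvre}.

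Finally, for the thermal part, the point is to show $\inf_{(H,\Xi)}\widehat D_{\mathrm{int}}\le 0$, after which \eqref{Clausius-Planck du pauvre} forces $\widehat Q\cdot G\le 0$. I would exhibit a state at which $\widehat D_{\mathrm{int}}=0$: taking $H=0$ annihilates the term $\widehat T_{\mathrm{Rd}}:H$ regardless of the value of $\widehat T_{\mathrm{Rd}}$, and under hypothesis i) choosing $\Xi$ with $\frac{\partial\widehat A_m}{\partial\Xi}(F,\Theta,\Xi)=0$ annihilates the remaining term $\Rho\frac{\partial\widehat A_m}{\partial\Xi}\cdot\widehat K$, while under hypothesis ii) that term is simply absent. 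In either case $\widehat D_{\mathrm{int}}=0$ at this state, so the infimum is nonpositive and the conclusion follows. The only genuinely delicate point is this last step: one must produce an admissible state making the internal dissipation vanish, and this is precisely what the two alternative assumptions are designed to guarantee. Without them the infimum of $\widehat D_{\mathrm{int}}$ could be strictly positive, and the thermal part would not follow from the Clausius-Duhem inequality alone.
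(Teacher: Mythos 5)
Your proof is correct and follows essentially the same route as the paper's: observe that $\widehat D_{\mathrm{int}}$ is independent of $G$, set $G=0$ to get the mechanical part, take the infimum over $(H,\Xi)$ for \eqref{Clausius-Planck du pauvre}, and exhibit a state with vanishing internal dissipation to conclude the thermal part. Your explicit remark that the Coleman--Noll construction justifies treating the thermodynamic variables as independent is a point the paper leaves implicit, but it changes nothing in substance.
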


\begin{proof}
Indeed, as $\widehat T_{\mathrm{Rd}}$ and $\widehat K$ do not depend on $G$, the constitutive law  $\widehat D_{\mathrm{int}}$ defined by \eqref{loi dissipation interne} does not depend on $G$ either, therefore the dissipation inequality reads
$$
 \widehat D_{\mathrm{int}}(F,H,\Theta,\Pi,\Xi)-\frac{\widehat Q(F,H,\Theta,G,\Pi,\Xi)\cdot G}{\Theta}\ge 0.
 $$
Taking $G=0$, we observe that the mechanical part of the Clausius-Planck inequalities ensues.

Now if $\widehat Q$ does not depend on $(H,\Xi)$, then \eqref{Clausius-Planck du pauvre} obviously holds. Of course, the right-hand side is then nonnegative. 

Finally, in case i), we take $\Xi\in\R^k$ such that $\frac{\partial\widehat A_m}{\partial\Xi}(F,\Theta,\Xi)=0$ and $H=0$ and in case ii), just $H=0$, we see that the right-hand side of \eqref{Clausius-Planck du pauvre} vanishes, which implies the thermal part of the Clausius-Planck inequalities.
 \end{proof}
 
\begin{remark}Proposition \ref{CP OK} thus provides us with two rather general situations in which the Clausius-Duhem and Clausius-Planck inequalities are equivalent, so that nothing is lost by working with the Clausius-Planck inequalities. From now on and for simplicity, we will assume that the internal dissipation does not depend on $G$, that the heat flux does not depend on $(H,\Xi)$, and that we are in cases i) or ii) above.
\end{remark}

\subsection{Dissipation potentials}
A natural question is how can one ensure that, given constitutive laws for the free energy and the heat flux, the Clausius-Planck inequalities \eqref{Clausius-Planck-lagrangienbis} hold, \emph{i.e.}, how to find an appropriate constitutive law for the dissipative part of the stress and appropriate flow rule for the internal variables $\Xi$ in this respect. 

The thermal part of  the Clausius-Planck inequalities is a requirement on $\widehat Q$ itself, namely that 
$$
\widehat Q(F,\Theta,G,\Pi)\cdot G\le 0,
$$
for all $F$, $\Theta$, $G$, and $\Pi$. We have seen in remark \ref{Fourier classique} that it is satisfied by the classical Fourier law, and we will see later on generalizations thereof that also satisfy it.

The mechanical/internal variable part of the Clausius-Planck inequalities, 
$$
\widehat D_{\mathrm{int}}(F,H,\Theta,\Pi,\Xi)\ge 0,
$$ 
 is easily satisfied via a dissipation potential. Dissipation potentials are often introduced in a fairly obscure fashion, so let us just stress here that they have no a priori physical significance, even though the choice of a specific dissipation potential should reflect some modeling concern. 

\begin{proposition}\label{potentiel de dissipation cas visqueux et var internes}
Let $\widehat P_{\mathrm{diss}}\colon
 \M_3^+\times \M_3\times\R_+^*\times\R^m\times\R^k\to\R_+$ be a function that is convex with respect to $(H,\Lambda)$, where $\Lambda\in \R^k$ is the last variable, and such that $\widehat P_{\mathrm{diss}}(F,0,\Theta,\Pi,0)=0$. Then the constitutive laws 
 $$\widehat T_{\mathrm{Rd}}(F,H,\Theta,\Pi,\Xi)=\frac{\partial \widehat P_{\mathrm{diss}}}{\partial H}\Bigl(F,H,\Theta,\Pi,\frac{\partial\widehat A_m}{\partial\Xi}(F,\Theta,\Xi)\Bigr)$$
and
  $$\widehat K(F,H,\Theta,\Pi,\Xi)=-\frac1{\Rho}\frac{\partial  \widehat P_{\mathrm{diss}}}{\partial\Lambda}\Bigl(F,H,\Theta,\Pi,\frac{\partial\widehat A_m}{\partial\Xi}(F,\Theta,\Xi)\Bigr)$$
are such that $\widehat D_{\mathrm{int}}(F,H,\Theta,\Pi,\Xi)\ge 0$ for all values of its arguments. 
\end{proposition}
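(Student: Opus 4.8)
The plan is to substitute the two proposed constitutive laws into the formula for $\widehat D_{\mathrm{int}}$ coming from Proposition \ref{restrictions thermo avec variables internes}, and then recognize the resulting expression as the first-order directional term in the convexity inequality for $\widehat P_{\mathrm{diss}}$, which is bounded below precisely because $\widehat P_{\mathrm{diss}}$ is convex and vanishes at the origin.

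First I would recall that, under the standing simplification that $\widehat T_{\mathrm{Rd}}$ and $\widehat K$ do not depend on $G$, formula \eqref{loi dissipation interne} reads
$$\widehat D_{\mathrm{int}}(F,H,\Theta,\Pi,\Xi)=\widehat T_{\mathrm{Rd}}:H-\Rho\frac{\partial\widehat A_m}{\partial \Xi}(F,\Theta,\Xi)\cdot \widehat K.$$
To lighten the notation I would abbreviate $\Lambda_0=\frac{\partial\widehat A_m}{\partial\Xi}(F,\Theta,\Xi)$ and freeze the passive arguments by setting $g(H,\Lambda)=\widehat P_{\mathrm{diss}}(F,H,\Theta,\Pi,\Lambda)$, a convex function of the joint variable $(H,\Lambda)$ satisfying $g(0,0)=0$. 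With this shorthand the proposed laws are $\widehat T_{\mathrm{Rd}}=\frac{\partial g}{\partial H}(H,\Lambda_0)$ and $\widehat K=-\frac1{\Rho}\frac{\partial g}{\partial\Lambda}(H,\Lambda_0)$.

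Next I would insert these into the dissipation formula. The factor $\Rho$ multiplying $\widehat K$ cancels the $\frac1{\Rho}$ in the flow rule and the two minus signs combine into a plus, leaving
$$\widehat D_{\mathrm{int}}=\frac{\partial g}{\partial H}(H,\Lambda_0):H+\frac{\partial g}{\partial\Lambda}(H,\Lambda_0)\cdot\Lambda_0.$$
This is exactly the directional term appearing in the first-order characterization of convexity. Applying the gradient inequality $g(z')\ge g(z)+\nabla g(z)\cdot(z'-z)$ with $z=(H,\Lambda_0)$ and $z'=(0,0)$ yields
$$0=g(0,0)\ge g(H,\Lambda_0)-\frac{\partial g}{\partial H}(H,\Lambda_0):H-\frac{\partial g}{\partial\Lambda}(H,\Lambda_0)\cdot\Lambda_0,$$
so that $\widehat D_{\mathrm{int}}\ge g(H,\Lambda_0)=\widehat P_{\mathrm{diss}}(F,H,\Theta,\Pi,\Lambda_0)\ge0$, where I have used $g(0,0)=\widehat P_{\mathrm{diss}}(F,0,\Theta,\Pi,0)=0$ from the hypothesis together with the fact that $\widehat P_{\mathrm{diss}}$ is $\R_+$-valued. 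This gives the claim for all values of the arguments.

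I do not anticipate a serious obstacle. The only points needing care are the bookkeeping of the $\Rho$-cancellation and the two sign changes, and reading ``convex with respect to $(H,\Lambda)$'' as licensing a single gradient inequality in the joint variable rather than two separate one-variable inequalities. Should $\widehat P_{\mathrm{diss}}$ be assumed merely convex and not differentiable, the same argument applies verbatim with a subgradient replacing $\nabla g$, the constitutive laws then being understood as selecting one element of the subdifferential.
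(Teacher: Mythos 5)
Your proof is correct and follows essentially the same route as the paper: the paper invokes the fact that a nonnegative differentiable convex function $J$ with $J(0)=0$ satisfies $\langle dJ(u),u\rangle\ge0$ at $u=(H,\Lambda)$ and then sets $\Lambda=\frac{\partial\widehat A_m}{\partial\Xi}(F,\Theta,\Xi)$, while you simply spell out that same fact via the gradient inequality between $(H,\Lambda_0)$ and $(0,0)$. The sign and $\Rho$ bookkeeping in your substitution is right, and your closing remark on the subgradient case is a harmless bonus not present in the paper.
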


\begin{proof}This is due to the fact that a positive, differentiable, convex function $J$ with $J(0)=0$ is such that  $\langle dJ(u),u\rangle\ge 0$. Here we consider $F$, $\Theta$ and $\Pi$ as parameters, $u=(H,\Lambda)$ and $J(u)=\widehat P_{\mathrm{diss}}(F,H,\Theta,\Pi,\Lambda)$, from which it follows that 
$$
\frac{\partial \widehat P_{\mathrm{diss}}}{\partial H}(F,H,\Theta,\Pi,\Lambda):H+\frac{\partial \widehat P_{\mathrm{diss}}}{\partial \Lambda}(F,H,\Theta,\Pi,\Lambda)\cdot\Lambda\ge 0,
$$
and it suffices to take $\Lambda=\frac{\partial\widehat A_m}{\partial\Xi}(F,\Theta,\Xi)$ to obtain the desired inequality.
\end{proof} 

So dissipation potentials are more like recipes to construct dissipative constitutive laws, given the fact that it is easier to check that a function is nonnegative, convex and $0$ at the origin than to construct $\widehat T_{\mathrm{Rd}}$ and $\widehat K$ from scratch. Note that we could also introduce a similar notion of diffusion potential to construct admissible heat fluxes, but this seems less common. In the case when the Clausius-Duhem inequality does not reduce to the Clausius-Planck inequalities, a combined dissipation-diffusion potential will also work.

\begin{remark}\label{example 1}We will later on describe more elaborate exemples with internal variables, but let us for now just take $\widehat A_m(F,\Theta)=\widehat V_m(\Theta)$ so that $\widehat T_{\mathrm{Re}}=0$, and $\widehat Q(F,G)=-kC^{-1}G$ with $C=F^TF$ and $k>0$, the classical Fourier law. We consider a purely kinematical viscosity  $\sigma = 2 \nu d$, $\nu>0$, that corresponds to $\widehat T_{\mathrm{Rd}}(F,H)=\nu\det F\bigl(HC^{-1}+F^{-T}H^TF^{-T}\bigr)$. It can be checked after some computation that this stress tensor  derives from the dissipation potential $\widehat P_{\mathrm{diss}}(F,H)=\frac{\nu\det F}2\|HC^{-1}+F^{-T}H^TF^{-T}\|^2$, which is clearly nonnegative, convex,  and equal to $0$ when $H=0$.  The dynamics equation is in this case decoupled from the heat equation, as it does not involve the temperature. The heat equation however still has the  internal dissipation, which is in general strictly positive, as a heat source term.

If we add a term of the form $W_m(F)$ to the free energy, we obtain a stress tensor which is the sum of a purely elastic term and of a purely kinematically viscous term, and so on and so forth. 

Of course, if there are internal variables $\Xi$, they will be responsible for another kind of dissipation, for instance viscosity effects that are not kinematical, see Section \ref{modeles maxwell 3d}. 
\end{remark}

%
%

\section{Frame-indifference and symmetries for thermo-visco-elastic materials}\label{sursection AIM visqueux}
The principle of frame-indifference applies to the materials under consideration. Since the nature of the internal variables is left unspecified at this point, they cannot be taken into account in a generic manner in this respect. They have to be dealt with on a case-by-case basis. We thus just consider here thermo-visco-elastic materials without internal variables and will return later to specific examples with internal variables. 

In all the proofs of this section, $\Theta$ only plays the role of a parameter, so we will systematically omit it for brevity.

\subsection{Frame-indifference}\label{section AIM visqueux}

The following result expresses frame-indifference for the stresses in the Lagrangian description. A similar result, however not expressed in the same way, can be found in \cite{Silhavy}. We use the notation  $\Sym(M)$ for the symmetric part of a matrix $M$.

\begin{proposition}\label{AIM lagrangien visqueux}The constitutive law for the first Piol\`a-Kirchhoff stress tensor is compatible with the principle of frame-indifference if and only if, for all $F\in \M_3^+$, $H\in \M_3$, $\Theta\in \R_+^*$ and $R\in \SO(3)$, we have
\begin{equation}\label{AIM visqueux 1}
\widehat T_{\mathrm{R}}(RF,RH,\Theta)=R\widehat T_{\mathrm{R}}(F,H,\Theta),
\end{equation}
and
\begin{equation}\label{AIM visqueux 2}
\widehat T_{\mathrm{R}}(F,H,\Theta)=\widehat T_{\mathrm{R}}\bigl(F, \Sym(H F^{-1})F,\Theta\bigr).
\end{equation}
\end{proposition}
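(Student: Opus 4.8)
The plan is to translate the frame-indifference principle of Subsection \ref{principe AIM} into a single functional equation for $\widehat T_{\mathrm{R}}$ and then to show that this equation is equivalent to the conjunction of \eqref{AIM visqueux 1} and \eqref{AIM visqueux 2}. First I would superimpose a rigid motion $\phi^{*}(X,t)=a(t)+R(t)\phi(X,t)$ on a given deformation and compute how the thermodynamic variables transform. Differentiating in $X$ gives $F^{*}=RF$, while differentiating the velocity $V^{*}=\dot a+\dot R\phi+RV$ in $X$ gives the deformation rate $H^{*}=\dot R F+RH$; the temperature, being a scalar, is unchanged (and is suppressed throughout, as agreed). Substituting into the transformation law \eqref{AIMTR}, frame-indifference amounts to requiring that $\widehat T_{\mathrm{R}}(RF,\dot RF+RH)=R\,\widehat T_{\mathrm{R}}(F,H)$ hold along every rigid motion.

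The key observation is that at a fixed instant the values $R(t)$ and $\dot R(t)$ are not independent: differentiating $RR^{T}=I$ shows that $W:=\dot RR^{T}$ is skew-symmetric. Conversely, any pair $(R,W)\in\SO(3)\times\Skew$ is realized by the curve $R(t)=\exp((t-t_0)W)R$, for which $R(t_0)=R$ and $\dot R(t_0)R(t_0)^{T}=W$. Writing $\dot R=WR$, the frame-indifference condition is therefore equivalent to
\[
\widehat T_{\mathrm{R}}(RF,\,WRF+RH)=R\,\widehat T_{\mathrm{R}}(F,H)\qquad\text{for all } R\in\SO(3),\ W\in\Skew,
\]
an identity I shall call $(\star)$.

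For the forward implication I would specialize $(\star)$ twice. Taking $W=0$ yields \eqref{AIM visqueux 1} at once. Taking $R=I$ gives $\widehat T_{\mathrm{R}}(F,WF+H)=\widehat T_{\mathrm{R}}(F,H)$ for every skew $W$; choosing $W=-\Skew(HF^{-1})$, a one-line computation gives $(WF+H)F^{-1}=\Sym(HF^{-1})$, hence $WF+H=\Sym(HF^{-1})F$, which is exactly \eqref{AIM visqueux 2}. For the converse, starting from the left-hand side of $(\star)$ I would apply \eqref{AIM visqueux 2} at the pair $(RF,\,WRF+RH)$: the relevant velocity gradient is $(WRF+RH)(RF)^{-1}=W+RHF^{-1}R^{T}$, whose symmetric part is $R\Sym(HF^{-1})R^{T}$, since $W$ is skew and conjugation by $R\in\SO(3)$ commutes with symmetrization. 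This reduces the left-hand side to $\widehat T_{\mathrm{R}}(RF,\,R\Sym(HF^{-1})F)$; applying \eqref{AIM visqueux 1} pulls the rotation out, and a final use of \eqref{AIM visqueux 2} in reverse restores $H$, giving precisely $R\,\widehat T_{\mathrm{R}}(F,H)$, that is $(\star)$.

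The only delicate point is the kinematic step: correctly obtaining the extra term $\dot RF$ in $H^{*}$ and recognizing that the spin $W=\dot RR^{T}$ sweeps out all of $\Skew$ independently of $R$. This is exactly what forces the second condition \eqref{AIM visqueux 2} and marks the difference with the purely elastic case; the remaining manipulations are routine matrix algebra resting only on $R^{T}R=I$ and the orthogonal invariance of the symmetric/skew decomposition.
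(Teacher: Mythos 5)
Your proof is correct and follows essentially the same route as the paper's: the same transformation rules $F^{*}=RF$, $H^{*}=\dot RF+RH$, the same two specializations (constant rotation for \eqref{AIM visqueux 1}, a spin at the identity with $W=-\Skew(HF^{-1})$ for \eqref{AIM visqueux 2}), and the same converse computation via $\Sym(H^{*}(F^{*})^{-1})=R\Sym(HF^{-1})R^{T}$. Your repackaging of the condition as the single functional equation $(\star)$ over $\SO(3)\times\Skew_3$ is only a cosmetic difference from the paper's argument, which tests specific curves $R(t)$ directly.
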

\begin{proof}Let us be given an arbitrary deformation $\phi$ and an arbitrary  rigid motion $t\mapsto (a(t),R(t))$. Now let $\phi^*(X,t)=a(t)+R(t)\phi(X,t)$, $F^*(X,t)=\nabla_X\phi^*(X,t)$, $V^*(X,t)=\frac{\partial \phi^*}{\partial t}(X,t)$ and $H^*(X,t)=\nabla_XV^*(X,t)$. Clearly,
$$F^*(X,t)=R(t)F(X,t)\text{ and }H^*(X,t)=R(t)H(X,t)+R'(t)F(X,t).$$

The principle of frame-indifference requires that the relation $T_{\mathrm{R}}^{*}=R(t)T_{\mathrm{R}}$ should always be observed, that is to say in terms of the constitutive law, that the latter should satisfy
\begin{equation}\label{AIM loi de comp}
\widehat T_{\mathrm{R}}(F^*(X,t),H^*(X,t))=R(t)\widehat T_{\mathrm{R}}(F(X,t),H(X,t)).
\end{equation}
We first take $R(t)=R$ a constant rotation, $a(t)=0$ and $\phi(X,t)=FX+tHX$ with $F\in\M_3^+$ and $H\in\M_3$ arbitrary, $t$ small enough. Then, \eqref{AIM visqueux 1} follows from writing \eqref{AIM loi de comp} at $t=0$.

Let us now keep the same deformation $\phi$, still $a(t)=0$, but  $R(t)=e^{tW}$ with $W$ skew-symmetric and arbitrary. We have $H^*(X,0)=H+WF$ and $R(0)=I$. Then, \eqref{AIM loi de comp} implies that
$$\widehat T_{\mathrm{R}}(F,H)=\widehat T_{\mathrm{R}}(F,H+WF)=\widehat T_{\mathrm{R}}(F,(HF^{-1}+W)F).$$
Taking for $W$ the opposite of the skew-symmetric part of $HF^{-1}$, we obtain \eqref{AIM visqueux 2}.

Conversely, we now assume \eqref{AIM visqueux 1} and \eqref{AIM visqueux 2}, and consider arbitrary deformations and rigid motions. 
The translational invariance is automatic here, since $a(t)$ does not appear in the constitutive law. With the same notation as above, but without writing the $X$ and $t$ arguments for brevity, we see that
$$H^*(F^*)^{-1}=RH(F^*)^{-1}+R'F(F^*)^{-1}=RHF^{-1}R^T+R'R^T.$$
Taking the symmetric part, we obtain
$$\Sym(H^*(F^*)^{-1})=R\Sym(HF^{-1})R^T$$
because $R'R^T$ is skew-symmetric. Therefore,
$$\Sym(H^*(F^*)^{-1}) F^*=R\Sym(HF^{-1})F,$$
so that by \eqref{AIM visqueux 1} and \eqref{AIM visqueux 2}, 
\begin{align*}
T_{\mathrm{R}}^{*}&=\widehat T_{\mathrm{R}}(F^*,H^*) =\widehat T_{\mathrm{R}} (F^*,\Sym(H^*(F^*)^{-1}) F^*)\\
&=\widehat T_{\mathrm{R}}(RF,R\Sym(HF^{-1})F) =R\widehat T_{\mathrm{R}}(F,\Sym(HF^{-1})F)\\
&=R\widehat T_{\mathrm{R}}(F,H)=RT_{\mathrm{R}}.
\end{align*}
Therefore, the constitutive law is compatible with the principle of frame-indifference.
\end{proof}

Observe that the example of remark \ref{example 1} satisfies \eqref{AIM visqueux 1} and \eqref{AIM visqueux 2} and is thus frame-indifferent, while that of remark \ref{irreversibilite} is not.

In terms of the Cauchy stress tensor, the previous results can be rewritten in the following way.
\begin{proposition}\label{AIM eulerien visqueux}There exists a function $\check\sigma\colon\M_3^+\times\Sym_3\times\R^*_+\to\Sym_3$ such that the Cauchy stress tensor is given by
$$\sigma(\phi(X,t),t)=\check\sigma\bigl(F(X,t),\Sym(H(X,t)F^{-1}(X,t)),\Theta(X,t)\bigr),$$
with
$$
\check\sigma(RF,RMR^T,\Theta)=R\check\sigma(F,M,\Theta)R^T
$$
for all $F\in \M_3^+$, $M\in\Sym_3$, $R\in \SO(3)$ and $\Theta\in \R_+^*$.
In the Eulerian description, this reads
$$\sigma(x,t)=\check\sigma\bigl(\phi^{-1}(x,t),\nabla_x\phi^{-1}(x,t),d(x,t),\theta(x,t)\bigr),$$
where $d=\Sym(\nabla_xv)$ is the stretching  tensor, with 
$$\check\sigma(X,fR^T,RdR^T,\theta)=R\check\sigma(X,f,d,\theta)R^T$$
for all possible arguments.
\end{proposition}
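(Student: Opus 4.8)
The plan is to obtain $\check\sigma$ by pushing $\widehat T_{\mathrm{R}}$ through the Piolà transform and then repackaging the two conditions of Proposition \ref{AIM lagrangien visqueux} into the single covariance identity. First I would recall that $T_{\mathrm{R}}=\sigma\,\cof F$ with $\cof F=(\det F)F^{-T}$, so that
$$\sigma(\phi(X,t),t)=\frac1{\det F}\,\widehat T_{\mathrm{R}}(F,H,\Theta)\,F^T.$$
The key observation is kinematic: the symmetric matrix $M:=\Sym(HF^{-1})$ occurring in \eqref{AIM visqueux 2} is exactly the stretching tensor $d=\Sym(\nabla_xv)$, since $\nabla_xv=HF^{-1}$. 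This suggests defining, for $M\in\Sym_3$,
$$\check\sigma(F,M,\Theta)=\frac1{\det F}\,\widehat T_{\mathrm{R}}(F,MF,\Theta)\,F^T.$$
Condition \eqref{AIM visqueux 2} gives $\widehat T_{\mathrm{R}}(F,H,\Theta)=\widehat T_{\mathrm{R}}(F,MF,\Theta)$, hence $\sigma(\phi(X,t),t)=\check\sigma(F,\Sym(HF^{-1}),\Theta)$, which is the asserted representation. That $\check\sigma$ is $\Sym_3$-valued is not a consequence of frame-indifference but of the postulated symmetry of the Cauchy stress: $\check\sigma=\check\sigma^T$ is equivalent to $\widehat T_{\mathrm{R}}F^T=F\widehat T_{\mathrm{R}}^T$, that is, to \eqref{sym PK}.

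Next I would prove the covariance by direct substitution. Since $R\in\SO(3)$, one has $\det(RF)=\det F$, $(RF)^T=F^TR^T$ and $(RMR^T)(RF)=RMF$, so that
$$\check\sigma(RF,RMR^T,\Theta)=\frac1{\det F}\,\widehat T_{\mathrm{R}}(RF,R(MF),\Theta)\,F^TR^T.$$
Applying \eqref{AIM visqueux 1} with $H$ replaced by $MF$ turns $\widehat T_{\mathrm{R}}(RF,R(MF),\Theta)$ into $R\,\widehat T_{\mathrm{R}}(F,MF,\Theta)$, whence $\check\sigma(RF,RMR^T,\Theta)=R\,\check\sigma(F,M,\Theta)\,R^T$. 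Only \eqref{AIM visqueux 1} is used at this stage, \eqref{AIM visqueux 2} having already served to make $\check\sigma$ depend on $H$ solely through $M$.

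For the Eulerian form I would merely change variables, setting $X=\phi^{-1}(x,t)$, $f=\nabla_x\phi^{-1}(x,t)=F^{-1}$, $d=M$ and $\theta=\Theta$, and reinstating the material point $X$ as an inhomogeneity parameter; abusing notation, $\check\sigma(X,f,d,\theta):=\check\sigma(f^{-1},d,\theta)$. The covariance in these variables follows from the observer transformation rules: $F^*=RF$ gives $f^*=(RF)^{-1}=fR^T$, the identity $\Sym(H^*(F^*)^{-1})=R\Sym(HF^{-1})R^T$ already obtained in the proof of Proposition \ref{AIM lagrangien visqueux} gives $d^*=RdR^T$, while $X$ and $\theta$ are unchanged. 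Substituting $(X,fR^T,RdR^T,\theta)$ and using $(fR^T)^{-1}=Rf^{-1}=RF$ reduces the claim to the Lagrangian covariance just proved.

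The computations are routine; the genuine points needing care are the well-definedness of $\check\sigma$ on $\M_3^+\times\Sym_3\times\R_+^*$, which rests on identifying $\Sym(HF^{-1})$ with the stretching tensor together with \eqref{AIM visqueux 2}, and the bookkeeping of how $f$ and $d$ transform in the Eulerian passage. I would also flag explicitly that the symmetry of the target space comes from balance of angular momentum rather than from frame-indifference, since this is the only ingredient not supplied by Proposition \ref{AIM lagrangien visqueux}.
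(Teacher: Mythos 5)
Your proof is correct, and it supplies precisely the argument the paper leaves implicit: the proposition is stated without proof as a direct reformulation of Proposition \ref{AIM lagrangien visqueux}, obtained exactly as you do by setting $\check\sigma(F,M,\Theta)=\frac1{\det F}\widehat T_{\mathrm{R}}(F,MF,\Theta)F^T$ and combining \eqref{AIM visqueux 1}--\eqref{AIM visqueux 2} with the Piol\`a relation $T_{\mathrm{R}}=\sigma\cof F$. Your explicit remark that the $\Sym_3$-valuedness of $\check\sigma$ rests on the angular-momentum condition \eqref{sym PK} rather than on frame-indifference is accurate and is a point the paper itself only makes later (see the remark following Proposition \ref{invariance bonus inattendue}).
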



Scalar constitutive laws must also obey the principle of frame-indifference, \emph{i.e.}, yield invariant quantities, see equation \eqref{AIM A lag}. 
Dissipation potentials $\widehat P_{\mathrm{diss}}$ however are constitutive laws that do not necessarily correspond to physically relevant quantities. We are nonetheless at liberty to assume that they are compatible with frame-indifference as well, as though they were used to compute physical quantities, see Proposition \ref{ca marche quand meme}. We use the notation $\Sym^+_3$ for the set of $3\times3$ symmetric positive definite matrices.

\begin{proposition}\label{AIM potentiel de dissipation}The free energy constitutive law $\widehat A_m$ is compatible with the principle of frame-indifference if and only if for all $F\in \M_3^+$, $\Theta\in \R_+^*$ and $R\in \SO(3)$
\begin{equation}\label{AIM energie libre}
\widehat A_m(RF,\Theta)=\widehat A_m(F,\Theta)
\end{equation}
in which case there exists a function $\widetilde A_m\colon\Sym^+_3\times\R_+^*\to\R$ such that
\begin{equation}\label{AIM energie libre bis}
\widehat A_m(F,\Theta)=\widetilde A_m(F^TF,\Theta).
\end{equation}
A dissipation potential constitutive law $\widehat P_{\mathrm{diss}}$ is compatible with the principle of frame-indifference if and only if for all $F\in \M_3^+$, $H\in \M_3$, $\Theta\in \R_+^*$ and $R\in \SO(3)$
\begin{equation}\label{AIM dissipation 1}
\widehat P_{\mathrm{diss}}(RF,RH,\Theta)=\widehat P_{\mathrm{diss}}(F,H,\Theta),
\end{equation}
and
\begin{equation}\label{AIM dissipation 2}
\widehat P_{\mathrm{diss}}(F,H,\Theta)=\widehat P_{\mathrm{diss}}\bigl(F,\Sym(HF^{-1})F,\Theta\bigr).
\end{equation}
\end{proposition}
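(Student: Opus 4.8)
The plan is to handle both claims by the scalar specialization of the strategy already used for the first Piol\`a-Kirchhoff stress in Proposition \ref{AIM lagrangien visqueux}. The one conceptual point to settle at the outset is the meaning of frame-indifference for $\widehat P_{\mathrm{diss}}$: since a dissipation potential is not itself a physical observable, I would interpret compatibility with frame-indifference as the requirement that, were $\widehat P_{\mathrm{diss}}$ used to compute a scalar physical quantity, that quantity would be invariant in the sense of \eqref{AIM A lag}, exactly as for the free energy. With that reading, both halves of the proposition reduce to the requirement $P^{*}=P$ for a scalar $P$ built from the constitutive law.

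For the free energy I would first recall that, by Proposition \ref{restrictions thermo avec variables internes} together with the standing assumptions of the section (no internal variables, $\Theta$ a passive parameter), the constitutive law is of the form $\widehat A_m(F,\Theta)$. For necessity of \eqref{AIM energie libre} I take a constant rotation $R(t)=R$, $a(t)=0$, and an arbitrary deformation with $F(X_0,0)=F$; the scalar invariance $A_m^{*}=A_m$ together with $F^{*}=RF$ gives $\widehat A_m(RF,\Theta)=\widehat A_m(F,\Theta)$. Sufficiency is immediate, since under any frame change $F^{*}=R(t)F$ and \eqref{AIM energie libre} then forces $\widehat A_m(F^{*},\Theta)=\widehat A_m(F,\Theta)$. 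For the representation \eqref{AIM energie libre bis} I would use the polar decomposition $F=RU$ with $R\in\SO(3)$ and $U=(F^TF)^{1/2}\in\Sym_3^+$; applying \eqref{AIM energie libre} with $R^T$ yields $\widehat A_m(F,\Theta)=\widehat A_m(U,\Theta)$, and since $U$ is a single-valued function of $C=F^TF$ on $\Sym_3^+$, setting $\widetilde A_m(C,\Theta)=\widehat A_m(C^{1/2},\Theta)$ gives the claim.

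For the dissipation potential I would reproduce the two test-deformation choices from the proof of Proposition \ref{AIM lagrangien visqueux}, replacing the covariant rule $T_{\mathrm{R}}^{*}=RT_{\mathrm{R}}$ by the invariance $P_{\mathrm{diss}}^{*}=P_{\mathrm{diss}}$. A constant rotation together with $\phi(X,t)=FX+tHX$ gives \eqref{AIM dissipation 1} at $t=0$; the choice $R(t)=e^{tW}$ with $W$ skew-symmetric gives $H^{*}(X,0)=H+WF$, hence $\widehat P_{\mathrm{diss}}(F,H+WF,\Theta)=\widehat P_{\mathrm{diss}}(F,H,\Theta)$, and taking $W=-\Skew(HF^{-1})$ produces \eqref{AIM dissipation 2}. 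For the converse I would invoke the identity $\Sym(H^{*}(F^{*})^{-1})=R\Sym(HF^{-1})R^T$ already established in that earlier proof (which rests on $R'R^T$ being skew-symmetric), so that $\Sym(H^{*}(F^{*})^{-1})F^{*}=R\Sym(HF^{-1})F$, and then chain \eqref{AIM dissipation 2}, \eqref{AIM dissipation 1}, \eqref{AIM dissipation 2} to obtain $\widehat P_{\mathrm{diss}}(F^{*},H^{*},\Theta)=\widehat P_{\mathrm{diss}}(F,H,\Theta)$, which is the desired invariance.

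I expect no serious obstacle here, as both arguments are scalar instances of machinery already in place. The only points requiring care are the initial interpretation of frame-indifference for the non-physical quantity $\widehat P_{\mathrm{diss}}$, and, in the representation step, checking that $U=(F^TF)^{1/2}$ is genuinely a well-defined function of $C$ on $\Sym_3^+$ so that $\widetilde A_m$ is unambiguous.
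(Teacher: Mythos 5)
Your proposal is correct and follows essentially the same route as the paper, which simply cites the ``classical'' polar-decomposition argument for \eqref{AIM energie libre}--\eqref{AIM energie libre bis} and states that \eqref{AIM dissipation 1}--\eqref{AIM dissipation 2} follow the lines of the proof of Proposition \ref{AIM lagrangien visqueux}; you have merely written out both of those steps explicitly, including the correct reading of frame-indifference for the non-physical quantity $\widehat P_{\mathrm{diss}}$ as scalar invariance. No gaps.
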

\begin{proof}The proof of \eqref{AIM energie libre}-\eqref{AIM energie libre bis} is classical in nonlinear elasticity. The proof of \eqref{AIM dissipation 1}-\eqref{AIM dissipation 2} follows the lines of the proof of Proposition \ref{AIM lagrangien visqueux}.
\end{proof}

There does not seem to be a nice representation formula such as \eqref{AIM energie libre bis} for a frame-indifferent dissipation potential.

In addition to obtaining frame-indifferent constitutive laws for the first Piolà-Kirchhoff stress, there is a bonus to assuming frame-indifference for the dissipation potential.  

\begin{proposition}\label{invariance bonus inattendue}If the dissipation potential $\widehat P_{\mathrm{diss}}$ is frame-indifferent, so is $\widehat T_{\mathrm{R}}$. In addition, the symmetry condition $F\widehat T_{\mathrm{R}}^T=\widehat T_{\mathrm{R}}F^T$ is  automatically satisfied.
\end{proposition}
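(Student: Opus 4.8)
I want to prove two things about the stress $\widehat T_{\mathrm R}$ derived from a frame-indifferent dissipation potential via Proposition \ref{potentiel de dissipation cas visqueux et var internes}: first that $\widehat T_{\mathrm R}$ is itself frame-indifferent, and second that the symmetry relation $F\widehat T_{\mathrm R}^T=\widehat T_{\mathrm R}F^T$ holds automatically. Let me think about what's available.

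We're in the no-internal-variable case, so the stress decomposes as $\widehat T_{\mathrm R}=\widehat T_{\mathrm{Re}}+\widehat T_{\mathrm{Rd}}$. The elastic part is $\widehat T_{\mathrm{Re}}=\Rho\,\partial\widehat A_m/\partial F$ and the dissipative part is $\widehat T_{\mathrm{Rd}}=\partial\widehat P_{\mathrm{diss}}/\partial H$ (dropping the $\Xi$-related last argument since there's no $\Xi$). The free energy $\widehat A_m$ is presumably also assumed frame-indifferent here (it's a scalar constitutive law, equation \eqref{AIM A lag}), so I get to use both the representation \eqref{AIM energie libre bis} and the dissipation-potential invariance \eqref{AIM dissipation 1}–\eqref{AIM dissipation 2}.

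For frame-indifference of $\widehat T_{\mathrm R}$, I want to verify conditions \eqref{AIM visqueux 1} and \eqref{AIM visqueux 2} of Proposition \ref{AIM lagrangien visqueux}, and it's cleanest to check them for each summand. For the elastic part I differentiate the relation $\widehat A_m(RF,\Theta)=\widehat A_m(F,\Theta)$ — which holds for fixed $R\in\SO(3)$ — with respect to $F$; the chain rule gives $R^T\,\partial\widehat A_m/\partial F(RF)=\partial\widehat A_m/\partial F(F)$, i.e. $\widehat T_{\mathrm{Re}}(RF)=R\,\widehat T_{\mathrm{Re}}(F)$, which is \eqref{AIM visqueux 1} for the elastic part; condition \eqref{AIM visqueux 2} is vacuous for $\widehat T_{\mathrm{Re}}$ since it does not depend on $H$. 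For the dissipative part, I differentiate \eqref{AIM dissipation 1} with respect to $H$. This is the step that needs a little care: differentiating $\widehat P_{\mathrm{diss}}(RF,RH,\Theta)=\widehat P_{\mathrm{diss}}(F,H,\Theta)$ in the direction $K\in\M_3$ produces $\partial\widehat P_{\mathrm{diss}}/\partial H(RF,RH):RK=\partial\widehat P_{\mathrm{diss}}/\partial H(F,H):K$, and since $A:RK=(R^TA):K$ for all $K$, I read off $R^T\,\widehat T_{\mathrm{Rd}}(RF,RH)=\widehat T_{\mathrm{Rd}}(F,H)$, which is exactly \eqref{AIM visqueux 1} for $\widehat T_{\mathrm{Rd}}$. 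Condition \eqref{AIM visqueux 2} for $\widehat T_{\mathrm{Rd}}$ should follow by the same differentiation applied to \eqref{AIM dissipation 2}, being slightly more delicate because the right-hand argument $\Sym(HF^{-1})F$ depends on $H$; I expect the antisymmetric part of the perturbation to drop out precisely because $\Sym(HF^{-1})$ already symmetrizes, mirroring the mechanism in the proof of Proposition \ref{AIM lagrangien visqueux}. Once both summands satisfy \eqref{AIM visqueux 1}–\eqref{AIM visqueux 2}, so does their sum, and Proposition \ref{AIM lagrangien visqueux} delivers frame-indifference of $\widehat T_{\mathrm R}$.

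For the symmetry relation $F\widehat T_{\mathrm R}^T=\widehat T_{\mathrm R}F^T$, equivalently that $\widehat T_{\mathrm R}F^T$ is symmetric, I want to exploit the infinitesimal rotational invariance encoded in \eqref{AIM visqueux 2} (or directly in the $R(t)=e^{tW}$ argument). The standard trick: differentiate the invariance relation $\widehat T_{\mathrm R}(F,H)=\widehat T_{\mathrm R}(F+\varepsilon WF+\cdots, H+\varepsilon WH+\cdots)$ coming from superimposing an infinitesimal rotation $R=e^{\varepsilon W}$ (with $W$ skew) and extract, at $\varepsilon=0$, that $\partial\widehat T_{\mathrm R}/\partial F:(WF)+\partial\widehat T_{\mathrm R}/\partial H:(WH)$ contracted appropriately forces $\widehat T_{\mathrm R}F^T$ to be symmetric for every skew $W$. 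Concretely, I expect to write $W\cdot(\widehat T_{\mathrm R}F^T)=0$ for all skew $W$, i.e. $\widehat T_{\mathrm R}F^T\in\Sym_3$. I anticipate this symmetry step to be the main obstacle, since it is where frame-indifference must be used in its full differential strength rather than just summand by summand; the cleanest route is probably to invoke \eqref{AIM visqueux 1} in the form $\widehat T_{\mathrm R}(e^{\varepsilon W}F, e^{\varepsilon W}H)=e^{\varepsilon W}\widehat T_{\mathrm R}(F,H)$ together with \eqref{AIM visqueux 2} and differentiate at $\varepsilon=0$, matching the contribution of $R'R^T$ being skew exactly as in the earlier proof, to conclude symmetry of $\widehat T_{\mathrm R}F^T$.
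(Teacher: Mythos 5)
There is a genuine gap, concentrated in the two harder halves of the statement. The parts you do carry out are fine and match the paper: the elastic part $\widehat T_{\mathrm{Re}}=\Rho\,\partial\widehat A_m/\partial F$ is handled the standard way, and differentiating \eqref{AIM dissipation 1} with respect to $H$ to get \eqref{AIM visqueux 1} for $\widehat T_{\mathrm{Rd}}$ is exactly the paper's step. But your route to the symmetry $F\widehat T_{\mathrm{R}}^T=\widehat T_{\mathrm{R}}F^T$ cannot work. You propose to differentiate the frame-indifference of the \emph{stress} -- the relation $\widehat T_{\mathrm R}(e^{\varepsilon W}F,e^{\varepsilon W}H)=e^{\varepsilon W}\widehat T_{\mathrm R}(F,H)$ together with \eqref{AIM visqueux 2} -- at $\varepsilon=0$. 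That uses only \eqref{AIM visqueux 1} and \eqref{AIM visqueux 2}, and these do \emph{not} imply symmetry of $\widehat T_{\mathrm{Rd}}F^T$: the paper's own counterexample $\widehat P_{\mathrm{diss}}(F,H)=(FW_0):H$ with $W_0$ skew gives $\widehat T_{\mathrm{Rd}}(F,H)=FW_0$, which satisfies both \eqref{AIM visqueux 1} and \eqref{AIM visqueux 2}, yet $\widehat T_{\mathrm{Rd}}(F,H)F^T=FW_0F^T$ is nonzero and skew-symmetric. The symmetry must instead be extracted from the frame-indifference of the \emph{potential}: writing $H=MF+WF$ with $M=\Sym(HF^{-1})$ and $W$ skew, condition \eqref{AIM dissipation 2} becomes $\widehat P_{\mathrm{diss}}(F,MF+WF)=\widehat P_{\mathrm{diss}}(F,MF)$, and differentiating this identity with respect to $W$ yields $\frac{\partial\widehat P_{\mathrm{diss}}}{\partial H}(F,H)F^T:Z=0$ for every skew $Z$, i.e.\ $\widehat T_{\mathrm{Rd}}(F,H)F^T\in\Sym_3$. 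This is the step your sketch never touches.

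Relatedly, your verification of \eqref{AIM visqueux 2} for $\widehat T_{\mathrm{Rd}}$ (``I expect the antisymmetric part of the perturbation to drop out'') hides a real dependency. Differentiating $\widehat P_{\mathrm{diss}}(F,MF+WF)=\widehat P_{\mathrm{diss}}(F,MF)$ with respect to $M$ only tells you that $\bigl(\widehat T_{\mathrm{Rd}}(F,MF+WF)-\widehat T_{\mathrm{Rd}}(F,MF)\bigr)F^T$ is orthogonal to every symmetric matrix; to conclude that it vanishes you must already know that each term times $F^T$ is symmetric. So the symmetry statement has to be proved \emph{first} and then fed into the proof of \eqref{AIM visqueux 2} -- the two conclusions of the proposition are not independent, contrary to how your outline treats them. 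The correct order is: (a) differentiate the potential's invariance in $W$ to get symmetry of $\widehat T_{\mathrm{Rd}}F^T$; (b) differentiate it in $M$ and use (a) to get \eqref{AIM visqueux 2}; (c) combine with \eqref{AIM visqueux 1} and the elastic part to conclude via Proposition \ref{AIM lagrangien visqueux}.
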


\begin{proof}By definition of the thermoelastic and dissipative parts of the stress, we have
$$\widehat T_{\mathrm{R}}(F,H)=\Rho\frac{\partial \widehat A_m}{\partial F}(F)+\frac{\partial \widehat P_{\mathrm{diss}}}{\partial H}(F,H).$$
Since $\widehat A_m$ is frame-indifferent, so is $\widehat T_{\mathrm{Re}}(F)=\Rho\frac{\partial \widehat A_m}{\partial F}(F)$, as is well known. Morevover,  $F\widehat T_{\mathrm{Re}}(F)^T=\widehat T_{\mathrm{Re}}(F)F^T$ because of \eqref{AIM energie libre bis}. 

We just need to consider the dissipative part  $\widehat T_{\mathrm{\mathrm{Rd}}}(F,H)=\frac{\partial \widehat P_{\mathrm{diss}}}{\partial H}(F,H)$. Differentiating \eqref{AIM dissipation 1} with respect to $H$, we obtain
$$\widehat T_{\mathrm{Rd}}(RF,RH)=R\widehat T_{\mathrm{Rd}}(F,H),$$
for all $F$, $H$ and $R$, \emph{i.e.}, \eqref{AIM visqueux 1}.

Setting now $M=\Sym(HF^{-1})$, we have $H=MF+WF$ with $W$ skew-symmetric. Relation \eqref{AIM dissipation 2} may be rewritten as
\begin{equation}\label{invariance en passant}
\widehat P_{\mathrm{diss}}(F,MF+WF)=\widehat P_{\mathrm{diss}}(F,MF)
\end{equation}
for all symmetric $M$ and skew-symmetric $W$.
Differentiating relation \eqref{invariance en passant} with respect to $W$, we obtain
$$\frac{\partial\widehat P_{\mathrm{diss}}}{\partial H}(F,H)F^T:Z=0$$
for all skew-symmetric $Z$, so that $\widehat T_{\mathrm{Rd}}(F,H)F^T$ is symmetric or $F\widehat T_{\mathrm{Rd}}(F)^T=\widehat T_{\mathrm{Rd}}(F)F^T$. Adding the elastic and dissipative parts, we obtain the symmetry $F\widehat T_{\mathrm{R}}^T=\widehat T_{\mathrm{R}}F^T$.

We next differentiate \eqref{invariance en passant} with respect to $M$, and obtain that
$$\Bigl(\frac{\partial\widehat P_{\mathrm{diss}}}{\partial H}(F,MF+WF)F^T-\frac{\partial\widehat P_{\mathrm{diss}}}{\partial H}(F,MF)F^T\Bigr) :N=0$$
for all symmetric $N$. The term between parentheses is a difference of symmetric matrices, it therefore vanishes, so that 
$$\widehat T_{\mathrm{Rd}}(F,H)=\widehat T_{\mathrm{Rd}}(F,\Sym(HF^{-1})F),$$
that is to say \eqref{AIM visqueux 2}.
\end{proof}

\begin{remark}It must be noticed that without the assumption of a frame-indifferent dissipation potential, the symmetry of the Cauchy stress is a constitutive restriction that must be independently imposed on the constitutive law $\widehat T_{\mathrm{Rd}}$, since the thermoelastic part $\widehat T_{\mathrm{Re}}$ of the stress already has the required symmetry by the frame-indifference of $\widehat A_m$. 
%
\end{remark}

We have however the rather remarkable following result.
\begin{proposition}\label{incroyable}
Let us assume that $\widehat T_{\mathrm{Rd}}$ is frame-indifferent and derives from a dissipation potential $\widehat P_{\mathrm{diss}}$. In addition, we assume that the mechanical part of the Clausius-Planck inequality is satisfied. Then
$F\widehat T_{\mathrm{Rd}}^T=\widehat T_{\mathrm{Rd}}F^T$.
\end{proposition}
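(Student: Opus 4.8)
The plan is to combine the frame-indifference of $\widehat T_{\mathrm{Rd}}$ with the mechanical Clausius--Planck inequality, exploiting the freedom contained in the skew-symmetric part of the velocity gradient; the dissipation potential enters only to guarantee the required nonnegativity. As throughout this section, $\Theta$ is treated as a parameter and omitted. The starting observation is that frame-indifference of $\widehat T_{\mathrm{Rd}}$ means, by the analogue of \eqref{AIM visqueux 1}--\eqref{AIM visqueux 2} for $\widehat T_{\mathrm{Rd}}$, that in particular $\widehat T_{\mathrm{Rd}}(F,H)=\widehat T_{\mathrm{Rd}}\bigl(F,\Sym(HF^{-1})F\bigr)$. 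Hence $\widehat T_{\mathrm{Rd}}(F,H)$ depends on $H$ only through the symmetric tensor $M:=\Sym(HF^{-1})$. First I would fix $F\in\M_3^+$ and a symmetric $M$, set $T_M:=\widehat T_{\mathrm{Rd}}(F,MF)$, and note that for every skew-symmetric $W$ the choice $H=(M+W)F$ satisfies $\Sym(HF^{-1})=M$, so that $\widehat T_{\mathrm{Rd}}(F,H)=T_M$ for all these $H$.

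Next I would write out the mechanical Clausius--Planck inequality. Since there are no internal variables $\Xi$ here, the constitutive law \eqref{loi dissipation interne} reduces to $\widehat D_{\mathrm{int}}=\widehat T_{\mathrm{Rd}}:H$, so the hypothesis reads $\widehat T_{\mathrm{Rd}}(F,H):H\ge0$. Evaluating this at $H=(M+W)F$ and using $\widehat T_{\mathrm{Rd}}(F,H)=T_M$ gives
\[
(T_MF^T):(M+W)\ge0
\]
for every skew-symmetric $W$. Decomposing $T_MF^T=S+A$ into its symmetric part $S$ and skew-symmetric part $A$, and using that the Frobenius pairing of a symmetric matrix with a skew-symmetric one vanishes, this inequality becomes $S:M+A:W\ge0$ for all skew-symmetric $W$, where crucially $S$ and $A$ do not depend on $W$.

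The conclusion then follows from a boundedness argument. The map $W\mapsto S:M+A:W$ is affine on the unbounded linear space $\Skew_3$ and is bounded below by $0$; an affine function that is bounded below must have vanishing linear part, so $A:W=0$ for all skew-symmetric $W$, whence $A=0$. Concretely, were $A\neq0$, the choice $W=-tA$ with $t\to+\infty$ would drive the left-hand side to $-\infty$. Therefore $T_MF^T=S$ is symmetric, i.e. $\widehat T_{\mathrm{Rd}}(F,MF)F^T=F\widehat T_{\mathrm{Rd}}(F,MF)^T$. Finally, since any $H$ can be written through $M=\Sym(HF^{-1})$ with $\widehat T_{\mathrm{Rd}}(F,H)=\widehat T_{\mathrm{Rd}}(F,MF)$, the symmetry holds for arbitrary $H$, which is the asserted identity $F\widehat T_{\mathrm{Rd}}^T=\widehat T_{\mathrm{Rd}}F^T$.

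I expect the main difficulty to be conceptual rather than computational: recognizing that frame-indifference collapses the $H$-dependence of $\widehat T_{\mathrm{Rd}}$ onto the symmetric combination $\Sym(HF^{-1})$, so that the dissipation inequality becomes affine in the leftover skew-symmetric freedom, at which point nonnegativity does the rest. This is precisely the relaxation of the argument behind Proposition \ref{invariance bonus inattendue}: there, frame-indifference of the potential made $\widehat P_{\mathrm{diss}}$ itself independent of the skew part, cf. \eqref{invariance en passant}, whereas here only its gradient $\widehat T_{\mathrm{Rd}}=\partial_H\widehat P_{\mathrm{diss}}$ is, so $\widehat P_{\mathrm{diss}}$ is merely affine in that direction and one needs the mechanical Clausius--Planck inequality to annihilate the affine slope and recover the symmetry of the Cauchy stress.
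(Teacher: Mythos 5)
Your proof is correct, and it takes a genuinely different route from the paper's. The paper first exploits the gradient structure: writing $\widehat\Sigma(F,L)=\widehat T_{\mathrm{Rd}}(F,LF)F^T=\partial\check P_{\mathrm{diss}}/\partial L$ and combining the Schwarz symmetry $\partial\widehat\Sigma_{ij}/\partial L_{kl}=\partial\widehat\Sigma_{kl}/\partial L_{ij}$ with the invariance $\widehat\Sigma(F,L)=\widehat\Sigma(F,\Sym(L))$, it shows that $\Skew(\widehat\Sigma)$ is a function of $F$ alone, and only then invokes the Clausius--Planck inequality at purely skew $HF^{-1}$ to kill that function. You bypass the potential entirely: for each fixed $M=\Sym(HF^{-1})$, condition \eqref{AIM visqueux 2} freezes $\widehat T_{\mathrm{Rd}}$ along the whole affine family $H=(M+W)F$, so the dissipation $(T_MF^T):(M+W)=S:M+A:W$ is affine in the unbounded skew variable $W$, and nonnegativity forces $A=\Skew(T_MF^T)=0$ pointwise in $M$. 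Your argument is shorter and in fact proves a slightly stronger statement, since the hypothesis that $\widehat T_{\mathrm{Rd}}$ derives from a dissipation potential is never used --- frame-indifference in the form \eqref{AIM visqueux 2} together with $\widehat T_{\mathrm{Rd}}(F,H):H\ge0$ suffices. What the paper's longer route buys is the intermediate structural fact that, for a potential-derived frame-indifferent stress, $\Skew\bigl(\widehat T_{\mathrm{Rd}}(F,H)F^T\bigr)$ depends on $F$ only; this is exactly the quantity that is the nonzero constant $FW_0F^T$ in the counter-example of the remark that follows, so the paper's presentation makes that remark more transparent. Your observation in the closing paragraph correctly identifies the relation to Proposition \ref{ca marche quand meme} as well: there the same affine-in-$W$ structure reappears via $\widehat P_{\mathrm{diss}}(F,H)=\widehat P_{\mathrm{diss}}(F,MF)+\bigl(\widehat T_{\mathrm{Rd}}(F,H)F^T\bigr):W$.
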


\begin{proof}Let us introduce some convenient notation. We will use the variable $L=HF^{-1}\in\M_3$ together with $F\in\M_3^+$ and define
$$
\widehat \Sigma(F,L)=\widehat T_{\mathrm{Rd}}(F,LF)F^T,\quad
\check P_{\mathrm{diss}}(F,L)=\widehat P_{\mathrm{diss}}(F,LF).
$$
With this notation, it is fairly clear that  \eqref{AIM visqueux 2} is equivalent to 
\begin{equation}\label{une symetrie de plus}
\widehat \Sigma(F,L)=\widehat \Sigma(F,\Sym(L))
\end{equation}
and that 
$$
\widehat \Sigma(F,L)=\frac{\partial\check P_{\mathrm{diss}} }{\partial L}(F,L),
$$
which is thus still a gradient. In particular, for all indices $i,j,k$, and $l$, we have 
$$
\frac{\partial \widehat \Sigma_{ij}}{\partial L_{kl}}=\frac{\partial \widehat \Sigma_{kl}}{\partial L_{ij}}.
$$

We first show that $\Skew(\widehat\Sigma)$ does not depend on $L$, so is a function of $F$ only. For this, we differentiate \eqref{une symetrie de plus} with respect to $L$, which yields (without writing the $(F,L)$ arguments)
$$
\frac{\partial \widehat \Sigma_{ij}}{\partial L_{kl}}=\frac12\Bigl(\frac{\partial \widehat \Sigma_{ij}}{\partial L_{kl}}+\frac{\partial \widehat \Sigma_{ij}}{\partial L_{lk}}\Bigr)=\frac{\partial \widehat \Sigma_{ij}}{\partial L_{lk}}.
$$
Using the gradient remark above, we see that in fact
$$
\frac{\partial \widehat \Sigma_{ij}}{\partial L_{kl}}=\frac12\Bigl(\frac{\partial \widehat \Sigma_{kl}}{\partial L_{ij}}+\frac{\partial \widehat \Sigma_{lk}}{\partial L_{ij}}\Bigr).
$$
Therefore
\begin{align*}
\frac{\partial \bigl(\widehat \Sigma_{ij}- \widehat \Sigma_{ji}\bigr)}{\partial L_{kl}}&=\frac12\Bigl(\frac{\partial \widehat \Sigma_{kl}}{\partial L_{ij}}+\frac{\partial \widehat \Sigma_{lk}}{\partial L_{ij}}
-\frac{\partial \widehat \Sigma_{kl}}{\partial L_{ji}}-\frac{\partial \widehat \Sigma_{lk}}{\partial L_{ji}}\Bigr)\\
&=\frac12\Bigl(\frac{\partial \widehat \Sigma_{kl}}{\partial L_{ij}}-\frac{\partial \widehat \Sigma_{kl}}{\partial L_{ji}}+\frac{\partial \widehat \Sigma_{lk}}{\partial L_{ij}}
-\frac{\partial \widehat \Sigma_{lk}}{\partial L_{ji}}\Bigr)=0.
\end{align*}
We thus have shown that 
$$\Skew(\widehat\Sigma)(F,L)=\widehat W(F)$$
for some skew-symmetric valued function $\widehat W$.

Going back to the original variables, it follows that 
$$
\widehat T_{\mathrm{Rd}}(F,H)=\Sym\bigl(\widehat \Sigma(F,L)\bigr)F^{-T}+\widehat W(F)F^{-T}.
$$
The mechanical part of the Clausius-Planck inequalities then reads
$$
0\le \widehat T_{\mathrm{Rd}}(F,H):H=\Sym\bigl(\widehat \Sigma(F,L)\bigr):(HF^{-1})+\widehat W(F):(HF^{-1}).
$$
In particular, we can choose $HF^{-1}$ to be any skew-symmetric matrix $Z$, so that 
$$0\le \widehat W(F):Z\text{ and thus }\widehat W(F)=0.$$
This is exactly saying that $\widehat T_{\mathrm{Rd}}(F,H)F^T$ is symmetric.
\end{proof}

\begin{remark}Quite surprisingly, the Clausius-Planck inequalities are crucial here. Indeed,
let $W_0$ be a nonzero, skew-symmetric matrix. We take  $\widehat P_{\mathrm{diss}}(F,H)=(FW_0):H$. With this choice, it follows that $\widehat T_{\mathrm{Rd}}(F,H)=FW_0$ is  a nonzero constitutive law for the dissipative stress which satisfies \eqref{AIM visqueux 1} and \eqref{AIM visqueux 2}, hence is frame-indifferent. However, 
$$\widehat T_{\mathrm{Rd}}(F,H)F^T=FW_0F^T=-FW_0^TF^T=-F\widehat T_{\mathrm{Rd}}(F,H)^T
$$
so that the Cauchy stress tensor is nonzero and skew-symmetric, in particular, it is not symmetric. Consequently, this dissipation potential is not frame-indifferent, nor can it be replaced by another frame-indifferent potential, by proposition \ref{invariance bonus inattendue}. Naturally, this counter-example does not satisfy the mechanical part of the Clausius-Planck inequalities.
\end{remark}


We now are in a position to nicely round up the questions of frame-indifference of the dissipation potential and symmetry of the Cauchy stress.

\begin{proposition}\label{ca marche quand meme}Let us assume that $\widehat T_{\mathrm{Rd}}$ is frame-indifferent and derives from a dissipation potential $\widehat P_{\mathrm{diss}}$ that is nonnegative, $0$ at $H=0$ and convex with respect to $H$. Then, the potential
$$
\overline P_{\mathrm{diss}}(F,H,\Theta)=\int_{\SO(3)}\widehat P_{\mathrm{diss}}(RF,RH,\Theta)\,d\mu,
$$
where $\mu$ is the left Haar measure on $\SO(3)$, is nonnegative, $0$ at $H=0$, convex with respect to $H$, yields the same dissipative stress $\widehat T_{\mathrm{Rd}}$ as $\widehat P_{\mathrm{diss}}$ and is frame-indifferent.
\end{proposition}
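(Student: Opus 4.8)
The plan is to verify the five asserted properties of $\overline P_{\mathrm{diss}}$ one by one, most of them being inherited pointwise from $\widehat P_{\mathrm{diss}}$ and then preserved under averaging; throughout I take $\mu$ normalized so that $\mu(\SO(3))=1$, which is legitimate since $\SO(3)$ is compact. Nonnegativity is immediate, as $\overline P_{\mathrm{diss}}$ is an average of the nonnegative quantities $\widehat P_{\mathrm{diss}}(RF,RH,\Theta)$. At $H=0$ the integrand equals $\widehat P_{\mathrm{diss}}(RF,0,\Theta)=0$ for every $R$, hence $\overline P_{\mathrm{diss}}(F,0,\Theta)=0$. Convexity in $H$ follows because, for each fixed $R$, the map $H\mapsto\widehat P_{\mathrm{diss}}(RF,RH,\Theta)$ is the composition of the convex function $\widehat P_{\mathrm{diss}}(RF,\cdot,\Theta)$ with the linear map $H\mapsto RH$, hence convex, and an average of convex functions against a positive measure is convex.

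The key point is that $\overline P_{\mathrm{diss}}$ produces the \emph{same} dissipative stress. Here I would differentiate under the integral sign and apply the chain rule: the derivative in $H$ of $H\mapsto\widehat P_{\mathrm{diss}}(RF,RH,\Theta)$ in a direction $K\in\M_3$ is $\frac{\partial\widehat P_{\mathrm{diss}}}{\partial H}(RF,RH,\Theta):(RK)=\widehat T_{\mathrm{Rd}}(RF,RH,\Theta):(RK)$. Since $\widehat T_{\mathrm{Rd}}$ is frame-indifferent, relation \eqref{AIM visqueux 1} gives $\widehat T_{\mathrm{Rd}}(RF,RH,\Theta)=R\widehat T_{\mathrm{Rd}}(F,H,\Theta)$, and as $R\in\SO(3)$ preserves the Frobenius product we get $\bigl(R\widehat T_{\mathrm{Rd}}(F,H,\Theta)\bigr):(RK)=\widehat T_{\mathrm{Rd}}(F,H,\Theta):K$, which is independent of $R$. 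Thus the $H$-gradient of the integrand is the constant matrix $\widehat T_{\mathrm{Rd}}(F,H,\Theta)$, and integrating against the probability measure $\mu$ yields $\frac{\partial\overline P_{\mathrm{diss}}}{\partial H}(F,H,\Theta)=\widehat T_{\mathrm{Rd}}(F,H,\Theta)$.

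It then remains to check the two frame-indifference conditions \eqref{AIM dissipation 1} and \eqref{AIM dissipation 2} of Proposition \ref{AIM potentiel de dissipation} for $\overline P_{\mathrm{diss}}$. For \eqref{AIM dissipation 1}, I would write $\overline P_{\mathrm{diss}}(SF,SH,\Theta)=\int_{\SO(3)}\widehat P_{\mathrm{diss}}(RSF,RSH,\Theta)\,d\mu(R)$ and perform the change of variable $R'=RS$; because $\SO(3)$ is compact and therefore unimodular, its left Haar measure is also right-invariant, so $d\mu(R)=d\mu(R')$ and the integral is unchanged, giving $\overline P_{\mathrm{diss}}(SF,SH,\Theta)=\overline P_{\mathrm{diss}}(F,H,\Theta)$. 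For \eqref{AIM dissipation 2}, the identity $(RH)(RF)^{-1}=RHF^{-1}R^T$ yields $\Sym\bigl((RH)(RF)^{-1}\bigr)(RF)=R\Sym(HF^{-1})R^T\,RF=R\,\Sym(HF^{-1})F$; hence each integrand of $\overline P_{\mathrm{diss}}(F,\Sym(HF^{-1})F,\Theta)$ equals $\widehat P_{\mathrm{diss}}\bigl(RF,\Sym((RH)(RF)^{-1})(RF),\Theta\bigr)$, which by \eqref{AIM dissipation 2} applied to $\widehat P_{\mathrm{diss}}$ at the arguments $(RF,RH)$ reduces to $\widehat P_{\mathrm{diss}}(RF,RH,\Theta)$, and integrating recovers $\overline P_{\mathrm{diss}}(F,H,\Theta)$.

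The crux of the argument, and the step I expect to be the main obstacle, is the same-stress identity: it is precisely the frame-indifference of $\widehat T_{\mathrm{Rd}}$ via \eqref{AIM visqueux 1} that cancels the $R$-dependence of the integrand's $H$-gradient, so that averaging leaves the stress untouched. The two measure-theoretic points to handle carefully are the normalization of $\mu$ to total mass one, which is needed both for the same-stress identity and so that $\overline P_{\mathrm{diss}}$ reduces to $\widehat P_{\mathrm{diss}}$ on invariant arguments, and the right-invariance of the left Haar measure, which is valid only because $\SO(3)$ is compact, hence unimodular.
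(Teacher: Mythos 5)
Most of what you write is sound and coincides with the paper's own argument: nonnegativity, vanishing at $H=0$ and convexity pass to the average; the same-stress identity is exactly the paper's computation, with the $R$-dependence of the $H$-gradient of the integrand cancelled by \eqref{AIM visqueux 1} and the normalized Haar measure doing the rest; and your change-of-variables argument for \eqref{AIM dissipation 1}, resting on unimodularity of the compact group $\SO(3)$, is what the paper compresses into ``by construction''.

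There is, however, a genuine gap in your verification of \eqref{AIM dissipation 2}. You reduce the integrand $\widehat P_{\mathrm{diss}}\bigl(RF,\Sym((RH)(RF)^{-1})(RF),\Theta\bigr)$ to $\widehat P_{\mathrm{diss}}(RF,RH,\Theta)$ ``by \eqref{AIM dissipation 2} applied to $\widehat P_{\mathrm{diss}}$''. But \eqref{AIM dissipation 2} for $\widehat P_{\mathrm{diss}}$ is not among the hypotheses: only the frame-indifference of the derived stress $\widehat T_{\mathrm{Rd}}$, i.e.\ \eqref{AIM visqueux 1}--\eqref{AIM visqueux 2}, is assumed, and the entire point of the proposition is that the potential itself need not be frame-indifferent. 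The property you invoke is in fact true under the stated hypotheses, but establishing it is the nontrivial core of the paper's proof: setting $M=\Sym(HF^{-1})$, $W=\Skew(HF^{-1})$ and $H_s=MF+sWF$, the constancy $\widehat T_{\mathrm{Rd}}(F,H_s)=\widehat T_{\mathrm{Rd}}(F,H)$ from \eqref{AIM visqueux 2} gives, after integration in $s$, $\widehat P_{\mathrm{diss}}(F,H)=\widehat P_{\mathrm{diss}}(F,MF)+\bigl(\widehat T_{\mathrm{Rd}}(F,H)F^T\bigr):W$, and the last term vanishes only because $\widehat T_{\mathrm{Rd}}(F,H)F^T$ is symmetric. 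That symmetry is Proposition \ref{incroyable}, which requires the mechanical part of the Clausius--Planck inequality, itself supplied by the nonnegativity, convexity and vanishing at $H=0$ of $\widehat P_{\mathrm{diss}}$ through Proposition \ref{potentiel de dissipation cas visqueux et var internes}. Your argument never uses these three hypotheses, and without them the step is simply false: for $\widehat P_{\mathrm{diss}}(F,H)=(FW_0):H$ with $W_0$ skew-symmetric and nonzero (the paper's counterexample following Proposition \ref{incroyable}), $\widehat T_{\mathrm{Rd}}(F,H)=FW_0$ satisfies \eqref{AIM visqueux 1}--\eqref{AIM visqueux 2}, the average $\overline P_{\mathrm{diss}}$ coincides with $\widehat P_{\mathrm{diss}}$, and \eqref{AIM dissipation 2} fails. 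So this step must be filled in along the paper's lines rather than asserted.
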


\begin{proof}
It is clear that $\overline P_{\mathrm{diss}}$ is convex with respect to $H$, nonnegative and that $\overline P_{\mathrm{diss}}(F,0)=0$. Moreover,
\begin{multline*}
\frac{\partial\overline P_{\mathrm{diss}}}{\partial H}(F,H)=\int_{\SO(3)}\frac{\partial}{\partial H}\bigl(\widehat P_{\mathrm{diss}}(RF,RH)\bigr)\,d\mu\\
=\int_{\SO(3)}R^T\widehat T_{\mathrm{Rd}}(RF,RH)\bigr)\,d\mu=\widehat T_{\mathrm{Rd}}(F,H)
\end{multline*}
by \eqref{AIM visqueux 1}.

We now turn to frame-indifference. First of all, $\overline P_{\mathrm{diss}}$ satisfies \eqref{AIM dissipation 1} by construction. Secondly, we let $M=\Sym(HF^{-1})$ and $W=\Skew(HF^{-1})$, and for $s\in[0,1]$, we let $H_s=MF+sWF$. Then \eqref{AIM visqueux 2} implies that $\widehat T_{\mathrm{Rd}}(F,H_s)=\widehat T_{\mathrm{Rd}}(F,H)$. Now,
$$
\frac{d}{ds}\bigl(\widehat P_{\mathrm{diss}}(F,H_s)\bigr)=\widehat T_{\mathrm{Rd}}(F,H_s):(WF)=
\widehat T_{\mathrm{Rd}}(F,H):(WF),
$$
so that $\widehat P_{\mathrm{diss}}(F,H_s)=\widehat P_{\mathrm{diss}}(F,MF)+s\widehat T_{\mathrm{Rd}}(F,H):(WF)$. In particular, for $s=1$, we obtain
\begin{align*}
\widehat P_{\mathrm{diss}}(F,H)&=\widehat P_{\mathrm{diss}}(F,MF)+\widehat T_{\mathrm{Rd}}(F,H):(WF)\\
&=\widehat P_{\mathrm{diss}}(F,MF)+\bigl(\widehat T_{\mathrm{Rd}}(F,H)F^T\bigr):W.
\end{align*}
By construction, the mechanical part of the Clausius-Planck inequalities is satisfied, hence by Proposition \ref{incroyable},
 $\widehat T_{\mathrm{Rd}}(F,H)F^T$ is symmetric. Since $W$ is skew-symmetric, the last inner product vanishes. 
Therefore, for all $R\in\SO(3)$, 
$
\widehat P_{\mathrm{diss}}(RF,RH)=\widehat P_{\mathrm{diss}}(RF,RMF)
$
and integrating over $\SO(3)$, we obtain
$$
\overline P_{\mathrm{diss}}(F,H)=\int_{\SO(3)}\widehat P_{\mathrm{diss}}(RF,RMF)\,d\mu
=\overline P_{\mathrm{diss}}(F,MF).$$
Therefore, equation \eqref{AIM dissipation 2} is also satisfied and $\overline P_{\mathrm{diss}}$ is frame-indifferent.
\end{proof}
\begin{remark}As noted earlier and contrarily to what happens in hyperelasticity, a dissipation potential that yields a frame-indifferent  dissipative stress cannot be assumed to be itself frame-indifferent. This is even the case when the mechanical part of the Clausius-Planck inequalities is satisfied. Indeed, the non frame-indifferent potential $\widehat P_{\mathrm{diss}}(F,H)=F_{11}$ yields the perfectly frame-indifferent and Clausius-Planck compliant $\widehat T_{\mathrm{Rd}}(F,H)=0$.
\end{remark}

\subsection{Material symmetries}\label{section symetries}
We refer to the lucid discussion of material symmetries for simple materials in \cite{Truesdell-Noll}, which applies almost unchanged here when there are no internal variables, even though these authors allow symmetries with negative determinant, which we prefer not to consider as they are not physically feasible, see also \cite{Gurtin} in this respect. Material symmetries for models with internal variables must also be considered on a case-by-case basis. 

Material symmetry is a Lagrangian concept. Let us be given a reference configuration $\Omega$, $X_0\in\Omega$ and $S\in \M_3^+$. We consider another reference configuration $\Omega'$ such that $X_0\in \Omega'$ and a diffeomorphism $\Psi_S\colon\Omega'\to\Omega$ such that $\Psi_S(X)=X_0+S(X-X_0)+o(\|X-X_0\|)$. Then $S$ is said to be a material symmetry at $X_0$ with respect to the reference configuration $\Omega$ if, for any deformation $\phi$, the Cauchy stress tensor corresponding to any deformation equal to $\phi\circ\Psi_S$ (composition between understood in the spatial variable) in a neighborhood of $X_0$ is equal to that of $\phi$. This can be expressed as 
$$
T_{\mathrm{R}}^{\phi\circ\Psi_S}(X_0,t)=T_{\mathrm{R}}^\phi(X_0,t)\cof S,
$$
with self-explanatory notation.

It is well known that the set of such material symmetries is a subgroup of $\M_3^+$, that this subgroup should actually be a subgroup of $\SL(3)$, so that $\cof S=S^{-T}$, and that material symmetry groups corresponding to different reference configurations are conjugate to one another. These considerations yield the classical classification of materials as 
\begin{itemize}
\item
solid (at point $X_0$) if the material symmetry group is included in a conjugate of $\SO(3)$, or equivalently that there is a reference configuration in which this group is included in $\SO(3)$, 
\item
isotropic (at point $X_0$) if the material symmetry group contains a conjugate of $\SO(3)$, or equivalently that there is a reference configuration in which this group contains $\SO(3)$, 
\item
fluid (at point $X_0$) if the material symmetry group is equal to $\SL(3)$.
\end{itemize}
In the latter case, $\SL(3)$ is the kernel of the determinant mapping, hence a normal subgroup of $\M_3^+$, which is thus equal to all its conjugates.

It is a surprisingly little known result that $\SO(n)$ is a maximal subgroup of $\SL(n)$, or that $O(n)$ is a maximal subgroup of $\SL^\pm(n)$, without any topological assumption, such as maximal compact or maximal closed, see \cite{Brauer} for the case of $\mathrm{O}(n)$ and $\SO(n)$ and \cite{Noll} for $\mathrm{O}(n)$. Therefore, an isotropic material that is not solid is necessarily fluid. Such materials as liquid crystals are neither fluid, nor solid, nor isotropic, see~\cite{Wangsubfluids}.

In terms of constitutive laws, the fact that $S$ is a material symmetry is expressed as
$$
\begin{aligned}
\widehat T_{\mathrm{R}}(FS,HS,\Theta)&=\widehat T_{\mathrm{R}}(F,H,\Theta)\cof S,\\
\widehat A_m(FS,\Theta)&=\widehat A_m(F,\Theta),\\
\widehat P_{\mathrm{diss}}(FS,HS,\Theta)&=\widehat P_{\mathrm{diss}}(F,H,\Theta),
\end{aligned}
$$
for all $F,H,\Theta$. 

According to \cite{Smith}, see also \cite{Wang}, the Cauchy stress tensor of an isotropic thermo-visco-elastic material without internal variables can be expressed in terms of isotropic invariants and generating  functions, in the spirit of the Rivlin-Ericksen theorem in nonlinear elasticity. Letting $B=FF^T$ and $M=\Sym(HF^{-1})$, the isotropic invariants of $(B,M)$ are
\begin{multline*}
i(B,M)=\bigl(\tr B, \tr (B^2), \tr (B^3), \tr M, \tr (M^2),\tr(M^3),\\
\tr(BM),\tr(B^2M),\tr(BM^2),\tr(B^2M^2)\bigr),
\end{multline*}
and the generating functions are
$$
\mathcal{Z}(B,M)=\bigl(I, B, B^2,M,M^2,BM+MB, B^2M+MB^2,BM^2+M^2B\bigr).
$$
Then
$$
\check \sigma(F,H,\Theta)=\sum_{k=1}^8\beta_k\bigl(i(B,M)),\Theta\bigr)\mathcal{Z}_k(B,M),
$$
where $\check \sigma$ is defined in Proposition \ref{AIM eulerien visqueux} and the scalar-valued functions  $\beta_k$ are arbitrary.

\subsection{Frame-indifference for the heat flux and thermal symmetries}
The heat flux also obeys frame-indifference, in the form \eqref{AIM Q lag} for instance. 
To be compatible with this principle, when there are no internal variables, the constitutive law in Eulerian and Lagrangian forms, with Lagrangian variables, must satisfy
$$\widehat q_{\mathrm L}(RF,\Theta,G)=R\widehat q_{\mathrm L}(F,\Theta,G),\quad\widehat Q(RF,\Theta,G)=\widehat Q(F,\Theta,G),$$
for all $F\in \M_3^+$, $\Theta\in\R_+^*$ and $G\in\R^3$. This is equivalent to the fact that $\widehat Q$ depends on $F$ only via
$C=F^TF$,
$\widehat Q(F,\Theta,G)=\widetilde Q(C,\Theta,G)$.
This is the case of the classical Fourier law, which is thus frame-indifferent (as is also even more apparent on its Eulerian form).

We can also consider thermal symmetries $S\in \SL(3)$ in a manner similar to material symmetries, which are reflected in the constitutive law by 
$$\widehat q_{\mathrm L}(FS,\Theta,S^TG)=\widehat q_{\mathrm L}(F,\Theta,G),\quad\widehat Q(FS,\Theta,S^TG)=(\cof S)^T\widehat Q(F,\Theta,G).$$
There is no a priori reason for thermal symmetries to always be the same as material symmetries. 

It is fairly clear that the classical Fourier law is isotropic. It is actually thermally fluid, \emph{i.e.} with thermal symmetry group $\SL(3)$, thus also appropriate for material fluids.

The following result characterizes all frame-indifferent and isotropic heat flux constitutive laws. Earlier works, such as \cite{Truesdell-Noll}, \cite{Smith} or \cite{Wang}, assume $\mathrm{O}(3)$ symmetry instead of $\SO(3)$ symmetry, hence do not include all possible laws.

Let us introduce some notation. As usual,  $\iota(B)$ denotes the triple of principal invariants of $B=FF^T$. For $K\in \R^3$, we also set $\iota(B,K)=\bigl(\|K\|^2,\|BK\|^2,K\cdot BK,s(B,K)\bigr)\in \R_+^3\times\{-1,0,1\}$, with $s(B,K)=\mathrm{sign}(\prod_{i=1}^3K\cdot v_i)$ if $B$ has three distinct eigenvalues $0<\lambda_1<\lambda_2<\lambda_3$ and $(v_1,v_2,v_3)$ is a right-handed orthonormal basis of corresponding eigenvectors of $B$, with the convention that $\mathrm{sign}(0)=0$, and $s(B,K)=0$ otherwise. It  can be checked that $s$ is actually a function of the pair $(B,K)$ in spite of the multiplicity of possible basis choices. We use the $\wedge$ notation for vector products. 

Representation formulas for isotropic and frame-indifferent constitutive laws are better written for the Eulerian quantities expressed with Lagrangian variables, since both descriptions are involved at the same time.

\begin{proposition}\label{chaleur isotrope}The constitutive law of a frame-indifferent and isotropic heat flux takes the form
$$\widehat q_{\mathrm L}(F,\Theta,G)=\widehat q_{\mathrm{L,iso}}(B,\Theta, F^{-T}G),$$
where 
\begin{multline}\label{forme chaleur isotrope}
\widehat q_{\mathrm{L,iso}}(B,\Theta, K)=\alpha_0\bigl(\iota(B),\Theta,\iota(B,K)\bigr)K+\alpha_1\bigl(\iota(B),\Theta,\iota(B,K)\bigr)BK\\+\alpha_2\bigl(\iota(B),\Theta,\iota(B,K)\bigr)K\wedge BK
\end{multline}
and the functions  $\alpha_i\colon\R_+^7\times\{-1,0,1\}\to\R$ are arbitrary.
\end{proposition}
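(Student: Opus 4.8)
The plan is to reduce the statement to a representation theorem for $\SO(3)$-equivariant vector fields. Writing $B=FF^T$ and $K=F^{-T}G$, I would first combine the frame-indifference relation $\widehat q_{\mathrm L}(RF,\Theta,G)=R\widehat q_{\mathrm L}(F,\Theta,G)$ with the thermal isotropy relation $\widehat q_{\mathrm L}(FS,\Theta,S^TG)=\widehat q_{\mathrm L}(F,\Theta,G)$, valid for all $R,S\in\SO(3)$ after choosing a reference configuration in which the thermal symmetry group contains $\SO(3)$. Given a target pair $(B,K)$, any two admissible $(F_0,G_0)$ and $(F_1,G_1)$ with $F_iF_i^T=B$ and $F_i^{-T}G_i=K$ are related by $F_1=F_0S$, $G_1=S^TG_0$ with $S=F_0^{-1}F_1\in\SO(3)$ (indeed $SS^T=F_0^{-1}BF_0^{-T}=I$), so isotropy shows that $\widehat q_{\mathrm L}(F,\Theta,G)$ depends on $(F,G)$ only through $(B,\Theta,K)$; this defines $\widehat q_{\mathrm{L,iso}}(B,\Theta,K)$. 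Since $F\mapsto RF$ induces $B\mapsto RBR^T$ and $K\mapsto RK$, frame-indifference transfers to the equivariance $\widehat q_{\mathrm{L,iso}}(RBR^T,\Theta,RK)=R\,\widehat q_{\mathrm{L,iso}}(B,\Theta,K)$ for all $R\in\SO(3)$, with $\Theta$ a mere parameter henceforth suppressed.

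Next I would work on the open dense set where $B$ has three distinct eigenvalues and $K,BK$ are independent. There $\{K,BK,K\wedge BK\}$ is a basis of $\R^3$ (the third vector is orthogonal to the first two and nonzero), so there are unique scalars with $\widehat q_{\mathrm{L,iso}}=\alpha_0K+\alpha_1BK+\alpha_2(K\wedge BK)$, given by Cramer's rule as ratios of the determinants $\det[\,\widehat q_{\mathrm{L,iso}},BK,K\wedge BK\,]$, etc., over $\det[K,BK,K\wedge BK]$. The key point is that these $\alpha_i$ are $\SO(3)$-invariant: using $Ru\wedge Rv=R(u\wedge v)$ and $\det[Ru,Rv,Rw]=\det[u,v,w]$ for $R\in\SO(3)$, together with the equivariance of $\widehat q_{\mathrm{L,iso}}$, both numerator and denominator transform identically, so $\alpha_i(RBR^T,RK)=\alpha_i(B,K)$. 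It is precisely here that $\SO(3)$ rather than $\mathrm{O}(3)$ is used, since $u\wedge v$ is only a relative invariant of $\mathrm{O}(3)$; this is also why the pseudovector $K\wedge BK$ may replace the customary generator $B^2K$ and why no $B^2K$ term is needed.

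I would then identify the invariant scalars $\alpha_i$ with functions of $(\iota(B),\iota(B,K))$. Diagonalizing $B$ in a right-handed eigenbasis $(v_1,v_2,v_3)$ and writing $K=(k_1,k_2,k_3)$, the residual stabilizer of $B$ in $\SO(3)$ is the Klein four-group $\{\diag(\varepsilon_1,\varepsilon_2,\varepsilon_3):\varepsilon_i=\pm1,\ \prod\varepsilon_i=1\}$; its invariants are generated by $k_1^2,k_2^2,k_3^2$ together with the sign $s(B,K)=\mathrm{sign}(k_1k_2k_3)$. From $\|K\|^2$, $K\cdot BK$ and $\|BK\|^2=K\cdot B^2K$ one recovers $k_1^2,k_2^2,k_3^2$ by a Vandermonde inversion in the distinct $\lambda_i$, while $s$ is exactly the sign of the genuine $\SO(3)$-scalar $\det[K,BK,B^2K]=k_1k_2k_3\prod_{i<j}(\lambda_j-\lambda_i)$. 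Hence $(\iota(B),\iota(B,K))$ separate generic $\SO(3)$-orbits and any $\SO(3)$-invariant scalar, in particular each $\alpha_i$, is a function of them, yielding \eqref{forme chaleur isotrope} on the generic set.

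Finally I would dispose of the degenerate configurations—$K=0$, $K$ an eigenvector of $B$, or $B$ with a repeated eigenvalue—where the three generators fail to be independent and $s=0$ by convention. In each such case the enlarged stabilizer of $B$ (up to $\mathrm{O}(2)$, or all of $\SO(3)$ when $B=\lambda I$) forces $\widehat q_{\mathrm{L,iso}}$ to lie in the corresponding invariant subspace, which is contained in the span of $\{K,BK,K\wedge BK\}$, so the representation persists with a legitimate choice of coefficients. I expect the main obstacle to be exactly this gluing step: ensuring that a single triple of functions $\alpha_i$ of the invariants realizes \eqref{forme chaleur isotrope} simultaneously on all strata, since the individual Cramer coefficients may become singular as the generators degenerate even though $\widehat q_{\mathrm{L,iso}}$ stays in their span; this is handled by the orbit and continuity argument above rather than by any further computation. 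The converse—that every law of the form \eqref{forme chaleur isotrope} is frame-indifferent and isotropic—is then a direct verification using once more $Ru\wedge Rv=R(u\wedge v)$ and the invariance of $\iota(B)$ and $\iota(B,K)$.
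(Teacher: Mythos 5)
Your strategy coincides with the paper's: isotropy collapses the dependence to $(B,\Theta,K)$ with $K=F^{-T}G$, frame-indifference becomes the equivariance $\widehat q_{\mathrm{L,iso}}(RBR^T,\Theta,RK)=R\,\widehat q_{\mathrm{L,iso}}(B,\Theta,K)$, the flux is expanded on $\{K,BK,K\wedge BK\}$ with $\SO(3)$-invariant coefficients, and those invariant scalars are then expressed through $\iota(B)$ and $\iota(B,K)$. Your stabilizer arguments for $K=0$, for $K$ an eigenvector of $B$ (rotation by $\pi$ about $K$) and for $B=\lambda I$ are exactly the paper's, and your identification of $s(B,K)$ with the sign of $\det[K,BK,B^2K]=k_1k_2k_3\prod_{i<j}(\lambda_j-\lambda_i)$ is a cleaner justification that $s$ is well defined and $\SO(3)$-invariant than the paper's unproved remark.

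The genuine gap is in the final gluing. To write each $\alpha_i$ as a function of $\bigl(\iota(B),\Theta,\iota(B,K)\bigr)$ you need these invariants to separate \emph{all} $\SO(3)$-orbits of pairs $(B,K)$, not just the generic ones: the $\alpha_i$ are arbitrary functions, no continuity is assumed, so the formula cannot be extended from a dense stratum by limits, and your appeal to ``the orbit and continuity argument above'' does not cover the non-generic strata. The troublesome cases are not those where the generators degenerate (there the coefficients are free and your stabilizer argument suffices), but those where $K$, $BK$, $K\wedge BK$ still form a basis while either $B$ has a repeated eigenvalue or exactly one component $K\cdot v_i$ vanishes: there the coefficients are forced by Cramer's rule, $s(B,K)=0$ carries no information, and one must prove that two such pairs with equal $\iota(B)$ and $\iota(B,K)$ are conjugate under $\SO(3)$. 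This is where the bulk of the paper's proof lies: for a double eigenvalue the Vandermonde system in the $\lambda_i^k$ is no longer invertible and must be replaced by the two relations on $(K\cdot v_1)^2+(K\cdot v_2)^2$ and $(K\cdot v_3)^2$; and when the eigenvalues are distinct but some $K\cdot v_i=0$, the signs $\varepsilon_i$ are not pinned down by $s$ and must be completed by hand to an element of the stabilizer $\{\diag(\varepsilon_1,\varepsilon_2,\varepsilon_3):\prod_i\varepsilon_i=1\}$. None of this is hard, but it does not follow from your generic-orbit computation and your proposal does not carry it out.
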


\begin{proof}We start with isotropy. Given $F_1,F_2\in\M_3^+$ and $G_1,G_2\in\R^3$, there exists $R\in \SO(3)$ such that $F_2R=F_1$ and $R^TG_2=G_1$ if and only if first $F_1F_1^T=F_2F_2^T$, and then $F_2^{-T}G_2=F_1^{-T}G_1$, hence the existence of the function $\widehat q_{\mathrm{L,iso}}$. Conversely, any such function gives rise to an isotropic constitutive law. 

Setting $F^{-T}G=K$, we then express frame-indifference with this representation, which reads
\begin{equation}\label{iso puis AIM}
\widehat q_{\mathrm{L,iso}}(RBR^T,RK)=R\widehat q_{\mathrm{L,iso}}(B,K)
\end{equation}
for all $R\in\SO(3)$, $B\in\Sym_3^+$ and $K\in\R^3$.


We first establish an intermediate representation  
\begin{equation}\label{intermediaire}
\widehat q_{\mathrm{L,iso}}(B, K)=\gamma_0(B,K)K+\gamma_1(B,K)BK+\gamma_2(B,K)K\wedge BK,
\end{equation}
where the functions $\gamma_i\colon \Sym_3^+\times\R^3\to \R$ are such that $\gamma_i(RBR^T,RK)=\gamma_i(B,K)$. 

Assume that \eqref{iso puis AIM} holds. There are three different cases.

\begin{enumerate}[leftmargin=0pt,itemindent=25pt]
\item $K=0$. There are three rotations of independent axes such that $RBR^T=B$, which implies $\widehat q_{\mathrm{L,iso}}(B,0)=0$. In this case, we set $\gamma_i(B,0)=0$ which have the required invariance.
\item $K\wedge BK\neq 0$. Then $(K,BK,K\wedge BK)$ is basis of $\R^3$ and we let $\gamma_i(B,K)$ be the coordinates of $\widehat q_{\mathrm{L,iso}}(B, K)$ in this basis. Since $(RK,(RBR^T)RK,RK\wedge (RBR^T)RK)=R(K,BK,K\wedge BK)$, they also have the required invariance.
\item  $K\neq 0$ and $K\wedge BK= 0$. In this case, $K$ is an eigenvector of $B$. Let $R_0$ be the rotation of axis directed by $K$ and angle $\pi$, which leaves $B$ and $K$ invariant. It follows that $\widehat q_{\mathrm{L,iso}}(B,K)=R_0\widehat q_{\mathrm{L,iso}}(B,K)$, which means that $\widehat q_{\mathrm{L,iso}}(B,K)$ belongs to the axis of $R_0$, \emph{i.e.}, is colinear with $K$. We thus set $\gamma_0(B,K)=\frac1{\|K\|^2}\bigl(\widehat q_{\mathrm{L,iso}}(B,K)\cdot K\bigr)$ and $\gamma_1(B,K)=\gamma_2(B,K)=0$. These also have the required invariance since for any rotation $R$, $RK$ is an eigenvector of $RBR^T$. 
\end{enumerate}

Conversely, a function of the form \eqref{intermediaire} is frame-indifferent, \emph{i.e.}, satisfies \eqref{iso puis AIM}.

The question now is thus to characterize all functions $\gamma\colon \Sym_3^+\times\R^3\to \R$ such that $\gamma(RBR^T,RK)=\gamma(B,K)$ for all $R$, $B$, and $K$, using the smallest number of independent scalar variables. We are going to show that given $B_1$, $B_2$, $K_1$, and $K_2$, there exists a rotation $R$ such that $B_2=RB_1R^T$ and $K_2=RK_1$ if and only if $\iota(B_1)=\iota(B_2)$ and $\iota(B_1,K_1)=\iota(B_2,K_2)$. 

First the necessary condition. If $B_2=RB_1R^T$, then $\iota(B_1)=\iota(B_2)$. If furthermore $K_2=RK_1$, then $\|K_2\|=\|K_1\|$, $\|B_2K_2\|=\|RB_1R^TRK_1\|=\|B_1K_1\|$ et $K_2\cdot B_2K_2=RK_1\cdot RB_1R^TRK_1=K_1\cdot B_1K_1$. Lastly, if $B_1$ has three distinct eigenvalues, $0<\lambda_1<\lambda_2<\lambda_3$, so does $B_2$. We let $(v_1,v_2,v_3)$, resp.\ $(w_1,w_2,w_3)$, be right-handed orthonormal bases of eigenvectors for $B_1$, resp.\ $B_2$, in the same order as the eigenvalues. Writing $K_1=\sum_{i=1}^3(K_1\cdot v_i )v_i$, it follows that $K_2=\sum_{i=1}^3(K_1\cdot v_i )Rv_i=\sum_{i=1}^3(K_2\cdot Rv_i )Rv_i$. Now $Rv_i=\varepsilon_i w_i$ with $\varepsilon_i=\pm1$, therefore $K_1\cdot v_i =\varepsilon_i K_2\cdot w_i $. Since the two bases are right-handed, there are only two possibilities: either $\varepsilon_i=1$ for all $i$, or two of them are $-1$ and the remaining one is $1$. In both cases, $\prod_{i=1}^3K_1\cdot v_i =\prod_{i=1}^3K_2\cdot w_i $, and thus $s(B_1,K_1)=s(B_2,K_2)$. This completes the proof that $\iota(B_1,K_1)=\iota(B_2,K_2)$.

We now turn to the sufficient condition. Assume that $\iota(B_1)=\iota(B_2)$ and $\iota(B_1,K_1)=\iota(B_2,K_2)$. We must construct an appropriate rotation $R$. Since $\iota(B_1)=\iota(B_2)$, there are several rotations $R$ such that $B_2=RB_1R^T$. If $K_1=0$, then $K_2=0$ and any such rotation works. Assuming $K_1$ and $K_2$ nonzero, we then discuss according to the common multiplicity of the eigenvalues of $B_1$ and $B_2$.

\begin{enumerate}[leftmargin=0pt,itemindent=25pt]
\item Three distinct eigenvalues $\lambda_i$. Let $(v_i)$ and $(w_i)$ be right-handed orthonormal eigenvector bases as before. There exists a unique rotation $R_0$ such that $R_0v_i=w_i$, and thus $B_2=R_0B_1R_0^T$. The hypothesis $\iota(B_1,K_1)=\iota(B_2,K_2)$ first implies that
$$
\sum_{i=0}^3\lambda_i^k(K_2\cdot w_i)^2=\sum_{i=0}^3\lambda_i^k(K_1\cdot v_i)^2, \quad k=0,1,2.
$$
This is an invertible Vandermonde system, with unique solution $(K_2\cdot w_i)^2=(K_1\cdot v_i)^2$, $i=1,2,3$. Consequently, $|K_2\cdot w_i|=|K_1\cdot v_i|$, $i=1,2,3$. Therefore $K_2\cdot w_i=\varepsilon_i K_1\cdot v_i$ with $\varepsilon_i=\pm1$ being uniquely determined when paired with nonzero terms. 

First subcase, the three terms are nonzero. The condition $s(B_1,K_1)=s(B_2,K_2)$ implies that either $\varepsilon_i=1$ for all $i$, or two of them are $-1$ and the remaining one is $1$. In both cases, $\varepsilon_iK_1\cdot v_i=(R^0)^TK_1\cdot v_i$ where $R^0$ is the rotation defined by $R^0v_i=\varepsilon_iv_i$. Consequently $K_2=R_0(R^0)^TK_1$ and we still have $B_2=R_0R^0B_1(R_0R^0)^T$ since $B_1=R^0B_1(R^0)^T$. 

Second subcase, two nonzero terms, one zero term, \emph{i.e.}, without loss of generality $K_1\cdot v_1\neq0$, $K_1\cdot v_2\neq0$ and $K_1\cdot v_3=0$. Then $\varepsilon_1$ and $\varepsilon_2$ are determined and we set $\varepsilon_3=\varepsilon_1\varepsilon_2$ to define an adequate rotation $R^0$.

Last subcase, one nonzero term, two zero terms, \emph{i.e.}, $K_1\cdot v_1\neq0$, $K_1\cdot v_2=K_1\cdot v_3=0$. Then $K_1$ is colinear with $v_1$ and $K_2$ is colinear with $w_1$, so that $K_2=\varepsilon_1R_0K_1$. We take $R^0$ defined by $(\varepsilon_1,\varepsilon_1,1)$. 

\item Two equal eigenvalues, the third one being distinct, without loss of generality, $\lambda_1=\lambda_2\neq\lambda_3$ with corresponding right-handed orthonormal eigenvectors $v_i$ and $w_i$, $i=1,2,3$, for $B_1$ and $B_2$. All rotations that map $v_3$ on $\pm w_3$ make $B_1$ and $B_2$ conjugate. In this case, the Vandermonde system is not invertible. It is however equivalent to $(K_2\cdot w_1)^2+(K_2\cdot w_2)^2=(K_1\cdot v_1)^2+(K_1\cdot v_2)^2$ and $(K_2\cdot w_3)^2=(K_1\cdot v_3)^2$. In other words, the projection of $K_1$ on $v_3^\bot$ has the same norm as the projection of $K_2$ on $w_3^\bot$, and $K_2\cdot w_3=\varepsilon_3K_1\cdot v_3$. There is thus a rotation that maps $K_1$ on $K_2$ while mapping $v_3$ on $\pm w_3$.

\item Three equal eigenvalues. Then $B_2=B_1=\lambda I$ and the condition $\|K_1\|=\|K_2\|$ is sufficient for the existence of an appropriate rotation.
\end{enumerate}
In all three cases, there is a rotation $R$ such that $B_2=RB_1R^T$ and $K_2=RK_1$.\end{proof}

Let us emphasize once again that the form given in Proposition \ref{chaleur isotrope} is more general than the one that can be found for instance in \cite{Smith}, which does not include $\widehat q_{\mathrm{L,iso}}(B,\Theta, K)=K\wedge BK$ for example, an admittedly physically strange heat flux, that is nonetheless frame-indifferent, isotropic and satisfies the Clausius-Planck inequality. Indeed, $\widehat q_{\mathrm{L,iso}}(B,\Theta, K)\cdot K=0$.

More generally, the thermal part of the Clausius-Planck inequalities is satisfied if and only if
$$
\alpha_0\bigl(\iota(B),\Theta,\iota(B,K)\bigr)\|K\|^2+\alpha_1\bigl(\iota(B),\Theta,\iota(B,K)\bigr)BK\cdot K\le 0,$$
which is in particular the case if $\alpha_0$ and $\alpha_1$ are nonpositive.

Let us also determine all thermally fluid heat fluxes. This is a particular case of the previous result, but it is easier not to start from \eqref{forme chaleur isotrope}.
\begin{proposition}The constitutive law of a thermally fluid heat flux takes the form
\begin{equation}\label{representation chaleur fluide eulerienne}
\widehat q_{\mathrm L}(F,\Theta,G)=-\check k_{\mathrm{L},\fluide}(\det F,\Theta,\|F^{-T}G\|)F^{-T}G,
\end{equation}
where $\check k_{\mathrm{L},\fluide}\colon (\R_+^*)^2\times\R_+\to\R$ is arbitrary.

In Eulerian variables, this also reads
\begin{equation}\label{representation chaleur fluide eulerienne eulerienne}
\widehat q(f,\theta,g)=-\check k_{\mathrm{E},\fluide}(\rho,\theta,\|g\|)g,
\end{equation}
with $\check k_{\mathrm{E},\fluide}\colon (\R_+^*)^2\times\R_+\to\R$.

Such a law satisfies the Clausius-Planck inequality if and only if the scalar functions $\check k_{\mathrm{L},\fluide}$ and $\check k_{\mathrm{E},\fluide}$ are nonnegative.
\end{proposition}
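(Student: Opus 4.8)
The plan is to read off the representation directly from the two invariance properties that $\widehat q_{\mathrm L}$ must satisfy for a thermal fluid, rather than specializing the isotropic formula \eqref{forme chaleur isotrope}. Frame-indifference gives $\widehat q_{\mathrm L}(RF,\Theta,G)=R\widehat q_{\mathrm L}(F,\Theta,G)$ for all $R\in\SO(3)$, while the thermal symmetry group being all of $\SL(3)$ gives $\widehat q_{\mathrm L}(FS,\Theta,S^TG)=\widehat q_{\mathrm L}(F,\Theta,G)$ for every $S\in\SL(3)$. The central idea is to use both at once to collapse an arbitrary $F\in\M_3^+$ to a multiple of the identity.

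Concretely, given $F$ with $\delta=\det F$ and any $R\in\SO(3)$, I would introduce $S=\delta^{1/3}F^{-1}R^T$. One checks that $\det S=\delta\cdot\delta^{-1}\cdot 1=1$, so $S\in\SL(3)$, and that $FS=\delta^{1/3}R^T$ while $S^TG=\delta^{1/3}RF^{-T}G$. Feeding this $S$ into the thermal symmetry relation and then applying frame-indifference with the rotation $R^T$ yields
$$\widehat q_{\mathrm L}(F,\Theta,G)=\widehat q_{\mathrm L}\bigl(\delta^{1/3}R^T,\Theta,\delta^{1/3}RF^{-T}G\bigr)=R^T\,m\bigl(\delta,\Theta,\delta^{1/3}RF^{-T}G\bigr),$$
where $m(\delta,\Theta,w):=\widehat q_{\mathrm L}(\delta^{1/3}I,\Theta,w)$ is the reduced flux at the isotropic state $\delta^{1/3}I$.

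The point is that the left-hand side does not depend on $R$, so setting $w=\delta^{1/3}F^{-T}G$ forces the isotropy identity $m(\delta,\Theta,Rw)=R\,m(\delta,\Theta,w)$ for all $R\in\SO(3)$ and all $w\in\R^3$. An isotropic vector-valued function of a single vector is necessarily radial: taking $R$ to be the rotation of angle $\pi$ about the axis $w$ (for $w\neq0$) shows $m(\delta,\Theta,w)$ is colinear with $w$, exactly as in cases 1 and 3 of the proof of Proposition \ref{chaleur isotrope}, and invariance of the coefficient gives $m(\delta,\Theta,w)=\mu(\delta,\Theta,\|w\|)w$ for some scalar $\mu$. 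Choosing $R=I$ above then gives $\widehat q_{\mathrm L}(F,\Theta,G)=\mu(\delta,\Theta,\delta^{1/3}\|F^{-T}G\|)\,\delta^{1/3}F^{-T}G$, and absorbing the factor $\delta^{1/3}$ and a sign into a new scalar $\check k_{\mathrm{L},\fluide}$ produces \eqref{representation chaleur fluide eulerienne}. The Eulerian form \eqref{representation chaleur fluide eulerienne eulerienne} follows by rewriting the Lagrangian arguments in Eulerian variables, using $\nabla_x\theta=F^{-T}G$ and $\rho=\Rho/\det F$.

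Finally, for the Clausius-Planck characterization I would simply compute the thermal dissipation. Since $Q\cdot G=(\det F)\,\widehat q_{\mathrm L}(F,\Theta,G)\cdot(F^{-T}G)=-(\det F)\,\check k_{\mathrm{L},\fluide}\,\|F^{-T}G\|^2$, the inequality $Q\cdot\nabla_X\Theta\le0$ holds for all $G$ if and only if $\check k_{\mathrm{L},\fluide}\ge0$, and likewise $\widehat q\cdot g=-\check k_{\mathrm{E},\fluide}\|g\|^2$ gives the Eulerian statement. I expect the genuine obstacle to be the simultaneous reduction step: choosing $S\in\SL(3)$ that, together with a left rotation, normalizes $F$ to $\delta^{1/3}I$, and recognizing that the residual dependence on $R$ is precisely the isotropy condition on $m$. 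The representation theorem for radial vector functions and the sign computations are then routine.
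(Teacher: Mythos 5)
Your proof is correct and follows essentially the same route as the paper's: the paper first uses the $\SL(3)$-orbit characterization to write $\widehat q_{\mathrm L}(F,\Theta,G)=\check q_{\mathrm L}(\det F,\Theta,F^{-T}G)$ and then reads off from frame-indifference that $K\mapsto\check q_{\mathrm L}(J,\Theta,K)$ is objective, hence radial, which is exactly what your explicit normalization $S=\delta^{1/3}F^{-1}R^T$ accomplishes in a single stroke (your $m(\delta,\Theta,w)$ is just $\check q_{\mathrm L}(\delta,\Theta,\delta^{-1/3}w)$). The $\pi$-rotation argument for radiality and the sign computation for the thermal Clausius--Planck inequality likewise match the paper's, which simply cites the radial representation of objective vector functions as well known.
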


\begin{proof}We start with fluidity. Given $F_1,F_2\in\M_3^+$ and $G_1,G_2\in\R^3$, there exists $S\in \SL(3)$ such that $F_2S=F_1$ and $S^TG_2=G_1$ if and only if first $\det F_1=\det F_2$, and then $F_2^{-T}G_2=F_1^{-T}G_1$. Therefore, we can write
 $$\widehat q_{\mathrm L}(F,G)=\check q_{\mathrm L}(\det F,F^{-T}G)$$
  with $\check q_{\mathrm L}\colon \R_+^*\times \R^3\to\R^3$.
 Conversely, any such function gives rise to a fluid constitutive law for the heat flux. 
 
 Setting $F^{-T}G=K$, we then express frame-indifference with this representation, which reads
$$\widehat q_{\mathrm L}(RF,G)=\check q_{\mathrm L}(\det F,RK)=R\check q_{\mathrm L}(\det F,K),$$
which says that $K\mapsto \check q_{\mathrm L}(J,K)$ is an objective function on $\R^3$ for all $J$, see \cite{Gurtin}. It is well known that this is equivalent to having
$$ \check q_{\mathrm L}(J,K)=-\check k_{\mathrm{L},\fluide}(J,\|K\|)K$$
where $\check k_{\mathrm{L},\fluide}$ is scalar-valued, which is exactly \eqref{representation chaleur fluide eulerienne}.
We then use $f=F^{-1}$, $g=F^{-T}G$, $\rho=\Rho/J$ to rewrite it as \eqref{representation chaleur fluide eulerienne eulerienne}. 

Finally
$$\check q_{\mathrm L}(J,K)\cdot K=-\check k_{\mathrm{L},\fluide}(J,\|K\|)\|K\|^2
$$
so that the thermal part of Clausius-Planck is satisfied if and only if $\check k_{\mathrm{L},\fluide}(J,\|K\|)\ge 0$ for all $J$ and $K$.\end{proof}

It thus turns out that all fluid, frame-indifferent heat fluxes are actually nonlinear Fourier laws.

We alluded earlier to the use of a diffusion potential to construct heat fluxes that satisfy the thermal part of the Clausius-Planck inequalities. Let us state this precisely, together with frame-indifference and thermal symmetry conditions. 
\begin{proposition}\label{potentiel de diffusion}
Let $\widehat P_{\mathrm{diff}}\colon\M_3^+\times\R_+^*\times \R^3\to\R_+$ be a function which is concave with respect to its third argument and such that $\widehat P_{\mathrm{diff}}(F,\Theta,0)=0$ for all $F$ and $\Theta$. Then $\widehat Q=\frac{\partial \widehat P_{\mathrm{diff}}}{\partial G}$ defines a heat flux that satisfies Clausius-Planck.

If this potential is frame-indifferent, \emph{i.e.}, $\widehat P_{\mathrm{diff}}(RF,\Theta,G)=\widehat P_{\mathrm{diff}}(F,\Theta,G)$, so is its associated heat flux. If this potential has a thermal symmetry $S\in\SL(3)$, \emph{i.e.}, $\widehat P_{\mathrm{diff}}(FS,\Theta,S^TG)=\widehat P_{\mathrm{diff}}(F,\Theta,G)$, its associated heat flux also has the symmetry $S$.
\end{proposition}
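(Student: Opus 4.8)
The plan is to handle the three assertions in turn, the common thread being that $\widehat Q=\partial\widehat P_{\mathrm{diff}}/\partial G$ is a gradient in its last argument. For the Clausius-Planck inequality I would fix $F$ and $\Theta$ and view $J(G)=\widehat P_{\mathrm{diff}}(F,\Theta,G)$ as a concave function of $G$ alone whose maximal value $0$ is attained at $G=0$ (this is what the hypotheses $\widehat P_{\mathrm{diff}}(F,\Theta,0)=0$ and $\widehat P_{\mathrm{diff}}\le 0$ amount to). This is exactly the mirror of the convex situation treated in Proposition \ref{potentiel de dissipation cas visqueux et var internes}: for a differentiable concave $J$ having its maximum $0$ at the origin, the concave gradient inequality applied between $G$ and $0$ reads $J(0)\le J(G)+\langle\nabla J(G),0-G\rangle$, whence $\langle\nabla J(G),G\rangle\le J(G)\le 0$. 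Since $\nabla J(G)=\widehat Q(F,\Theta,G)$ by definition, this is precisely $\widehat Q(F,\Theta,G)\cdot G\le 0$, the thermal part of the Clausius-Planck inequalities.

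For frame-indifference I would simply differentiate the assumed identity $\widehat P_{\mathrm{diff}}(RF,\Theta,G)=\widehat P_{\mathrm{diff}}(F,\Theta,G)$ with respect to $G$, which is untouched by $R$. This immediately gives $\widehat Q(RF,\Theta,G)=\widehat Q(F,\Theta,G)$ for all $R\in\SO(3)$, which is exactly the frame-indifference condition recorded earlier for the heat flux; no further work is required.

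For the thermal symmetry I would again differentiate in $G$, now applying the chain rule to the relation $\widehat P_{\mathrm{diff}}(FS,\Theta,S^TG)=\widehat P_{\mathrm{diff}}(F,\Theta,G)$, where on the left $G$ enters through $S^TG$. Differentiation of $G\mapsto\widehat P_{\mathrm{diff}}(FS,\Theta,S^TG)$ brings out the factor $S$ (the transpose of the linear map $G\mapsto S^TG$), so that $S\,\widehat Q(FS,\Theta,S^TG)=\widehat Q(F,\Theta,G)$, i.e. $\widehat Q(FS,\Theta,S^TG)=S^{-1}\widehat Q(F,\Theta,G)$. Because $S\in\SL(3)$ we have $\cof S=S^{-T}$, hence $(\cof S)^T=S^{-1}$, and the displayed relation is exactly the thermal symmetry condition $\widehat Q(FS,\Theta,S^TG)=(\cof S)^T\widehat Q(F,\Theta,G)$ stated before the proposition.

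The only genuinely delicate point is the transpose bookkeeping in the chain rule for the symmetry part: one must keep track that differentiating through $S^TG$ produces a factor $S$ acting on the left of $\widehat Q$, and then recognize $S^{-1}=(\cof S)^T$ via $\det S=1$. The first part is a sign-flipped copy of the convexity argument already used for dissipation potentials, and the frame-indifference part is immediate, so no other obstacle is expected.
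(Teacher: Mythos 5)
Your proof is correct and is surely the argument the authors had in mind: the paper's own proof is literally ``Clear.'', and your three steps (the concave mirror of Proposition \ref{potentiel de dissipation cas visqueux et var internes}, differentiation in $G$ for frame-indifference, and the chain-rule/transpose bookkeeping with $(\cof S)^T=S^{-1}$ for a thermal symmetry) fill in exactly the details being omitted. One remark: you invoke the hypothesis $\widehat P_{\mathrm{diff}}\le 0$, whereas the statement declares the codomain to be $\R_+$; this is evidently a sign typo in the paper (the example given right after, $\widehat P_{\mathrm{diff}}(F,G)=-\frac k2G^TC^{-1}G$, is nonpositive, and a nonnegative concave function vanishing at $G=0$ would be forced to vanish identically, making the flux trivially zero), so your reading is the intended one and your argument goes through.
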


\begin{proof}Clear.\end{proof} 

For example, the diffusion potential $\widehat P_{\mathrm{diff}}(F,G)=-\frac k2G^TC^{-1}G$ with $k>0$ gives rise to the classical Fourier law $\widehat Q(F,G)=-kC^{-1}G$.

\section{Examples of thermo-visco-elastic materials}\label{des exemples}

We now give a few examples of materials, old and new, that fall within our global framework.

\subsection{Thermo-elastic materials}These are of course the simplest of all with no internal variables and no dissipative stress. They are solely characterized by their frame-indifferent free energy $\widehat A_m$, with $\widehat T_{\mathrm{R}}(F,\Theta)=\Rho\frac{\partial\widehat A_m}{\partial F}(F,\Theta)$ and $\widehat S_m(F,\Theta)=-\frac{\partial \widehat A_m}{\partial \Theta}(F,\Theta)$, and frame-indifferent heat flux $\widehat Q$. The internal dissipation is zero. If the material is in thermal equilibrium at all times, \emph{i.e.}, $\nabla_X\Theta=0$, then the Clausius-Duhem inequality is an equality and all evolutions are reversible: heat and mechanical energy can be transformed into one another in both directions without any loss.

When the free energy is split in the form $\widehat A_m(F,\Theta)=\widehat W_m(F)+\widehat V_m(\Theta)$, the model decouples into a nonlinear elasticity model on the side of stresses and dynamics without any thermal effect, and a nonlinear heat equation for the temperature with no mechanical source term, as there is no internal dissipation, even though the heat flux may still depend on $F$.

Thermo-elastic materials can have any possible material symmetry, for instance be solid or fluid. It is easy to see that the free energy of a thermo-elastic fluid is of the form $\widehat A_m(F,\Theta)=\widehat\Psi_m(\det F,\Theta)$ and that the Cauchy stress is a pure pressure $\sigma=-p(\rho,\theta) I$ in the Eulerian description. This includes perfect gases $\widehat\Psi_m(J,\Theta)=-r_{pg}\Theta\ln J+\widehat V_m(\Theta)$.

\subsection{Kinematically viscous materials}At the other end of the spectrum are materials with no elastic stress at all, $\widehat A_m(F,\Theta)=\widehat V_m(\Theta)$,  an entirely dissipative stress $\widehat T_{\mathrm{Rd}}(F,H,\Theta)$ and still no internal variables.

Such materials can have different symmetries. For instance, the following is a somewhat artificial solid example:  let $\nu\colon\R_+^*\times\R_+^*\to\R_+^*$ be strictly increasing with respect to its first variable and take  $\widehat T_{\mathrm{Rd}}(F,H,\Theta)=\nu(\tr C,\Theta)\det F\bigl(HC^{-1}+F^{-T}H^TF^{-T}\bigr).$ This material satisfies the mechanical part of the Clausius-Planck inequalities, is frame-indifferent and a matrix $S$ is a material symmetry if and only if $\nu(\tr(S^T CS),\Theta)=\nu(\tr C,\Theta)$ for all $C\in\Sym_3^+$, or $\tr(S^T CS)=\tr(C)$ for all such $C$. In particular, $SS^T-I\in C^\bot$ for all $C$ and thus $SS^T=I$, which shows that $S\in SO(3)$. This material is of course isotropic. 

When $\nu$ is instead a strictly positive function of $\det F$ and  $\Theta$, then the corresponding material is a compressible frame-indifferent viscous fluid, still satisfying the mechanical part of the Clausius-Planck inequalities. When $\nu$ is a constant, the material is a Newtonian compressible fluid, the dynamics equations of which in the Eulerian description are the compressible Navier-Stokes equations.

More generally, all viscous fluids in this family have a constitutive law for the Cauchy stress in the Eulerian description of the form
$$
\check \sigma(\rho, d,\theta)=\beta_0(\rho,\iota(d),\theta)I+\beta_1(\rho,\iota(d),\theta)d+\beta_2(\rho,\iota(d),\theta)d^2,
$$
where $\beta_i$ are arbitrary real-valued functions and $\iota(d)$ is the triple of principal invariants of $d$, by a direct application of the Rivlin-Ericksen theorem. Such non-Newtonian fluids are known as compressible Reiner-Rivlin fluids, \cite{Reiner}-\cite{Rivlin}. The Clausius-Planck inequality then demands that
$$
\Bigr(\sum_{i=0}^2\beta_i(\rho,\iota(d),\theta)d^i\Bigl):d\ge 0.
$$
It is as a rule strict and leads to irreversibility, even in thermal equilibrium.

By adding a term $\widehat\Psi_m(\det F,\Theta)$ to the free energy, we obtain thermo-visco-elastic fluids, for which the above inequality must be slightly adapted. 

\subsection{A family of nonlinear 3d Maxwell models}\label{modeles maxwell 3d}
We now present a family of materials that do not seem to be found in the literature to the best of our knowledge. It is intended to provide three-dimensional, frame-indifferent, nonlinear generalizations of the Maxwell rheological model, a zero-dimensional model which consists in a linearly elastic spring and a linearly viscous dashpot placed in series, a model that exhibits stress relaxation.  There are other attempts at extending the Maxwell and generalized Maxwell rheological models (the latter with stress relaxation and creep) to a full 3d setting, see for example \cite{Holtzapfel-Simo}. 

In the Maxwell model, the total stretching of the system is denoted $\varepsilon$, that of the dashpot $\gamma$, so that the stretching of the spring is $\varepsilon-\gamma$. If $\mu>0$ denotes the stiffness of the spring and $\nu>0$ the viscosity of the dashpot, then there is an elastic force $\mu(\varepsilon-\gamma)$ and a viscous friction force $-\nu\dot\gamma$ (we use the dot for the usual time derivative, there is no Eulerian/Lagrangian distinction here). Since there is no mass between the spring and the dashpot, these forces are equilibrated at all times and it is fairly clear that if $\varepsilon$ has a prescribed constant value in time $\bar\varepsilon$, then stress relaxation will occur for any initial values of the stretchings, \emph{i.e.}, the resultant force applied to each end of the system, $\mu(\bar\varepsilon-\gamma)=\nu\dot\gamma$ or its opposite depending on which end it is applied to, will decay to $0$ exponentially in time as the spring settles back to its natural length, \emph{i.e.} to zero stretch, while being restrained by the dashpot.

In order to fit the Maxwell model into our thermomechanical framework, it is very natural to consider  $\varepsilon$ as a thermodynamic variable playing the role of $F$, and $\gamma$ as an internal variable playing the role of $\Xi$, with no temperature (or decoupled temperature). Indeed, the system should be considered to be installed inside a black box, of which only $\varepsilon$ is observable. None of the two stretches happening inside are observable.
Taking as free energy the elastic energy of the spring, 
  $\widehat A_m(\varepsilon,\gamma)=\frac\mu2(\varepsilon-\gamma)^2$, 
 and as right-hand side for the ordinary differential equation $\dot\gamma=\widehat K(\varepsilon,\gamma)$, 
  $\widehat K(\varepsilon,\gamma)=\frac{\mu}{\nu}(\varepsilon-\gamma)$, and applying the results of a very degenerate kind of Coleman-Noll procedure,
  we recover exactly the Maxwell model. There is also a dissipation potential $\widehat P_{\mathrm{diss}}(\varepsilon,\lambda)=\frac\kappa2\lambda^2$,  with $\kappa=\frac{1}{\nu}$, where $\lambda$ plays the role of $\Lambda$. 
  
  It should be noted that the viscous behavior of the Maxwell model is not kinematical, since the viscous effect is not a function of the observable deformation rate $\dot\varepsilon$. 
  
  It is fairly easy to devise thermodynamically sound nonlinear, zero-dimensional versions of the Maxwell model by considering more general free-energies and flow rules generated by more general dissipation potentials, and also by adding temperature as well.
  
We are more interested here in extending the kind of behavior of the Maxwell model to a 3d setting, which should be fully nonlinear and frame-indifferent. The Maxwell model is based on an additive decomposition of strains, $\varepsilon=(\varepsilon-\gamma)+\gamma$, which will not do for our purposes. We thus turn to a multiplicative decomposition of strains, a very common idea in many contexts such as visco-elastic porous media \cite{Markert} or plasticity \cite{DavoliFrancfort}, \cite{Mielke}, see also \cite{Le Tallec} for a simpler viscoelastic version.

The simplest assumption is thus to take $F\in\M_3^+$ as thermodynamic variable, no dissipative stress, no temperature, and an internal variable $F_i\in\M_3^+$. Now $F$ will take the place of $\varepsilon$, the observable strain, and $F_i$ that of $\gamma$, a sort of internal viscous strain. Of course, we still have $F=\nabla_X\phi$, but $F_i$ is not the gradient of a deformation in general, and should not be interpreted that way.

Given any frame-indifferent nonlinearly elastic stored energy function $\widehat W$, we consider the free energy constitutive law
\begin{equation}\label{energie libre Maxwell 3D}
\widehat A_m(F,F_i)=\frac1\Rho \widehat W(FF_i^{-1})=\frac1\Rho \widehat W(F_e),
\end{equation}
where $F_e=FF_i^{-1}$, which thus acts as a sort of internal elastic strain, without being the gradient of a deformation either. Without loss of generality, we let $\Rho=1$. Since $\widehat W$ is assumed to be frame-indifferent, the free energy inherits a kind of frame-indifference in the form 
$$\widehat A_m(RF,F_i)=\widehat A_m(F,F_i),$$
but we will return to frame-indifference issues later on. 

In effect, we are considering in equation \eqref{energie libre Maxwell 3D} a multiplicative decomposition of the strain of the form $F=F_eF_i$, where $F_i$ is considered as the internal variable. There is much debate in the literature, in particular concerning plasticity, about the order in which such a decomposition should be made. In our context, where frame-indifference and the related symmetry of the Cauchy stress are of primary concern, setting $F=F_iF_e$ would not be appropriate.

We assume that there is no dissipative part of the  stress, $\widehat T_{\mathrm{Rd}}=0$. The Coleman-Noll procedure then implies that
\begin{equation}\label{TR Maxwell}
\widehat T_{\mathrm{R}}(F,F_i)=\frac{\partial\widehat A_m}{\partial F}(F,F_i)=\frac{\partial\widehat  W}{\partial F_e}(FF_i^{-1})F_i^{-T},
\end{equation}
to be used in the dynamics equation, or in a quasistatic version thereof.  The resulting models  thus do  not describe kinematically viscous materials, even though there are internal viscous effects at work.

To complete the model in our general framework, we need an ordinary differential equation for the internal variable of the form
\begin{equation}\label{edo Maxwell}
\frac{\partial F_i}{\partial t}(X,t)=\widehat K(F(X,t),F_i(X,t)).
\end{equation}
Since
\begin{equation}\label{force thermodynamique}
\frac{\partial\widehat A_m}{\partial F_i}(F,F_i)=-F_i^{-T}F^T\frac{\partial\widehat  W}{\partial F_e}(FF_i^{-1})F_i^{-T},
\end{equation}
the mechanical part of the Clausius-Planck inequalities reads
$$
F_i^{-T}F^T\frac{\partial\widehat  W}{\partial F_e}(FF_i^{-1})F_i^{-T}: \widehat K(F,F_i)\ge0.
$$
  
This inequality can be ensured in a systematic way by appealing to Proposition~\ref{potentiel de dissipation cas visqueux et var internes}. Consider a dissipation potential $\widehat P_{\mathrm{diss}}\colon 
 \M_3^+\times \M_3\to\R_+$, convex with respect to its second argument and such that $\widehat P_{\mathrm{diss}}(F,0)=0$. Then  
 \begin{equation}\label{potentiel diss maxwell}
 \widehat K(F,F_i)=-\frac{\partial  \widehat P_{\mathrm{diss}}}{\partial\Lambda}\Bigl(F,\frac{\partial\widehat A_m}{\partial F_i}(F,F_i)\Bigl)
 \end{equation}
  is a  flow rule that satisfies the mechanical part of the Clausius-Planck inequalities. The simplest potential of all is $\widehat P_{\mathrm{diss}}(F,\Lambda)=\frac\kappa2\|\Lambda\|^2$ with $\kappa\ge 0$ and corresponds to the choice 
\begin{equation}\label{Captain Obvious}
\widehat K(F,F_i)=\kappa F_i^{-T}F^T\frac{\partial\widehat  W}{\partial F_e}(FF_i^{-1})F_i^{-T}.
\end{equation}

Before discussing frame-indifference and symmetries, let us note that since $\widehat W$ is assumed to be frame-indifferent, it can be rewritten as
 $\widehat W(F_e)=\widetilde W(C_e)$ with $C_e=F_e^TF_e$ and still $F_e=FF_i^{-1}$. Therefore,
 $$
\widehat T_{\mathrm{R}}(F,F_i)=2FF_i^{-1}\frac{\partial\widetilde  W}{\partial C_e}(F_e^TF_e)F_i^{-T}
$$
with $\frac{\partial\widetilde  W}{\partial C_e}$ symmetric. This has an important consequence, namely that the Cauchy stress tensor is automatically symmetric, irrespective of the chosen flow rule. Indeed, 
 $$
F\widehat T_{\mathrm{R}}(F,F_i)^T=2FF_i^{-1}\frac{\partial\widetilde  W}{\partial C_e}(F_e^TF_e)F_i^{-T}F^T=\widehat T_{\mathrm{R}}(F,F_i)F^T.
$$
Choosing a strain decomposition in the reverse order would lead to severe difficulties here, see \cite{DavoliFrancfort}.

Let us now return to the question of frame-indifference. This is a model with an unobservable internal variable, the physical nature of which is furthermore unclear. The general considerations of Section \ref{section AIM visqueux} do not apply, and we need to go back to the initial formulation of the principle of frame-indifference in Section \ref{principe AIM}. We use the same notation with unstarred and starred quantities and obviously, only rotations need to be taken into account since translations are ignored by the model.

Our first observation is that for all $R\in\SO(3)$, $F,F_i\in \M_3^+$,
\begin{equation}\label{AIM on y croit}
\widehat T_{\mathrm{R}}(RF,F_i)=\frac{\partial\widehat  W}{\partial F_e}(RFF_i^{-1})F_i^{-T}=R\frac{\partial\widehat  W}{\partial F_e}(FF_i^{-1})F_i^{-T}=R\widehat T_{\mathrm{R}}(F,F_i),
\end{equation}
by the assumed frame-indifference of $\widehat W$. In other words, the first Piolà-Kirchhoff stress tensor transforms as expected, provided that the internal variable is not affected by superimposed rotations. We thus need an additional hypothesis on the flow rule, namely that
 \begin{equation}\label{AIM et edo}
\widehat K(RF,F_i)=\widehat K(F,F_i),
\end{equation}
and we must also pay attention to an often neglected issue in the context of internal variables, that of the initial conditions for the ordinary differential equation. In accordance with the above observation, they need to be unmodified as well. Finally, we assume $\widehat K$ to be continuous and locally Lipschitz with respect to its second variable, uniformly with respect to its first variable.

\begin{proposition}\label{Maxwell 3D AIM}The 3d Maxwell model \eqref{energie libre Maxwell 3D}--\eqref{edo Maxwell}, with hypothesis \eqref{AIM et edo} and the assumed regularity of $\widehat K$, is  frame-indifferent. 
\end{proposition}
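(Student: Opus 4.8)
The plan is to reduce frame-indifference of the whole model to a single fact about the internal variable, namely that it is unaffected by a superimposed rigid motion, $F_i^*=F_i$. Once this identity is secured, everything else is immediate: the stress transforms correctly by the computation already recorded in \eqref{AIM on y croit}, and the free energy is invariant because $\widehat A_m(RF,F_i)=\widehat W(RFF_i^{-1})=\widehat W(FF_i^{-1})=\widehat A_m(F,F_i)$ by the assumed frame-indifference of $\widehat W$. So the entire argument hinges on the internal variable, and this is exactly the point at which the two extra hypotheses \eqref{AIM et edo} and the regularity of $\widehat K$ come into play.

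First I would set up the two evolutions as in Section \ref{principe AIM} and in the proof of Proposition \ref{AIM lagrangien visqueux}: given a deformation $\phi$ and a rigid motion $t\mapsto(a(t),R(t))$ with $R(t)\in\SO(3)$, put $\phi^*=a+R\phi$, so that $F^*=RF$ at every $(X,t)$, translations dropping out of the gradient. By definition of the model, the starred internal variable $F_i^*$ is the solution of the Cauchy problem attached to \eqref{edo Maxwell} written for the starred deformation,
$$\frac{\partial F_i^*}{\partial t}=\widehat K(F^*,F_i^*)=\widehat K(RF,F_i^*),$$
and, by the stipulation that initial conditions be left unmodified under superimposed rotations, $F_i^*$ starts from the same initial datum as $F_i$.

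The key step invokes hypothesis \eqref{AIM et edo}. Since $\widehat K(RF,F_i^*)=\widehat K(F,F_i^*)$, the function $F_i^*$ in fact solves the very same non-autonomous ordinary differential equation as $F_i$, namely $\frac{\partial F_i^*}{\partial t}=\widehat K(F,F_i^*)$, where the time-dependence enters only through $F(X,t)$, and with identical initial value. Here the assumed regularity of $\widehat K$---continuity together with local Lipschitz dependence on the second variable, uniformly in the first---is precisely what guarantees uniqueness for this Cauchy problem by the Cauchy-Lipschitz theorem. Consequently $F_i^*=F_i$, as claimed.

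The conclusion is then purely formal: with $F^*=RF$ and $F_i^*=F_i$ we obtain $T_{\mathrm{R}}^*=\widehat T_{\mathrm{R}}(RF,F_i)=R\,\widehat T_{\mathrm{R}}(F,F_i)=RT_{\mathrm{R}}$ from \eqref{AIM on y croit}, which is the required relation \eqref{AIMTR}, and $A_m^*=\widehat A_m(RF,F_i)=\widehat A_m(F,F_i)=A_m$ from the invariance of $\widehat A_m$ under left multiplication by $R$, which is \eqref{AIM A lag}. I expect the only genuine obstacle to be the uniqueness argument: one must check that the stated regularity of $\widehat K$ is exactly what Cauchy-Lipschitz requires for a non-autonomous equation of this kind, and that the two Cauchy problems really do share the same initial data---both of which are guaranteed by the hypotheses imposed just before the statement.
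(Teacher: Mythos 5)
Your proof is correct and follows essentially the same route as the paper's: reduce everything to the identity $F_i^*=F_i$, obtain it from hypothesis \eqref{AIM et edo} together with Cauchy-Lipschitz uniqueness for the two Cauchy problems sharing the same initial datum, and then conclude via \eqref{AIM on y croit}. The only (harmless) addition is that you also spell out the invariance of the free energy, which the paper's proof leaves implicit.
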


\begin{proof}
Let  $T_{\mathrm{R}}(X,t)$ be the first Piolà-Kirchhoff stress tensor observed at point $X$ and time $t$ when the body undergoes a deformation $\phi$ and $T_{\mathrm{R}}^*(X,t)$ when it undergoes the deformation $R(t)\phi$, where $R$ is an arbitrary $\SO(3)$-valued function. We use the same notation for $F(X,t),F_i(X,t)$ and $F^*(X,t),F_i^*(X,t)$. Of course, $F^*(X,t)=R(t)F(X,t)$. 

In what follows, the material point $X$ is going to be fixed and the only variable is actually the time $t$. For brevity, we thus do not write $X$, it is implicitly where it needs to be.

Now $t\mapsto F(t)$ is given and continuous, hence the right-hand side equation \eqref{edo Maxwell}
 satisfies the hypotheses of the Cauchy-Lipschitz theorem. In particular, we have uniqueness of local solutions to the Cauchy problem
$$
\frac{\partial F_i}{\partial t}(t)=\widehat K(F(t),F_i(t)),\quad F_i(0)=F_{i,0},
$$
for all $F_{i,0}\in\M_3^+$. The same holds for 
$$
\frac{\partial F^*_i}{\partial t}(t)=\widehat K(F^*(t),F^*_i(t)),\quad F^*_i(0)=F^*_{i,0},
$$
for all $F^*_{i,0}\in\M_3^+$. If we assume that $F^*_{i,0}=F_{i,0}$, then $F_i$ and $F^*_i$ are solutions of the same Cauchy problem, by hypothesis \eqref{AIM et edo}. Thus we have $F^*_i=F_i$ and
\begin{multline*}
T_{\mathrm{R}}^{*}(t)=\widehat T_{\mathrm{R}}\bigl(R(t)F(t),F_i^*(t)\bigr)=\widehat T_{\mathrm{R}}\bigl(R(t)F(t),F_i(t)\bigr)\\
=R(t)\widehat T_{\mathrm{R}}\bigl(F(t),F_i(t)\bigr)=R(t)T_{\mathrm{R}}(t),
\end{multline*}
by \eqref{AIM on y croit}, and frame-indifference is satisfied.
\end{proof}

For example, the model that  corresponds to \eqref{Captain Obvious} is frame-indifferent.
When the flow rule is given by a dissipation potential, we have the following characterization.
\begin{proposition}
 If the dissipation potential satisfies 
 \begin{equation}\label{potentiel Maxwell AIM}
 \widehat P_{\mathrm{diss}}(RF,\Lambda)=\widehat P_{\mathrm{diss}}(F,\Lambda),
 \end{equation} 
for all $R\in\SO(3)$, $F\in\M_3^+$, and $\Lambda\in\M_3$, then the 3d Maxwell model is frame-indifferent.
 \end{proposition}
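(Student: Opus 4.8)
The plan is to reduce this statement to the already established Proposition \ref{Maxwell 3D AIM}. That proposition guarantees frame-indifference of the whole 3d Maxwell model as soon as the flow rule satisfies the invariance hypothesis \eqref{AIM et edo}, namely $\widehat K(RF,F_i)=\widehat K(F,F_i)$ for all $R\in\SO(3)$. So the entire task is to show that invariance \eqref{potentiel Maxwell AIM} of the dissipation potential, together with the definition \eqref{potentiel diss maxwell} of $\widehat K$ through that potential, forces \eqref{AIM et edo}; the rest is an appeal to the earlier result.

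First I would verify that the thermodynamic force $\frac{\partial\widehat A_m}{\partial F_i}$, which occupies the second slot of the potential in \eqref{potentiel diss maxwell}, is itself invariant under the superimposed rotation $F\mapsto RF$. Starting from the explicit expression \eqref{force thermodynamique} and substituting $RF$ for $F$, one gets $-F_i^{-T}F^TR^T\frac{\partial\widehat W}{\partial F_e}(RFF_i^{-1})F_i^{-T}$. The frame-indifference of $\widehat W$ yields, after differentiation, the transformation rule $\frac{\partial\widehat W}{\partial F_e}(RF_e)=R\frac{\partial\widehat W}{\partial F_e}(F_e)$, the very relation already used in \eqref{AIM on y croit}. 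Inserting it makes the factors $R^TR$ collapse to the identity, leaving exactly $\frac{\partial\widehat A_m}{\partial F_i}(F,F_i)$, so that $\frac{\partial\widehat A_m}{\partial F_i}(RF,F_i)=\frac{\partial\widehat A_m}{\partial F_i}(F,F_i)$.

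Second, differentiating the potential invariance \eqref{potentiel Maxwell AIM} with respect to its second argument $\Lambda$ gives at once $\frac{\partial\widehat P_{\mathrm{diss}}}{\partial\Lambda}(RF,\Lambda)=\frac{\partial\widehat P_{\mathrm{diss}}}{\partial\Lambda}(F,\Lambda)$ for every $\Lambda$. I would then evaluate $\widehat K(RF,F_i)$ using \eqref{potentiel diss maxwell}: the first slot carries the rotated deformation $RF$, which is removed by the invariance just obtained, while the second slot equals $\frac{\partial\widehat A_m}{\partial F_i}(RF,F_i)=\frac{\partial\widehat A_m}{\partial F_i}(F,F_i)$ by the first step. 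Combining these two observations gives $\widehat K(RF,F_i)=\widehat K(F,F_i)$, which is precisely hypothesis \eqref{AIM et edo}, and the conclusion follows directly from Proposition \ref{Maxwell 3D AIM}.

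The argument is entirely elementary and there is no genuine obstacle. The only point that demands a little care is the first step, where one must keep track of the transposes and correctly apply the transformation rule for $\frac{\partial\widehat W}{\partial F_e}$ under a superimposed rotation, in order to confirm that the thermodynamic force is truly left-rotation invariant. I would also note in passing that the regularity required of $\widehat K$ by Proposition \ref{Maxwell 3D AIM}, namely continuity and local Lipschitz dependence on its second variable uniformly in the first, is inherited here as soon as $\widehat P_{\mathrm{diss}}$ is smooth enough, so that invoking that proposition is legitimate.
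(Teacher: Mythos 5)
Your proposal is correct and follows essentially the same route as the paper: reduce to Proposition \ref{Maxwell 3D AIM} by verifying hypothesis \eqref{AIM et edo}, using the $\Lambda$-differentiated invariance of the potential together with the invariance of $\frac{\partial\widehat A_m}{\partial F_i}$ under $F\mapsto RF$. The only cosmetic difference is that you establish the latter by explicit computation from \eqref{force thermodynamique}, whereas the paper simply differentiates the scalar identity $\widehat A_m(RF,F_i)=\widehat A_m(F,F_i)$ with respect to $F_i$; both are valid.
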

 
 \begin{proof}We go back to equation \eqref{potentiel diss maxwell} and note that
$$
\widehat K(RF,F_i)=-\frac{\partial  \widehat P_{\mathrm{diss}}}{\partial\Lambda}\Bigl(RF,\frac{\partial\widehat A_m}{\partial F_i}(RF,F_i)\Bigl)=-\frac{\partial  \widehat P_{\mathrm{diss}}}{\partial\Lambda}\Bigl(F,\frac{\partial\widehat A_m}{\partial F_i}(RF,F_i)\Bigl),
$$
by hypothesis \eqref{potentiel Maxwell AIM}. Now we have $\widehat A_m(RF,F_i)=\widehat A_m(F,F_i)$, therefore  $\frac{\partial\widehat A_m}{\partial F_i}(RF,F_i)=\frac{\partial\widehat A_m}{\partial F_i}(F,F_i)$ and the conclusion follows.  
 \end{proof}

Material symmetry is studied in a similar fashion. In particular, the initial conditions for the internal variable must be changed according to the symmetry considered.

\begin{proposition}\label{Maxwell 3D symetrie quelconque}Let $\mathcal{S}$ be a subgroup of $\SL(3)$. If we assume that the flow rule satisfies
 \begin{equation}\label{symetrie et edo}
\widehat K(FS,F_iS)=\widehat K(F,F_i)S,
\end{equation}
for all $S\in\mathcal{S}$, and all $F,F_i\in \M_3^+$, then the 3d Maxwell model has material symmetry group $\mathcal{S}$.

If there is a dissipation potential, then \eqref{symetrie et edo} is implied by 
 \begin{equation}\label{potentiel Maxwell fluide}
 \widehat P_{\mathrm{diss}}(FS,\Lambda S^{-T})=\widehat P_{\mathrm{diss}}(F,\Lambda),
 \end{equation}
 for all $S\in\mathcal{S}$, and all $F\in\M_3^+$, $\Lambda\in \M_3$.
\end{proposition}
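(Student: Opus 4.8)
The plan is to mirror the proof of Proposition~\ref{Maxwell 3D AIM}, replacing the left multiplication by a rotation with the right multiplication by $S$ that material symmetry dictates. Fix the material point $X_0$ and suppress it from the notation, so that all quantities depend on time $t$ alone. Since $\nabla\Psi_S(X_0)=S$ and $\Psi_S(X_0)=X_0$, comparing the deformation $\phi$ with $\phi\circ\Psi_S$ amounts to replacing the observable strain $F(t)$ by $F^*(t)=F(t)S$; no velocity gradient enters because the constitutive laws of the Maxwell model depend only on $(F,F_i)$. The decisive algebraic observation is that the internal elastic strain $F_e=FF_i^{-1}$ is invariant under the simultaneous substitution $(F,F_i)\mapsto(FS,F_iS)$, since $(FS)(F_iS)^{-1}=FF_i^{-1}$; this is precisely what makes the right action of $S$ compatible with the multiplicative decomposition $F=F_eF_i$.

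First I would transport the internal variable. For the Cauchy problem governing $F_i^*$ associated with $F^*=FS$, I claim that $F_i^*(t)=F_i(t)S$ is a solution once the initial datum is taken to be $F_{i,0}^*=F_{i,0}S$, that is, the initial condition must itself be acted on by $S$, exactly as announced before the statement. Indeed, $\frac{\partial}{\partial t}(F_iS)=\widehat K(F,F_i)S$, while the right-hand side of the transported equation is $\widehat K(FS,F_iS)$, and the two coincide precisely by hypothesis~\eqref{symetrie et edo}. The regularity of $\widehat K$ assumed before Proposition~\ref{Maxwell 3D AIM} yields uniqueness in the Cauchy--Lipschitz theorem, hence $F_i^*(t)=F_i(t)S$ for all $t$. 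It then remains to evaluate the stress: from \eqref{TR Maxwell} together with $(FS)(F_iS)^{-1}=F_e$ and $(F_iS)^{-T}=F_i^{-T}S^{-T}$, I obtain
$$\widehat T_{\mathrm{R}}(FS,F_iS)=\frac{\partial\widehat W}{\partial F_e}(F_e)F_i^{-T}S^{-T}=\widehat T_{\mathrm{R}}(F,F_i)S^{-T}=\widehat T_{\mathrm{R}}(F,F_i)\cof S,$$
since $S\in\SL(3)$ gives $\cof S=S^{-T}$. This is exactly the relation $T_{\mathrm{R}}^{\phi\circ\Psi_S}=T_{\mathrm{R}}^{\phi}\cof S$ defining $S$ as a material symmetry, and it holds for every $S\in\mathcal{S}$; the invariance $\widehat A_m(FS,F_iS)=\widehat W(F_e)=\widehat A_m(F,F_i)$ is consistent with it.

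For the second assertion, I would show that \eqref{potentiel Maxwell fluide} forces \eqref{symetrie et edo} when $\widehat K$ is generated through \eqref{potentiel diss maxwell}. The first step computes how the thermodynamic force transforms: using \eqref{force thermodynamique} and once more $(FS)(F_iS)^{-1}=F_e$, the middle factors $S^{-T}S^{T}=I$ cancel and one finds $\frac{\partial\widehat A_m}{\partial F_i}(FS,F_iS)=\Lambda S^{-T}$, where $\Lambda=\frac{\partial\widehat A_m}{\partial F_i}(F,F_i)$. The second step differentiates the invariance \eqref{potentiel Maxwell fluide} in $\Lambda$: writing the directional derivative with the Frobenius inner product and using the identity $A:(BC)=(AC^{T}):B$ to move $S^{-T}$ across the colon gives $\frac{\partial\widehat P_{\mathrm{diss}}}{\partial\Lambda}(FS,\Lambda S^{-T})=\frac{\partial\widehat P_{\mathrm{diss}}}{\partial\Lambda}(F,\Lambda)S$. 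Feeding the transformed force $\Lambda S^{-T}$ into \eqref{potentiel diss maxwell} then produces $\widehat K(FS,F_iS)=\widehat K(F,F_i)S$, which is \eqref{symetrie et edo}.

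The two genuinely non-routine points, on which I would concentrate, are the following. The first is the transpose bookkeeping in the second assertion: when differentiating the potential identity in $\Lambda$ one must track transposes carefully, since a misplaced $S$ versus $S^{-T}$ or a dropped transpose would yield the wrong transformation law and break the argument. The second is the correct transport of the initial condition for $F_i$ together with the appeal to uniqueness: the identity $F_i^*=F_iS$ is not automatic but is singled out among all solutions by the datum $F_{i,0}S$, just as rotations forced the \emph{unchanged} initial datum in Proposition~\ref{Maxwell 3D AIM}. Everything else reduces to the invariance of $F_e=FF_i^{-1}$ under the right action of $S$.
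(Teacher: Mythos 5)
Your proof is correct and follows essentially the same route as the paper: transport the internal variable by $F_i^{\phi\circ S}=F_i^\phi S$ via Cauchy--Lipschitz uniqueness with the $S$-transformed initial datum, read off $\widehat T_{\mathrm{R}}(FS,F_iS)=\widehat T_{\mathrm{R}}(F,F_i)S^{-T}$ from the invariance of $F_e=FF_i^{-1}$, and for the second assertion combine the transformation law $\frac{\partial\widehat A_m}{\partial F_i}(FS,F_iS)=\frac{\partial\widehat A_m}{\partial F_i}(F,F_i)S^{-T}$ from \eqref{force thermodynamique} with the differentiated potential identity $\frac{\partial\widehat P_{\mathrm{diss}}}{\partial\Lambda}(FS,\Lambda S^{-T})=\frac{\partial\widehat P_{\mathrm{diss}}}{\partial\Lambda}(F,\Lambda)S$. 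All the transpose bookkeeping you flag is handled correctly and matches the paper's computation.
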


\begin{proof}With the same notation as in section \ref{section symetries}, only abbreviating $\phi\circ\Psi_S$ as $\phi\circ S$, material symmetry reads
$$T_{\mathrm{R}}^{\phi\circ S}(X,t)=T_{\mathrm{R}}^\phi(X,t)\cof S=T_{\mathrm{R}}^\phi(X,t) S^{-T},$$
for all $\phi$ and $S\in \mathcal{S}$. We again drop $X$ from now on and end up with two Cauchy problems
$$
\left\{
\begin{aligned}
&\frac{\partial F_i^\phi}{\partial t}(t)=\widehat K\bigl(F(t),F_i^\phi(t)\bigr)\\
&F_i^\phi(0)=F_{i,0}^\phi
\end{aligned}
\right.
\text{ and }
\left\{
\begin{aligned}
&\frac{\partial F_i^{\phi\circ S}}{\partial t}(t)=\widehat K\bigl(F(t)S,F_i^{\phi\circ S}(t)\bigr)\\
&F_i^{\phi\circ S}(0)=F_{i,0}^{\phi\circ S}.
\end{aligned}
\right.
$$
Assuming that the initial conditions agree with the symmetry $S$, $F_{i,0}^{\phi\circ S}=F_{i,0}^{\phi}S$, due to \eqref{symetrie et edo} and Cauchy-Lipschitz uniqueness, we deduce that $F_i^{\phi\circ S}=F_i^{\phi}S$. 
Consequently,
\begin{multline*}
T_{\mathrm{R}}^{\phi\circ S}(t)=\widehat T_{\mathrm{R}}\bigl(F(t)S,F_i^{\phi\circ S}(t)\bigr)=\widehat T_{\mathrm{R}}\bigl(F(t)S,F_i^\phi(t) S\bigr)\\
=\frac{\partial\widehat  W}{\partial F_e}(F(t)(F_i^\phi(t))^{-1})(F_i^\phi (t)S)^{-T}=\widehat T_{\mathrm{R}}\bigl(F(t),F_i^\phi(t)\bigr)S^{-T}=T_{\mathrm{R}}^\phi(t)S^{-T},
\end{multline*}
that is to say that $S$ is a material symmetry.

Concerning dissipation potentials, it follows from \eqref{potentiel Maxwell fluide} that
$$
\frac{\partial  \widehat P_{\mathrm{diss}}}{\partial\Lambda}(FS,\Lambda S^{-T})=\frac{\partial  \widehat P_{\mathrm{diss}}}{\partial\Lambda}(F,\Lambda)S.
$$
But equation \eqref{force thermodynamique} implies that
$$
\frac{\partial  \widehat A_m}{\partial F_i}(FS,F_iS)=\frac{\partial  \widehat A_m}{\partial F_i}(F,F_i)S^{-T},
$$
so that \eqref{symetrie et edo} follows.
\end{proof}

\begin{remark}
It is very remarkable that no symmetry hypothesis is made on $\widehat W$. The material symmetry of the model relies entirely on the flow rule $\widehat K$ and is solely due to the internal variable, which in a sense resides on the Lagrangian side of things with our choice of factorization order.
Thus, depending on the flow rule, we can have solid, isotropic or even fluid materials with any stored energy function $\widehat W$, even those that classically describe elastic solids such as the Saint Venant-Kirchhoff or Ciarlet-Geymonat stored energy functions.

For example, the material defined by \eqref{Captain Obvious} is isotropic, even if $\widehat W$ is not. Indeed, its dissipation potential is just $\widehat P_{\mathrm{diss}}(F,\Lambda)=\frac\kappa2\|\Lambda\|^2$, and for all $R\in \SO(3)$,
$$
\widehat P_{\mathrm{diss}}(FR,\Lambda R)=\frac\kappa2\|\Lambda R\|^2=\widehat P_{\mathrm{diss}}(F,\Lambda).
$$
This material is not fluid since $\|\Lambda S^{-T}\|\neq\|\Lambda\|$ in general when $S\notin \SO(3)$ (this can also be checked on $\widehat K$ itself). It is therefore solid.
\end{remark}
  
  Let us now see whether it is actually possible to construct a fluid 3d Maxwell model that is frame-indifferent and satisfies the mechanical part of the Clausius-Planck inequalities. We first characterize all fluid dissipation potentials.
  
  \begin{proposition}A dissipation potential $\widehat P_{\mathrm{diss}}$ is fluid if and only if there exists a fonction $\check P_{\mathrm{diss}}\colon \M_3\times\R_+^*\to\R$ such that
  $$\widehat P_{\mathrm{diss}}(F,\Lambda)=\check P_{\mathrm{diss}}(\Lambda F^T\!,\det F).$$
  \end{proposition}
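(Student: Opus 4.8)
The plan is to read off both implications directly from the defining invariance of a fluid dissipation potential, namely \eqref{potentiel Maxwell fluide} with $\mathcal{S}=\SL(3)$:
$$\widehat P_{\mathrm{diss}}(FS,\Lambda S^{-T})=\widehat P_{\mathrm{diss}}(F,\Lambda)\quad\text{for all }S\in\SL(3),\ F\in\M_3^+,\ \Lambda\in\M_3.$$
The \emph{if} direction is then a one-line substitution: assuming $\widehat P_{\mathrm{diss}}(F,\Lambda)=\check P_{\mathrm{diss}}(\Lambda F^T,\det F)$, I would compute $(\Lambda S^{-T})(FS)^T=\Lambda S^{-T}S^TF^T=\Lambda F^T$ and $\det(FS)=\det F$ since $\det S=1$, so the argument of $\check P_{\mathrm{diss}}$ is unchanged and fluidity holds.

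For the \emph{only if} direction, the key observation is that the map $(F,\Lambda)\mapsto(\Lambda F^T,\det F)$ separates the orbits of the group action $(F,\Lambda)\mapsto(FS,\Lambda S^{-T})$. Concretely, given two pairs $(F_1,\Lambda_1)$ and $(F_2,\Lambda_2)$ with $\Lambda_1F_1^T=\Lambda_2F_2^T$ and $\det F_1=\det F_2$, I would set $S=F_1^{-1}F_2$. Then $\det S=\det F_2/\det F_1=1$, so $S\in\SL(3)$, and $F_2=F_1S$ by construction. Moreover $S^{-T}=F_1^TF_2^{-T}$, whence
$$\Lambda_1S^{-T}=\Lambda_1F_1^TF_2^{-T}=\Lambda_2F_2^TF_2^{-T}=\Lambda_2,$$
using the hypothesis $\Lambda_1F_1^T=\Lambda_2F_2^T$. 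Feeding this $S$ into the invariance yields $\widehat P_{\mathrm{diss}}(F_2,\Lambda_2)=\widehat P_{\mathrm{diss}}(F_1,\Lambda_1)$.

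This shows that $\widehat P_{\mathrm{diss}}(F,\Lambda)$ depends on $(F,\Lambda)$ only through $(\Lambda F^T,\det F)$, so I can define $\check P_{\mathrm{diss}}(A,j)$ as the common value of $\widehat P_{\mathrm{diss}}$ over any pair realizing $\Lambda F^T=A$ and $\det F=j$. To confirm $\check P_{\mathrm{diss}}$ is defined on all of $\M_3\times\R_+^*$, I would note that an arbitrary $(A,j)$ is realized by $F=j^{1/3}I$ and $\Lambda=j^{-1/3}A$. The identity $\widehat P_{\mathrm{diss}}(F,\Lambda)=\check P_{\mathrm{diss}}(\Lambda F^T,\det F)$ then holds by construction. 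There is no serious obstacle here; the only point requiring care is the well-definedness of $\check P_{\mathrm{diss}}$, which is precisely the orbit-separation argument above, together with the verification that $\Lambda_2=\Lambda_1S^{-T}$ follows from $\Lambda_1F_1^T=\Lambda_2F_2^T$.
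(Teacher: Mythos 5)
Your proof is correct and follows essentially the same route as the paper: the \emph{if} direction by direct substitution using $\det S=1$, and the \emph{only if} direction by showing that $(\Lambda F^T,\det F)$ separates the orbits of the action $(F,\Lambda)\mapsto(FS,\Lambda S^{-T})$ via the explicit choice $S=F_1^{-1}F_2$. The added remark on surjectivity of $(F,\Lambda)\mapsto(\Lambda F^T,\det F)$ onto $\M_3\times\R_+^*$ is a small extra precaution the paper leaves implicit.
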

 
 \begin{proof}Let us be given a dissipation potential giving rise to a fluid material. Given $F_1,F_2\in\M_3^+$ and $\Lambda_1,\Lambda_2\in\M_3$, we see that if  there exists $S\in \SL(3)$ such that $F_1=F_2S$ and $\Lambda_1=\Lambda_2S^{-T}$, then $\det F_1=\det F_2$ and $S=F_2^{-1}F_1$ so that $\Lambda_1F_1^T=\Lambda_2F_2^T$. Conversely, if $\det F_1=\det F_2$ and  $\Lambda_1F_1^T=\Lambda_2F_2^T$, then $S=F_2^{-1}F_1\in\SL(3)$, and $F_1=F_2S$ and $\Lambda_1=\Lambda_2S^{-T}$. It follows that $\widehat P_{\mathrm{diss}}$ is actually a function of $\Lambda F^T$ and $\det F$.
 
 Conversely, given any function $\check P_{\mathrm{diss}}$ as above, if we define $\widehat P_{\mathrm{diss}}$ by $\widehat P_{\mathrm{diss}}(F,\Lambda)=\check P_{\mathrm{diss}}(\Lambda F^T\!,\det F)$, then for all $S\in \SL(3)$
 $$
 \widehat P_{\mathrm{diss}}(FS,\Lambda S^{-T})=\check P_{\mathrm{diss}}(\Lambda S S^{-1}F^T\!,\det (FS))
 =\check P_{\mathrm{diss}}(\Lambda F^T\!,\det F)= \widehat P_{\mathrm{diss}}(F,\Lambda )
 $$
 and we have a fluid material.
 \end{proof}
 
 \begin{remark}We need such dissipation potentials to be frame-indifferent as well, which amounts to requiring that $\check P_{\mathrm{diss}}(NR,J)=\check P_{\mathrm{diss}}(N,J)$ for all $N\in\M_3$, $R\in\SO(3)$ and $J\in\R_+^*$. 
 
 Finally, to make sure that the mechanical part of the Clausius-Planck inequalities is satisfied, the potentials should be convex with respect to $\Lambda$, nonnegative and zero for $\Lambda=0$. An easy example of such a potential satisfying all the above conditions is 
$$
 \widehat P_{\mathrm{diss}}(F,\Lambda)=\frac \kappa2\|\Lambda F^T\|^2
$$
with $\kappa>0$, which thus yields a fluid 3d Maxwell material. For this material, we have
$$
 \frac{\partial  \widehat P_{\mathrm{diss}}}{\partial\Lambda}(F,\Lambda)=\kappa\Lambda F^TF,
 $$
 which results in the flow rule
$$
 \widehat K(F,F_i)=-\frac{\partial  \widehat P_{\mathrm{diss}}}{\partial\Lambda}\Bigl(F,\frac{\partial\widehat A_m}{\partial F_i}(F,F_i)\Bigl)=\kappa F_i^{-T}F^T\frac{\partial\widehat  W}{\partial F_e}(FF_i^{-1})F_i^{-T}F^TF.
$$
 \end{remark}
 
 \begin{remark}
 So far, there were next to no assumptions on the elastic energy $\widehat W$, except to be frame-indifferent. It is thus unlikely that the resulting models, either solid or fluid, would exhibit stress relaxation in all cases. In the case of an isotropic function $\widehat W$, it is not hard to give reasonable sufficient conditions in the particular case of uniform dilatations $F(X,t)= \alpha(t) I$ and $F_i(X,t)=\beta(t) I$, that ensure stress relaxation, either for the first Piolà-Kirchhoff stress vector or more physically for the Cauchy stress vector. We do not pursue in this direction here. 
 \end{remark}
 
 \begin{remark}We have not included temperature in the model, but it can easily be added. If the underlying nonlinearly elastic stored energy function has a natural state $F_0$, \emph{i.e.}, $\frac{\partial\widehat  W}{\partial F_e}(F_0)=0$, then for all $F\in \M_3^+$, $\frac{\partial\widehat A_m}{\partial F_i}(F,F_0^{-1}F)=0$ by equation \eqref{force thermodynamique} and Proposition \ref{CP OK} applies, showing that this a situation where the Clausius-Duhem inequality is equivalent to the Clausius-Planck inequalities if the heat flux law does not depend on $F_i$, an assumption that is fairly reasonable.
 \end{remark}
 
 \subsection{Nonlinear  3d Kelvin-Voigt and 3d generalized Maxwell models }
 The 0d Kelvin-Voigt rheological model consists in a spring and a dashpot in parallel, so that their forces add up. There is thus no internal variable and the model fits well within the general visco-elastic framework without temperature nor internal variables. The natural 3d generalization is thus a special case of our general case. More specifically, we take a frame-indifferent nonlinearly elastic stored energy function $\widehat W$, use it as free energy, \emph{i.e.}, $\widehat A_m(F)=\widehat W(F)$ (again with $\Rho=1$), and choose any constitutive law for the dissipative part of the stress $\widehat T_{\mathrm{Rd}}\colon \M_3^+\times \M_3\to \M_3$ such that
 $$\widehat T_{\mathrm{Rd}}(F,H):H\ge 0,$$
 for instance 
 $$\widehat T_{\mathrm{Rd}}(F,H)=\nu \Sym(HF^{-1})F^{-T}$$
with $\nu>0$, which corresponds to a symmetric Cauchy stress and is obtained from the frame-indifferent dissipation potential $\widehat P_{\mathrm{diss}}(F,H)=\frac{\nu}2\|\Sym(HF^{-1})\|^2$.
 This yields a constitutive law for the stress tensor of the form
 $$
 \widehat T_{\mathrm{R}}(F,H)=\frac{\partial\widehat W}{\partial F}(F)+\widehat T_{\mathrm{Rd}}(F,H).
 $$
 
Here again, sufficient conditions can be given so that in the particular case of uniform dilatations, $F(X,t)= \alpha(t) I$ and $H(X,t)=\alpha'(t) I$, creep---a characteristic feature of the  Kelvin-Voigt  model---does occur.

Another popular rheological model is the 0d generalized Maxwell model which consists in $n+1$ branches connected in parallel, with a spring in branch $0$ and a spring and dashpot in series in each of the other $n$ branches. This setup is easily extended to a 3d framework with an internal variable model. We still have $F\in\M_3^+$ as thermodynamic variable and an internal variable $F_i\in (\M_3^+)^n$. We denote by $F_{i,k}\in\M_3^+$ the $k$-th component of $F_i$. We then consider $n+1$ frame-indifferent nonlinearly elastic stored energy functions $\widehat W_k$, $k=0,\ldots, n$, and define a constitutive law for the free energy by
$$
\widehat A_m(F,F_i)=\widehat W_0(F)+\sum_{k=1}^n\widehat W_k(FF_{i,k}^{-1}).
$$
This yields a constitutive law for the stress of the form
$$
\widehat T_{\mathrm{R}}(F,F_i)=\frac{\partial\widehat  W_0}{\partial F}(F)+\sum_{k=1}^n\frac{\partial\widehat  W_k}{\partial F_e}(FF_{i,k}^{-1})F_{i,k}^{-T}.
$$

A flow rule derived from a frame-indifferent convex dissipation potential will make the model frame-indifferent and satisfying the mechanical part of the Clausius-Planck inequalities. The simplest example of such a potential would be $\widehat P_{\mathrm{diss}}(F,\Lambda)=\frac\kappa2\sum_{k=1}^n\|\Lambda_k\|^2$ with $\kappa>0$ for which the ordinary differential equations for each component of the internal variable decouple,
 $$
\frac{\partial F_{i,k}}{\partial t}(t)=\widehat K_k(F(t),F_{i,k}(t)),
\quad \widehat K_k(F,F_{i,k})=\kappa F_{i,k}^{-T}F^T\frac{\partial\widehat  W_k}{\partial F_e}(FF_{i,k}^{-1})F_{i,k}^{-T}.
$$

Note that material symmetry considerations now involve not only the functions $\widehat K_k$, but $\widehat W_0$ as well. For instance, if the material is to be fluid, then $\widehat W_0$ must be the stored energy function of an elastic fluid, \emph{i.e.}, a function of $J$ only. 

With appropriate assumptions, 3d generalized Maxwell models should be able to exhibit both stress relaxation and creep. We could also mix generalized Maxwell and Kelvin-Voigt together to obtain 3d frame-indifferent models, using thermodynamic variables $F$ and $H$, and internal variables $F_i$ in the fairly obvious fashion.

\subsection{Oldroyd B and Zaremba-Jaumann fluids}\label{fluides complexes}
We conclude this article with two examples of so-called complex fluids, which at first glance do not seem to fit in our general framework, even though they actually do. These fluids are easier to work with in the Eulerian description. Their main characteristic is that the constitutive law for the Cauchy stress is not given by a function of the thermodynamic variables, but by an ordinary differential equation in time, again with often unspoken initial conditions. In the simplest cases, this ordinary differential equation takes the form 
\begin{equation}\label{edo objective generale}
\overset{\diamond}{\sigma}(x,t)=G(\sigma(x,t),d(x,t),\overset{\diamond}{d}(x,t)),
\end{equation}
where $\diamond$ is a differential operator which is of first order in time, and $G$ is some given function. The operator is often---but not always---of the form
\def\Ob{\mathrm{Ob}}
\begin{equation}\label{forme derivee objective}
\overset{\diamond}{\sigma}=\dot \sigma+\Ob(\sigma,h),
\end{equation}
where $\Ob\colon\Sym_3\times \M_3\to\Sym_3$. We recall that $h$ stands for $\nabla_xv$, $d$ for its symmetric part and $w$ for its skew-symmetric part.

In order for such a behavior to be frame-indifferent, the operator $\diamond$ needs to be objective in the following sense.

\begin{definition}An operator $\diamond$ is objective if
$$
\overset{\diamond^*}{\sigma^{\smash{*}}}(x^*,t)=R(t)\overset{\diamond}{\sigma}(x,t)R(t)^T,
$$
for all possible  $\sigma$  and functions $t\mapsto R(t)\in\SO(3)$, using the starred-unstarred notation as before.
\end{definition}

Such an operator is usually called an objective derivative, even though it is not a derivative in the usual technical sense. If the function $G$ is itself frame-indifferent in the sense of 
$$
G(R\sigma R^T,RdR^T, ReR^T)=RG(\sigma,d,e)R^T,
$$
and if the ordinary differential equation has reasonable local uniqueness, then the model will be frame-indifferent.

As is well known, the material derivative $\cdot$, which is a real derivative, is not objective because of the terms involving $\dot R(t)$. There are infinitely many different objective derivatives, of the above form or otherwise. We single out two of the most prominent ones in the literature, the Zaremba-Jaumann derivative, which is the earliest example \cite{Zaremba} and in some sense the simplest one, and the Oldroyd B derivative \cite{Oldroyd}. 
 \begin{definition}The Zaremba-Jaumann derivative is defined by 
$$\overset{\vartriangle}{\sigma}=\dot{\sigma}+\sigma w-w\sigma,$$
and the Oldroyd B derivative by 
$$\overset{\triangledown}{\sigma}=\dot \sigma-h\,\sigma-\sigma h^T.$$
Both are objective derivatives.
\end{definition}
The Oldroyd B derivative is classically used to describe a complex fluid consisting of two components, a polymer and a solvent. The equation for the stress is 
$$
\sigma+\lambda_1\overset{\triangledown}{\sigma}=2\eta(d+\lambda_2\overset{\triangledown}{d}),
$$
where $\eta$, $\lambda_1>\lambda_2$ are strictly positive constants, see \cite{Renardy}. Actually, the Oldroyd B fluid is assumed to be incompressible, so there is also an additional indeterminate pressure term which we do not write as it plays no role in the Clausius-Planck inequality. We adopt here the same equation for a Zaremba-Jaumann fluid, \emph{i.e.},
$$
\sigma+\lambda_1\overset{\vartriangle}{\sigma}=2\eta(d+\lambda_2\overset{\vartriangle}{d}),
$$
again in an incompressible context, see  also \cite{Eiter}.

In a first approach, we perform the Coleman-Noll procedure in this Eulerian, incompressible setting, using only $h$ as thermodynamic variable, with no internal variables, and replacing the constitutive law for the dissipative stress by the differential equation \eqref{edo objective generale}, with a free energy only depending on temperature, so that thermal effects are decoupled from mechanical effects. We skip the details here, but the outcome is that the mechanical part of the Clausius-Planck inequalities reduces to the internal dissipation inequality
$$\sigma:d\ge 0$$
for all arguments and corresponding solutions of the objective differential equation.

In the case of Oldroyd B, if we ignore incompressibility but still with zero free energy so with no elastic pressure, it is fairly easy to construct such arguments and solutions for which, even though $\sigma(0):d(0)\ge 0$, there is a time $t_0$ such that for all $t>t_0$, $\sigma(t):d(t)< 0$, \emph{i.e.}, the second principle is violated. Taking incompressibility into account, we only have numerical evidence of the same violation, see Figure \ref{Oldroyd B c est nul} below. We numerically tested the following example. 

First of all, it is a simple algebraic manipulation to show that the equation
\begin{equation}\label{equation objective generique}
\sigma+\lambda_1\overset{\diamond}{\sigma}=2\eta(d+\lambda_2\overset{\diamond}{d}),
\end{equation}
can be equivalently rewritten as $\sigma=\sigma_s+\sigma_p$, where the subscript $s$ is for solvent and the subscript $p$ is for polymer, with 
\begin{equation}\label{equation objective generique decouplee}
\sigma_s=2\eta_sd,\quad \sigma_p+\lambda_1\overset{\diamond}{\sigma}_p=2\eta_pd,
\end{equation}
where $\eta_s=\frac{\lambda_2}{\lambda_1}\eta$ is the solvent viscosity and  $\eta_p=\bigl(1-\frac{\lambda_2}{\lambda_1}\bigr)\eta$ the polymer viscosity, see \cite{Renardy} for the Oldroyd B case.

 Let $m$ be a randomly chosen  traceless $3\times 3$ matrix. We pick a point $x_0\in E$ and let $v(x,t)=\cos(\omega t)m(x-x_0)$, which amounts to shaking the fluid periodically in time. Since $v(x_0,t)=0$, we thus have
$$\overset{\triangledown}{\sigma}_p(x_0,t)=\frac{\partial\sigma_p}{\partial t}(x_0,t)-\cos(\omega t)(m\sigma_p(x_0,t)+\sigma_p(x_0,t)m^T).$$
We take the values $\eta=1$, $\lambda_2=1$, $\lambda_1=10$, $\omega=0.75$ and an initial polymer stress value $\sigma_p(0)=0$, removing from the notation the point $x_0$, which remains fixed throughout. We then use a standard ode solver to approximate the solution of the $\Sym_3$-valued Cauchy problem
$$
\sigma_p'(t)=\cos(\omega t)(m\sigma_p(t)+\sigma_p(t) m^T)+\frac1{\lambda_1}\bigl(-\sigma_p(t)+2\eta_pd(t)\bigr),
$$
with the above initial value on the time interval $[0,4]$. We then compute and plot $\sigma(t):d(t)$ on the same time interval and obtain the typical kind of evolution portrayed in Figure \ref{Oldroyd B c est nul}, which exhibits strictly negative dissipation for some periods of time. The behavior appears to be quite generic with respect to the choice of numerical values for the constants.

The same numerical computations performed with the Zaremba-Jaumann derivative instead of the Oldroyd B derivative yield the same kind of quantitative behavior for the internal dissipation. Given the rather innocuous ordinary differential equations and the accuracy of standard solvers, we are thus led to very strongly suspect that this form of the second principle is violated by both incompressible Oldroyd B and incompressible Zaremba-Jaumann fluids. This can be informally explained by the fact that such an ordinary differential equation causes the stress to  lag behind the stretching tensor in a sense, so that in a periodic shaking scenario, they may find themselves in opposition of phase at some point in time.
\begin{figure}[htp]
\begin{center}
\includegraphics[scale=.4]{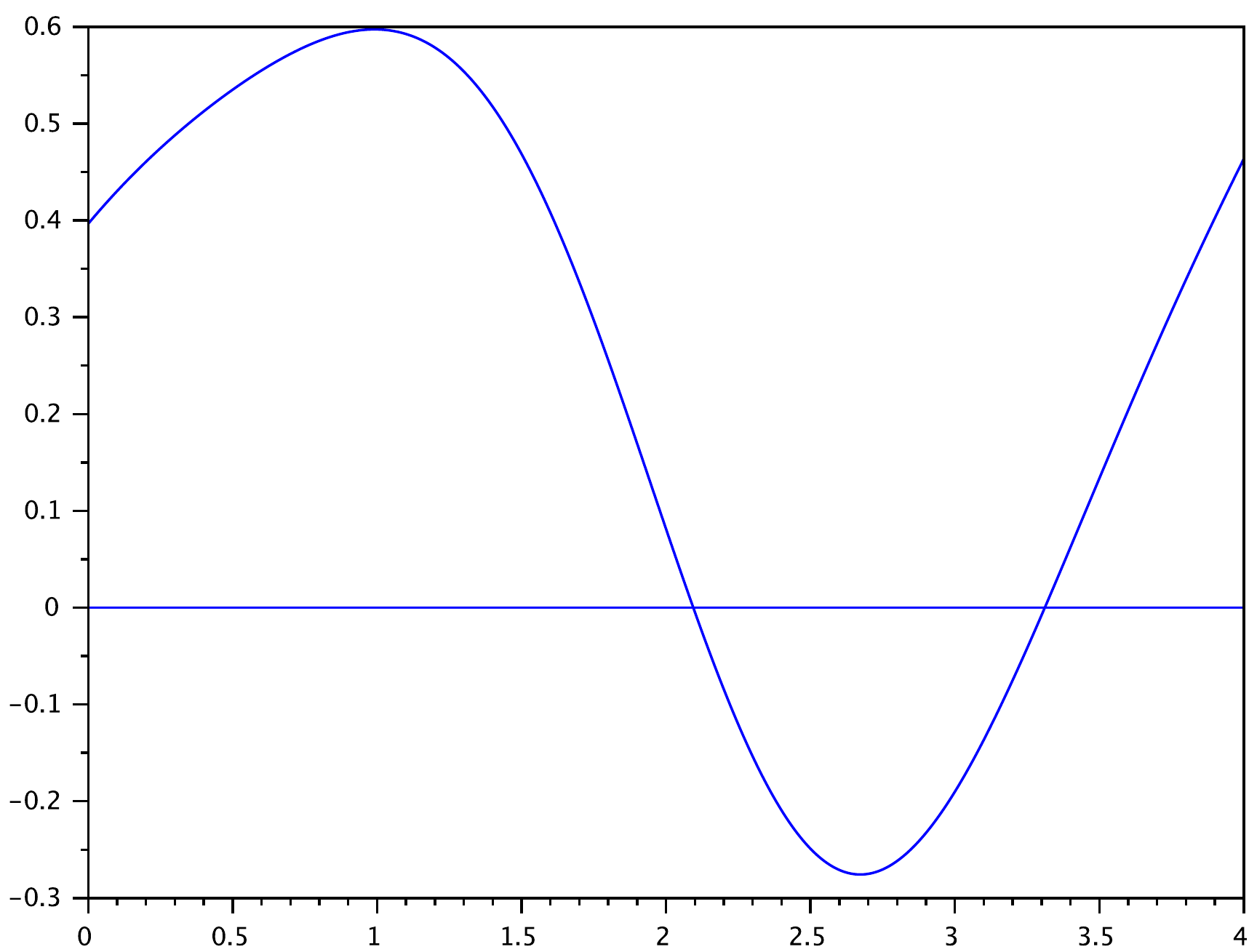}
\caption{Internal dissipation in an Oldroyd B fluid}\label{Oldroyd B c est nul}
\end{center}
\end{figure}

However,  in a second approach, we are going to show that, contrary to appearances, both Oldroyd B and Zaremba-Jaumann fluids can actually be considered as kinematically viscous fluids with an internal variable. They are thus included of our general framework, suitably modified to take incompressibility into account, which is not difficult in the Eulerian description. 

Before going into the specifics of Oldroyd B and Zaremba-Jaumann fluids, let us give a quick rundown of the Coleman-Noll procedure in the Eulerian incompressible case.

Ignoring temperature, we thus have one thermodynamic variable $h$, which is a traceless matrix. We also have two kinds of internal variables $(\pi,\xi)$ and constitutive laws $\widehat a_m(h,\pi,\xi)$ for the free energy and $\widehat \sigma(h,\pi,\xi)$ for the Cauchy stress. As before, we must assume an ordinary differential equation 
$$
\dot\xi=\widehat k(h,\pi,\xi),
$$
and the second principle implies that $\widehat a_m$ only depends on $\xi$ and that the dissipation inequality
\begin{equation}\label{dissipation interne euler var int}
\Bigl(\widehat\sigma(h,\pi,\xi)-pI\Bigr):d-\rho\frac{\partial\widehat a_m}{\partial\xi}(\xi)\cdot\widehat k(h,\pi,\xi)\ge 0
\end{equation}
holds, where $p$ is the indeterminate pressure. Of course, by incompressibility, we have $I:d=0$ and the corresponding term disappears from the dissipation inequality. We can also take $\rho=1$ for the same reason.

We see the same natural decomposition of the (Cauchy) stress
$$
\widehat\sigma(h,\pi,\xi)=\widehat\sigma_{\text{diss}}(h,\pi,\xi)+pI,
$$
into a dissipative part and here an indeterminate pressure part, which would be replaced by an elastic pressure part in the compressible case.

The dissipation potential idea works here too, \emph{i.e.}, a function  $\widehat p_{\mathrm{diss}}\colon \M_3\times\R^m\times\R^k\to\R_+$ convex with respect to its first and third arguments and such that $\widehat p_{\mathrm{diss}}(0,\pi,0)=0$. Then, 
 $$\widehat\sigma_{\text{diss}}(h,\pi,\xi)=\frac{\partial \widehat p_{\mathrm{diss}}}{\partial h}\Bigl(h,\pi,\frac{\partial\widehat a_m}{\partial\xi}(\xi)\Bigr)$$
gives a constitutive law for the dissipative part of the stress and
  $$\widehat k(h,\pi,\xi)=-\frac{\partial  \widehat p_{\mathrm{diss}}}{\partial\lambda}\Bigl(h,\pi,\frac{\partial\widehat a_m}{\partial\xi}(\xi)\Bigr)$$
a flow rule for the internal variable $\xi$, which ensure the mechanical part of the Clausius-Planck inequalities. Let us note that the symmetry of the  Cauchy stress tensor implies that $\widehat p_{\mathrm{diss}}$ only depends on $h$ through $d=\Sym(h)$. We could also discuss frame-indifference issues.

Let us now see how Oldroyd B and Zaremba-Jaumann fluids fit into this picture. We go back to decomposition \eqref{equation objective generique decouplee}.
The idea is to set $\xi=\sigma_p$ and use the constitutive law $\widehat\sigma_{\text{diss}}(h,\xi)= 2\eta_sd+\xi$, together with the ordinary differential equation $\xi+\lambda_1\overset{\diamond}{\xi}=2\eta_pd$, which assumes the requisite forms
$$
\dot\xi=h\xi+\xi h^T+\frac1{\lambda_1}\bigr({-}\xi+2\eta_pd\bigl),
$$
\emph{i.e.},
\begin{equation}\label{flow rule OB}
\widehat k(h,\xi)=h\xi+\xi h^T+\frac1{\lambda_1}\bigr({-}\xi+2\eta_pd\bigl),
\end{equation}
for Oldroyd B (there is no $\pi$ kind of internal variable), and 
$$
\dot\xi=w\xi-\xi w+\frac1{\lambda_1}\bigr({-}\xi+2\eta_pd\bigl),
$$
\emph{i.e.},
\begin{equation}\label{flow rule ZJ}
\widehat k(h,\xi)=w\xi-\xi w+\frac1{\lambda_1}\bigr({-}\xi+2\eta_pd\bigl),
\end{equation}
for Zaremba-Jaumann. It is worth mentioning that any complex fluid model based on such a differential equation as \eqref{equation objective generique}, with an objective derivative of the form \eqref{forme derivee objective} fits equally well in this mould.

On a side note, in both Oldroyd B and Zaremba-Jaumann cases, $\widehat k$ does not depend on $h$ only through $d$ (take $h$ skew-symmetric), which precludes the existence of a dissipation potential.

The question now is, is it possible to choose free energies $\widehat a_m(\xi)$ such that inequality \eqref{dissipation interne euler var int} is satisfied by either one of the two models?

The Oldroyd B case in answered in the negative by the following proposition.

\begin{proposition}There exists no $C^2$ function $\widehat a_m$ such that the dissipation inequality \eqref{dissipation interne euler var int} is satisfied by the Oldroyd B fluid.
\end{proposition}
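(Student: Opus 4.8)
The plan is to argue by contradiction and to exploit the fact that a free energy is an \emph{exact potential}, so that it integrates to zero over any closed process in $\xi$-space. The decisive feature will be that the obstruction to the dissipation inequality is global rather than pointwise.

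First I would record the pointwise reduction, if only to see why a purely local attack cannot work. Writing $A=\frac{\partial \widehat a_m}{\partial\xi}(\xi)\in\Sym_3$ and $\widehat\sigma_{\text{diss}}(h,\xi)=2\eta_sd+\xi$, inequality \eqref{dissipation interne euler var int} reads $\widehat\sigma_{\text{diss}}:d-A:\widehat k(h,\xi)\ge0$ with $\widehat k$ given by \eqref{flow rule OB}. Testing with the skew part $w$ of $h$, which is arbitrary, forces $A:(w\xi-\xi w)=\tr([\xi,A]w)=0$ for all skew $w$, hence $[\xi,A]=0$; minimising the remaining expression, a quadratic in the traceless symmetric $d$, over all such $d$ then yields the algebraic condition
\[
\frac{8\eta_s}{\lambda_1}\,A:\xi\ \ge\ \bigl\|\bigl(\xi-2A\xi-\tfrac{2\eta_p}{\lambda_1}A\bigr)_{\mathrm{dev}}\bigr\|^2 .
\]
The point I would stress is that this inequality is solvable in $A$ for every fixed $\xi$ (for a generic $\xi$ the slack between the two sides can even be made arbitrarily large), so no contradiction is reachable at a single state. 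The tensor $A$ is, however, not free: it must be $\frac{\partial \widehat a_m}{\partial\xi}$ of a genuine potential, and it is precisely this integrability that has to be violated.

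Accordingly, I would pass to cyclic processes. Along any genuine trajectory of the fluid one has $\dot\xi=\widehat k(h,\xi)$, so $A:\widehat k=\frac{\partial\widehat a_m}{\partial\xi}(\xi):\dot\xi=\frac{d}{dt}\bigl(\widehat a_m(\xi(t))\bigr)$ and the internal dissipation becomes $\widehat D=\widehat\sigma_{\text{diss}}:d-\frac{d}{dt}\widehat a_m(\xi)$. If some $C^2$ function $\widehat a_m$ made $\widehat D\ge0$ for all arguments, then $\widehat D\ge0$ along every trajectory; integrating over one period of a process for which $\xi$ is time-periodic, the potential term drops out by the fundamental theorem of calculus, leaving
\[
0\le\oint \widehat D\,dt=\oint\widehat\sigma_{\text{diss}}:d\,dt=2\eta_s\oint\|d\|^2\,dt+\oint \xi:d\,dt .
\]
It then suffices to exhibit one admissible time-periodic process with $\oint\widehat\sigma_{\text{diss}}:d\,dt<0$, which contradicts $\widehat D\ge0$ for \emph{every} choice of $\widehat a_m$. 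To build such a process I would prescribe a periodic velocity gradient, for instance $h(t)=\cos(\omega t)\,m$ with $m$ a fixed traceless matrix (the periodic shaking of the numerical experiment), or a piecewise-constant $h$ alternating between two traceless values. With $h$ imposed, $\xi=\sigma_p$ solves the \emph{linear} non-autonomous equation $\dot\xi=h\xi+\xi h^T-\frac1{\lambda_1}\xi+\frac{2\eta_p}{\lambda_1}d$; thanks to the damping $-\frac1{\lambda_1}\xi$ the monodromy over one period is a contraction, so Floquet theory provides a unique periodic solution $\xi_{\mathrm{per}}$, along which $\oint\widehat\sigma_{\text{diss}}:d\,dt$ is a well-defined function of $\omega$ and $m$.

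The main obstacle is to prove that this cycle integral is \emph{strictly negative} for a suitable $(\omega,m)$. The manifestly positive term $2\eta_s\oint\|d\|^2$ must be overcome by a negative $\oint\xi_{\mathrm{per}}:d$, and this is exactly where the upper-convected structure of the Oldroyd B derivative is decisive: the term $h\xi+\xi h^T$ feeds into the energy balance through $\xi:(h\xi+\xi h^T)=2\tr(\xi^2h)\neq0$, introducing a genuine phase lag between $\xi_{\mathrm{per}}$ and $d$, so that their correlation over a period can be made negative and dominant near resonance. I would establish this either in closed form for a piecewise-constant $h$, where $\xi_{\mathrm{per}}$ is assembled from matrix exponentials, or by the harmonic-balance computation underlying Figure \ref{Oldroyd B c est nul}. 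By contrast, for the Zaremba-Jaumann derivative the analogous term $w\xi-\xi w$ is energy-neutral, $\xi:(w\xi-\xi w)=0$, so no negative correlation is forced, which is precisely why a free energy does exist there. Turning the sign of $\oint\xi_{\mathrm{per}}:d$ from numerical evidence into a rigorous inequality is the crux of the proof.
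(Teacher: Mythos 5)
Your pointwise reduction is correct and coincides with the necessary conditions the paper extracts (commutation of $\xi$ with $A=\partial\widehat a_m/\partial\xi$, the sign condition $A:\xi\ge0$, and the discriminant inequality obtained by minimising over traceless symmetric $d$), and your diagnosis that the contradiction must come from the integrability of $A$ rather than from any single state is also the paper's diagnosis. The two arguments then diverge completely: the paper stays local, uses the pointwise conditions to pin down the form of $z(\xi)=-\partial\widehat a_m/\partial\xi$ in a neighborhood of $\xi=0$, and derives the contradiction from the Schwarz (curl-free) relations that a gradient must satisfy --- it never looks at trajectories. Your route through cyclic processes is genuinely different, and its crux --- exhibiting a periodic process with $\oint\sigma:d\,dt<0$ --- is not merely left unproven: it is false, so the approach cannot be completed.

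The obstruction is the conformation-tensor free energy. Set $c=I+\frac{\lambda_1}{\eta_p}\,\xi$ and $\psi(\xi)=\frac{\eta_p}{2\lambda_1}\bigl(\tr c-\ln\det c-3\bigr)$. A direct computation with the flow rule \eqref{flow rule OB} gives, for every traceless $h$ and every $\xi$ with $c$ positive definite,
\begin{equation*}
\sigma:d-\frac{\partial\psi}{\partial\xi}(\xi):\widehat k(h,\xi)=2\eta_s\|d\|^2+\frac{\eta_p}{2\lambda_1^2}\tr\bigl(c+c^{-1}-2I\bigr)\ \ge\ 0 ,
\end{equation*}
since $\tr(c+c^{-1}-2I)=\sum_i(\lambda_i-1)^2/\lambda_i$ for the eigenvalues $\lambda_i>0$ of $c$. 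Moreover $c$ obeys $\overset{\triangledown}{c}=-\frac1{\lambda_1}(c-I)$, and positive definiteness is preserved along every trajectory: if $c(t_*)u=0$ for a unit vector $u$, then $\frac{d}{dt}\bigl(u^Tc\,u\bigr)=\frac1{\lambda_1}>0$ at $t_*$, so $c$ cannot reach the boundary of the cone. The unique periodic orbit produced by your contraction/Floquet argument therefore lies in the invariant cone $\{c>0\}$, and integrating the identity above over one period, the exact term $\frac{d}{dt}\psi(\xi(t))$ drops out and $\oint\sigma:d\,dt\ge0$ for \emph{every} admissible periodic process: the negative excursions of $\sigma:d$ visible in Figure \ref{Oldroyd B c est nul} are exactly compensated within each cycle. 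Any trajectory- or cycle-based argument is thus structurally blind to the obstruction, because solutions of \eqref{flow rule OB} never leave the set $\{\xi:\ \xi+\frac{\eta_p}{\lambda_1}I>0\}$ on which a perfectly good free energy exists. A proof of the proposition has to exploit the requirement that \eqref{dissipation interne euler var int} hold as a constitutive inequality for \emph{all} $\xi\in\Sym_3$, and to differentiate the resulting representation of $\nabla\widehat a_m$ (as the paper does) rather than integrate it along orbits.
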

\begin{proof}In the Oldroyd B model, we have $\tr(\sigma_s)=0$, but the trace of $\sigma_p=\xi$ is not constrained to any given value. We can thus take any symmetric matrix as initial value or in the ordinary differential equation. Let us accordingly assume that there exists a $C^2$ function $\widehat a_m\colon \Sym_3\to\R$ such that
\begin{equation}\label{un clausius-planck de plus}
\sigma:d+z(\xi):\widehat k(h,\xi)\ge 0\text{ with }z(\xi)=-\frac{\partial \widehat a_m}{\partial\xi}(\xi)
\end{equation}
and $\widehat k$ is given by \eqref{flow rule OB}, for all $h\in\M_3$ such that $\tr h=0$ and all $\xi\in\Sym_3$.
Inequality \eqref{un clausius-planck de plus} is first expanded as
$$
2\eta_s\|d\|^2+\xi:d+z(\xi):\widehat k(h,\xi)\ge 0.
$$
We remark that the left-hand side is a  polynomial of degree at most $2$ in  $h$. In particular, the transformation $h\to sh$ with $s\in \R$ shows that 
\begin{equation}\label{trois clausius-planck de plus}
2\eta_s\|d\|^2s^2+\biggl(\xi:d+z(\xi):\Bigl(h\xi+\xi h^T+\frac{2\eta_p}{\lambda_1}d\Bigr)\biggr)s-\frac1{\lambda_1}z(\xi):\xi\ge 0
\end{equation}
for all $s\in\R$ and all $\xi$ and $h$. In the sequel, we let $\alpha=\frac{\eta_p}{\lambda_1}$ and $\beta=\frac{8\eta_s}{\lambda_1}$. 

Setting $s=0$, we obtain a first necessary condition
\begin{equation}\label{une CN de plus}
z(\xi):\xi\le 0
\end{equation}
for all $\xi$. Then, there is a discussion according to whether
\begin{itemize}
\item $d\neq0$, in which case \eqref{trois clausius-planck de plus} holds if and only if the discriminant is nonpositive, 
\begin{equation}\label{deux CN de plus A}
\bigl(\xi:d+z(\xi):(h\xi+\xi h^T+2\alpha d)\bigr)^2+\beta\|d\|^2z(\xi):\xi\le 0,
\end{equation}
\item $d=0$, in which case  \eqref{trois clausius-planck de plus} holds if and only if
\begin{equation}\label{deux CN de plus B}
z(\xi):(h\xi+\xi h^T)= 0.
\end{equation}
Now if $d=0$, then $h$ is skew-symmetric, in particular traceless, and the previous condition reads $z(\xi):(h\xi-\xi h)=-2\bigl(\xi z(\xi)\bigr):h= 0$, for all $h\in\Skew_3$. We deduce from this a second necessary condition,
\end{itemize}
\begin{equation}\label{trois CN de plus}
\xi z(\xi)\in \Sym_3\text{ \emph{i.e.} $\xi$ and $z(\xi)$ commute.}
\end{equation}

Let us go back to \eqref{deux CN de plus A}.
First of all,
$$
z(\xi):(h\xi+\xi h^T)=2\bigl(\xi z(\xi)\bigr):h=2\bigl(\xi z(\xi)\bigr):d
$$
because $\xi z(\xi)$ is symmetric by \eqref{trois CN de plus}. This provides a new equivalent version of \eqref{deux CN de plus A},
\begin{equation}\label{deux CN de plus A bis}
\bigl((\xi+2\xi z(\xi)+2\alpha z(\xi)):d\bigr)^2+\beta\|d\|^2z(\xi):\xi\le 0.
\end{equation}
By the Cauchy-Schwarz inequality, the worst case scenario for the first term in the left-hand side of  \eqref{deux CN de plus A bis} is
$$d_0(\xi)=\xi+2\xi z(\xi)+2\alpha z(\xi)-\frac13\tr\bigl(\xi+2\xi z(\xi)+2\alpha z(\xi)\bigr)I,$$
from which we get another necessary condition,
\begin{equation}\label{deux CN de plus A ter}
\|d_0(\xi)\|^2\bigl(1+\beta z(\xi):\xi\bigr)\le 0.
\end{equation}
The second term in the product is nonpositive if and only if
$$
z(\xi):\xi\le -\frac{1}{\beta}<0,
$$
which is impossible in a neighborhood of $\xi=0$ because $z$ is continuous. From \eqref{deux CN de plus A ter}, we therefore have
$$d_0(\xi)=0,$$
that is to say
\begin{equation}\label{relation pour z}
\xi+2\xi z(\xi)+2\alpha z(\xi)=\mu(\xi)I,
\end{equation}
a neighborhood of $0$ with $\mu(\xi)=\frac13\tr\bigl(\xi+2\xi z(\xi)+2\alpha z(\xi)\bigr)$.
Conversely, if relation \eqref{relation pour z} is satisfied for some function $\mu$, then \eqref{deux CN de plus A bis} holds, since $I:d=\tr d=0$.

We thus see that in a neighborhood of $0$, 
\begin{equation}\label{forme de z}
z(\xi)=\frac12(\xi+\alpha I)^{-1}(\mu(\xi)I-\xi),
\end{equation}
where $\mu$ is a so far arbitrary real-valued function defined on this neighborhood. 

First of all, $z$ defined by \eqref{forme de z} commutes with $\xi$, hence \eqref{trois CN de plus} is satisfied. Secondly, this function $z$ must satisfy \eqref{une CN de plus}. We have
$$
z(\xi)
=\frac1{2\alpha}(I+o(1))(\mu(0)I+o(1))=\frac{\mu(0)}{2\alpha}I+o(1)
$$
so that by \eqref{une CN de plus},
$$0\ge z(\xi):\xi=\frac{\mu(0)}{2\alpha}\tr\xi+o(\|\xi\|),$$
which implies that 
\begin{equation}\label{mu de zero}
\mu(0)=0.
\end{equation}

To reach a contradiction, we now use the fact that $z$ is a gradient.
In order to simplify the expressions, we remark that
$$
z(\xi)=-\frac12I+\frac{\mu(\xi)+\alpha}2(\xi+\alpha I)^{-1},
$$
and we perform the change of variables $\zeta=\xi+\alpha I$ and change of unknown function $\nu(\zeta)=\mu(\xi)+\alpha$, so that $\nu(\alpha I)=\alpha$, for $\zeta$ in a neighborhood of $\alpha I$, and
$$
z(\xi)=-\frac12I+\frac{\nu(\zeta)}2\zeta^{-1},
$$
by \eqref{forme de z}. The first term in the right-hand side is the gradient of $\zeta\mapsto-\frac12\tr\zeta$, so we just need to focus on 
$$
\zeta\mapsto y(\zeta)=\nu(\zeta)\zeta^{-1},
$$
which must also be a gradient. Since $y$ has the same smoothness as $\nu$, we see that $\nu$ is $C^1$. Furthermore, the zero curl condition must be satisfied,
 $$
 \frac{\partial y_{ij}}{\partial\zeta_{kl}}(\zeta)=\frac{\partial y_{kl}}{\partial\zeta_{ij}}(\zeta),
 $$
 for all indices $i,j,k,l$, and matrices $\zeta$ in a neighborhood of $\alpha I$. We take matrices of the form
 \begin{equation}\label{forme des matrices}
 \zeta=\left(\begin{array}{c|c}
  \bar \zeta & 0 \\ 
  \hline
  0 & \zeta_{33}
 \end{array}\right)\text{ with }\bar\zeta\in\Sym_2,
\end{equation}
for which 
$$ y(\zeta)=
\nu(\zeta)\left(\begin{array}{c|c}
  \frac1{\strut\det\bar\zeta}
\begin{pmatrix}
\zeta_{22}&-\zeta_{12}\\-\zeta_{12}&\zeta_{11}
\end{pmatrix}& 0 \\ 
  \hline
  0 & \zeta_{33}^{-1}
 \end{array}\right).
$$

We only write the derivatives that we will use:
\begin{align*}
\frac{\partial y_{11}}{\partial\zeta_{12}}(\zeta)&=\frac1{\det\bar\zeta}\Bigl(\frac{\partial\nu}{\partial\zeta_{12}}(\zeta)+2\frac{\nu(\zeta)}{\det\bar\zeta}\zeta_{12}\Bigr)\zeta_{22}, \\
\frac{\partial y_{11}}{\partial\zeta_{33}}(\zeta)&=\frac{\zeta_{22}}{\det\bar\zeta}\frac{\partial\nu}{\partial\zeta_{33}}(\zeta),
\end{align*}
then
 \begin{align*}
\frac{\partial y_{12}}{\partial\zeta_{11}}(\zeta)&=-\frac1{\det\bar\zeta}\Bigl(\frac{\partial\nu}{\partial\zeta_{11}}(\zeta)-\frac{\nu(\zeta)}{\det\bar\zeta}\zeta_{22}\Bigr)\zeta_{12},\\
\frac{\partial y_{12}}{\partial\zeta_{33}}(\zeta)&=-\frac{\zeta_{12}}{\det\bar\zeta}\frac{\partial\nu}{\partial\zeta_{33}}(\zeta),
\end{align*}
and finally
 \begin{align*}
\frac{\partial y_{33}}{\partial\zeta_{11}}(\zeta)&=\frac1{\zeta_{33}}\frac{\partial\nu}{\partial\zeta_{11}}(\zeta),\\
\frac{\partial y_{33}}{\partial\zeta_{12}}(\zeta)&=\frac1{\zeta_{33}}\frac{\partial\nu}{\partial\zeta_{12}}(\zeta).
\end{align*}

The relation $\frac{\partial y_{11}}{\partial\zeta_{12}}=\frac{\partial y_{12}}{\partial\zeta_{11}}$ reads
$$
\Bigl(\frac{\partial\nu}{\partial\zeta_{12}}(\zeta)+2\frac{\nu(\zeta)}{\det\bar\zeta}\zeta_{12}\Bigr)\zeta_{22}
=-\Bigl(\frac{\partial\nu}{\partial\zeta_{11}}(\zeta)-\frac{\nu(\zeta)}{\det\bar\zeta}\zeta_{22}\Bigr)\zeta_{12}
$$
or equivalently
\begin{equation}\label{1ere relation infecte}
\frac{\partial\nu}{\partial\zeta_{12}}(\zeta)\zeta_{22}+\frac{\partial\nu}{\partial\zeta_{11}}(\zeta)\zeta_{12}=
-\frac{\nu(\zeta)}{\det\bar\zeta}\zeta_{12}\zeta_{22}.
\end{equation}

The relation $\frac{\partial y_{33}}{\partial\zeta_{11}}=\frac{\partial y_{11}}{\partial\zeta_{33}}$ reads
\begin{equation}\label{4eme relation infecte}
\frac1{\zeta_{33}}\frac{\partial\nu}{\partial\zeta_{11}}(\zeta)=
\frac{\zeta_{22}}{\det\bar\zeta}\frac{\partial\nu}{\partial\zeta_{33}}(\zeta).
\end{equation}

The relation $\frac{\partial y_{33}}{\partial\zeta_{12}}=\frac{\partial y_{12}}{\partial\zeta_{33}}$ reads
\begin{equation}\label{5eme relation infecte}
\frac1{\zeta_{33}}\frac{\partial\nu}{\partial\zeta_{12}}(\zeta)=-
\frac{\zeta_{12}}{\det\bar\zeta}\frac{\partial\nu}{\partial\zeta_{33}}(\zeta).
\end{equation}

Replacing \eqref{4eme relation infecte} and \eqref{5eme relation infecte} into \eqref{1ere relation infecte}, we deduce that $\nu(\zeta)\zeta_{12}\zeta_{22}=0$, hence by continuity, $\nu(\zeta)=0$ for matrices $\zeta$ of the form \eqref{forme des matrices} in a neighborhood of $\alpha I$. This contradicts $\nu(\alpha I)=\alpha$, viz.\ \eqref{mu de zero}.\end{proof}

We have not stressed regularity issues thus far, but it is highly unlikely that allowing for a less regular function $\widehat a_m$ would alleviate the problem.

The situation for the Zaremba-Jaumann fluid with respect to the second principle is much better.

\begin{proposition}The Zaremba-Jaumann fluid satisfies the second principle with the choice $\widehat a_m(\xi)=\frac{\lambda_1}{4\eta_p}\|\xi\|^2$.
\end{proposition}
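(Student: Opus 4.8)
The plan is to verify directly that the mechanical part of the Clausius-Planck inequality \eqref{dissipation interne euler var int} holds for every admissible traceless $h$ and every $\xi\in\Sym_3$, with this specific free energy. First I would compute the thermodynamic force $z(\xi)=-\frac{\partial\widehat a_m}{\partial\xi}(\xi)=-\frac{\lambda_1}{2\eta_p}\xi$, recall that here the dissipative stress is $\widehat\sigma_{\mathrm{diss}}(h,\xi)=2\eta_s d+\xi$, and use the fact that by incompressibility the pressure term drops out ($I:d=0$) and one may take $\rho=1$. The quantity to be shown nonnegative is therefore the internal dissipation $(2\eta_s d+\xi):d+z(\xi):\widehat k(h,\xi)$, with $\widehat k$ given by the Zaremba-Jaumann flow rule \eqref{flow rule ZJ}.

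The key step, which is exactly the reason the Zaremba-Jaumann case succeeds where Oldroyd B fails, is that the spin contribution $z(\xi):(w\xi-\xi w)$ vanishes identically. Since $z(\xi)$ is proportional to $\xi$, this contribution reduces to a multiple of $\xi:(w\xi)-\xi:(\xi w)$, and each of these two inner products equals $\tr(\xi^2 w)=\xi^2:w$, which is zero because $\xi^2$ is symmetric while $w$ is skew-symmetric. Thus the rotational part of the objective derivative is energy-neutral for a quadratic norm free energy in $\xi$ and disappears from the dissipation altogether.

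What remains is only the relaxation part, $z(\xi):\frac1{\lambda_1}(-\xi+2\eta_p d)=\frac1{2\eta_p}\|\xi\|^2-\xi:d$. Adding this to $(2\eta_s d+\xi):d=2\eta_s\|d\|^2+\xi:d$, the two $\xi:d$ cross terms cancel exactly, and the internal dissipation collapses to $2\eta_s\|d\|^2+\frac1{2\eta_p}\|\xi\|^2$, which is manifestly nonnegative because $\eta_s>0$ and $\eta_p>0$. This establishes the mechanical Clausius-Planck inequality, and the thermal part is vacuous in this decoupled, temperature-free incompressible setting.

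I do not expect any genuine obstacle: the computation is short and the only conceptual point is recognizing the spin cancellation. The mildly subtle aspect worth isolating is that it is precisely the choice of a quadratic isotropic free energy that forces $z(\xi)\propto\xi$ and hence kills the spin term; a less symmetric free energy would reintroduce a nonzero, sign-indefinite spin contribution, consistent with the delicate negative result obtained just above for Oldroyd B.
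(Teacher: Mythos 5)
Your proof is correct and follows essentially the same route as the paper: compute $z(\xi)=-\frac{\lambda_1}{2\eta_p}\xi$, observe that the spin terms vanish because $\xi:(w\xi)=\xi:(\xi w)=\xi^2:w=0$ with $\xi^2$ symmetric and $w$ skew-symmetric, and reduce the internal dissipation to $2\eta_s\|d\|^2+\frac1{2\eta_p}\|\xi\|^2\ge 0$. The cancellation of the $\xi:d$ cross terms is exactly the computation in the paper, so nothing further is needed.
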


\begin{proof}
For this choice, we have $z(\xi)=-\frac{\lambda_1}{2\eta_p}\xi$ and the Clausius-Planck inequality \eqref{un clausius-planck de plus} to be satisfied becomes
\begin{equation}\label{un clausius-planck de plus en plus}
2\eta_s\|d\|^2+\xi:d-\frac{\lambda_1}{2\eta_p}\xi:\Bigl(w\xi-\xi w+\frac1{\lambda_1}
(-\xi+2\eta_p d)\Bigr)\ge 0.
\end{equation}
We first remark that $\xi^2$ is symmetric and $w$ is skew-symmetric, thus
$$
0=(\xi^2):w=\tr(\xi\xi w)=\xi:(\xi w)=\tr(\xi w\xi)=\xi:(w\xi),
$$
so that the terms involving $w$ in \eqref{un clausius-planck de plus en plus} all vanish. The left-hand side of \eqref{un clausius-planck de plus en plus} thus reduces to $2\eta_s\|d\|^2+\frac1{2\eta_p}\|\xi\|^2$, which is always nonnegative.
\end{proof}

\begin{remark}
There are infinitely many different choices of $\widehat a_m(\xi)$ that make this left-hand side nonnegative, which can be described in detail. However, the choice of a specific free energy  should be based on physical grounds, not on the mathematical fact that it can compensate for $\sigma:d$ becoming strictly negative in time just because $\sigma$ obeys the ordinary differential equation given by any chosen objective derivative. We feel it is nonetheless significant that such a compensation is impossible for the Oldroyd B fluid, whereas there are many mathematical possibilities for a Zaremba-Jaumann fluid. 

Let us point out that we are dealing here with the second principle in the form of the Clausius-Planck inequalities. It is not completely ruled out that an Oldroyd B fluid complemented with a heat flux that depends on $d$ and $\xi=\sigma_p$ could still satisfy the Clausius-Duheim inequality. We do not pursue in this direction here since such a heat flux would presumably be hard to justify on physical grounds.
\end{remark}

\section*{Acknowledgments}
We would like to thank the organizers of the conference AMES 2022, where a preliminary version of this work was presented. This conference in honor of Prof.\ Philippe Destuynder was a much appreciated friendly gathering  to celebrate his deep scientific achievements.









\end{document}